\definecolor{commentgreen}{RGB}{0,128,0}
\theoremstyle{plain}
\newtheorem{theorem}{Theorem}[section]
\newtheorem{proposition}[theorem]{Proposition}
\newtheorem{lemma}[theorem]{Lemma}
\newtheorem{assumption}[theorem]{Assumption}
\theoremstyle{definition}
\theoremstyle{plain}
\newcommand{\bignorm}[1]{\left\lVert#1\right\rVert}
\def\d{{\mathrm{d}}}
\def\C{{\mathbb{C}}}
\def\R{{\mathbb{R}}}
\def\Ncal{{\mathcal{N}}}
\def\Acal{{\mathcal{A}}}
\def\E{{\mathbb{E}}}
\def\div{{\operatorname{div}}}
\def\wbm{{\bm{b}}}
\def\dbm{{\bm{d}}}
\def\kbm{{\bm{k}}}
\def\sbm{{\bm{s}}}
\def\xbm{{\bm{x}}}
\def\mbm{{\bm{m}}}
\def\zbm{{\bm{l}}}
\def\ybm{{\bm{y}}}
\def\zbm{{\bm{z}}}
\def\nbm{{\bm{n}}}
\def\xbm{{\bm{x}}}
\def\ubm{{\bm{u}}}
\def\Lbm{{\bm{L}}}
\def\vbm{{\bm{v}}}
\def\wbm{{\bm{w}}}
\def\etabm{{\bm{\eta}}}
\def\epsbm{{\bm{\epsilon}}}
\def\Sigmabm{{\bm{\Sigma}}}
\def\Lambdabm{{\bm{\Lambda}}}
\def\Abm{{\bm{A}}}
\def\Bbm{{\bm{B}}}
\def\Ubm{{\bm{U}}}
\def\Sbm{{\bm{S}}}
\def\Dbm{{\bm{D}}}
\def\Fbm{{\bm{F}}}
\def\Pbm{{\bm{P}}}
\def\Vbm{{\bm{V}}}
\def\Ibm{{\bm{I}}}
\def\Lambdainv{{\Lambdabm^{-1}}}
\def\sdot{{\dot{s}}}
\def\sigmadot{{\dot{\sigma}}}
\def\Sigmainv{{\Sigmabm^{-1}}}
\def\piZ{{\pi^Z}}
\def\piX{{\pi^X}}
\def\piZXx{{\pi^{Z|X=\xbm}}}
\def\piXZz{{\pi^{X|Z=\zbm}}}
\def\nuzeroX{{\nu_0^X}}
\def\nukX{{\nu_k^X}}
\def\nukZ{{\nu_k^Z}}
\def\nuKX{{\nu_K^X}}
\def\partialt{{\partial_t}}
\newcommand{\lblapp}[1]{\label{app:#1}}
\newcommand{\lblsec}[1]{\label{sec:#1}}
\newcommand{\lblfig}[1]{\label{fig:#1}} 
\newcommand{\lbltab}[1]{\label{tbl:#1}}
\newcommand{\lbleqn}[1]{\label{eq:#1}}
\newcommand{\lblprop}[1]{\label{prop:#1}}
\newcommand{\lbllem}[1]{\label{lem:#1}}
\newcommand{\lblthm}[1]{\label{thm:#1}}
\newcommand{\lblalg}[1]{\label{alg:#1}}
\newcommand{\refsec}[1]{Section~\ref{sec:#1}}
\newcommand{\reffig}[1]{Figure~\ref{fig:#1}} 
\newcommand{\reftab}[1]{Table~\ref{tbl:#1}}
\newcommand{\refeqn}[1]{\eqref{eq:#1}}
\newcommand{\refalg}[1]{Algorithm~\ref{alg:#1}}
\newcommand{\refapp}[1]{Appendix~\ref{app:#1}}
\newcommand{\refprop}[1]{Proposition~\ref{prop:#1}}
\newcommand{\reflem}[1]{Lemma~\ref{lem:#1}}
\newcommand{\refthm}[1]{Theorem~\ref{thm:#1}}
\algrenewcommand\algorithmicrequire{\textbf{Input:}}
\algrenewcommand\algorithmicensure{\textbf{Hyperparameter:}}
\title{Principled Probabilistic Imaging using Diffusion \\ Models as Plug-and-Play Priors}
\author{
  Zihui Wu$^{1}$ \, Yu Sun$^{4}$ \, Yifan Chen$^{5}$ \, Bingliang Zhang$^{1}$ \, Yisong Yue$^{1}$ \, Katherine L. Bouman$^{1,2,3}$ \\
  $^{1}$Department of Computing and Mathematical Sciences, Caltech \\
  $^{2}$Department of Electrical Engineering, Caltech \\
  $^{3}$Department of Astronomy, Caltech \\
  $^{4}$Department of Electrical and Computer Engineering, Johns Hopkins University \\
  $^{5}$Courant Institute of Mathematical Sciences, New York University
}
\begin{document}

\maketitle

\vspace{-4pt}
\begin{abstract}
\vspace{-4pt}

Diffusion models (DMs) have recently shown outstanding capabilities in modeling complex image distributions, making them expressive image priors for solving Bayesian inverse problems.
However, most existing DM-based methods rely on approximations in the generative process to be generic to different inverse problems, leading to inaccurate sample distributions that deviate from the target posterior defined within the Bayesian framework.
To harness the generative power of DMs while avoiding such approximations, we propose a Markov chain Monte Carlo algorithm that performs posterior sampling for general inverse problems by reducing it to sampling the posterior of a Gaussian denoising problem.
Crucially, we leverage a general DM formulation as a unified interface that allows for rigorously solving the denoising problem with a range of state-of-the-art DMs.
We demonstrate the effectiveness of the proposed method on six inverse problems (three linear and three nonlinear), including a real-world black hole imaging problem.
Experimental results indicate that our proposed method offers more accurate reconstructions and posterior estimation compared to existing DM-based imaging inverse methods.

\end{abstract}

\section{Introduction}
\lblsec{introduction}

\vspace{-5pt}

Inverse problems arise in many computational imaging applications, where the goal is to recover an image $\xbm \in \R^n$ from a set of sparse and noisy measurements 
$\ybm \in \R^m$. The relationship between $\xbm$ and $\ybm$ can be described by 
\vspace{-5pt}
\begin{align}
\lbleqn{inverse_problem}
    \ybm = \Acal(\xbm) + \nbm,
\end{align}
where $\Acal(\cdot): \R^n \to \R^m$ is the forward operator (linear or nonlinear) and $\nbm$ is the random measurement noise in $\R^m$. 
Since the sparsity and noisiness of $\ybm$ often lead to significant uncertainty in $\xbm$, it is preferable to sample the posterior distribution $p(\xbm|\ybm)$ over all possible solutions based on some prior distribution $p(\xbm)$, 
rather than finding a single deterministic solution.
Traditional posterior sampling methods often rely on simple image priors that do not reflect the sophistication of real-world image distributions.
On the other hand, diffusion models (DMs) have recently emerged as a powerful tool for modeling highly complex image distributions \citep{ho2020denoising, song2021scorebased}.
Nevertheless, it remains a challenge to turn DMs into reliable imaging inverse solvers, which motivates us to develop a principled Bayesian method that leverages DMs as priors for posterior sampling.

Diffusion models generate samples from a distribution by reversing a diffusion process from the target distribution to a simple (usually Gaussian) distribution \citep{ho2020denoising, song2021scorebased}.
In particular, it estimates a clean image  $\xbm_0$ from a noise image $\xbm_T$ by successively denoising noisy images, where $\xbm_t \sim p_t$ is the intermediate noisy image at time $t \in [0, T]$. 
Reversing diffusion requires one to estimate the time-varying gradient log density (score function) $\nabla \log p_t(\xbm_t)$ along the diffusion process, or $\nabla \log p_t(\xbm_t|\ybm)$ in the case of sampling the posterior $p(\xbm|\ybm)$.

\begin{figure*}[ht]
  \centering
  \includegraphics[width=\textwidth]{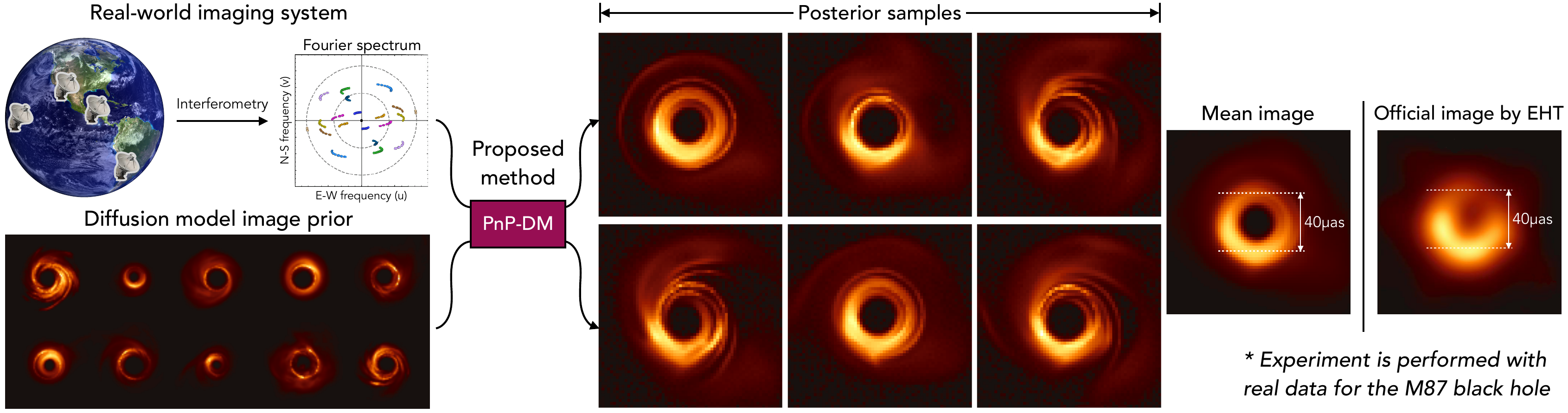}
  \vspace{-10pt}
  \caption{
  Demonstration of the proposed method, PnP-DM, for posterior sampling using the real data for the M87 black hole from April 6$^\text{th}$, 2017 \citep{event2019first_paper4}. 
  The black hole imaging problem is non-convex and highly ill-posed due to severe noise corruption and measurement sparsity. 
  Our method rigorously integrates measurements from a real-world imaging system with an expressive image prior in the form of a diffusion model, which was trained with images from the GRMHD black hole simulation \citep{event2019first_paper5} in this case. 
  Besides having high visual quality, our posterior samples accurately capture key features of the M87 black hole such as the bright spot location and ring diameter.
  }
  \vspace{-10pt}
  \lblfig{teaser}
\end{figure*}

To design generic DM-based inverse problem solvers, most existing methods attempt to approximate the time-varying gradient log density $\nabla \log p_t(\xbm_t|\ybm)$ \citep{chung2023diffusion, wang2022zero, zhu2023denoising, song2022solving, kawar2022denoising, song2024solving, song2023pseudoinverseguided, liu2023dolce, choi2021ilvr, xia2023diffir, chung2022score, rout2023solving, boys2023tweedie}. 
In particular they first apply Bayes' rule to separate the forward operator from an unconditional prior over the intermediate noisy image $\xbm_t$:
\begin{equation}
\lbleqn{posterior_score}
    \nabla \log p_t(\xbm_t|\ybm) = \nabla \log p_t(\ybm|\xbm_t) + \nabla \log p_t(\xbm_t).
\end{equation}
By instead aiming to evaluate the right hand side, one can leverage the existing pre-trained DMs for the unconditional term $\nabla \log p_t(\xbm_t)$.
However, the main challenge in this case is that $\nabla \log p_t(\ybm|\xbm_t)$ is intractable to compute in general, as $p_t(\ybm|\xbm_t)$ involves an integral over all possible $\xbm_0$'s that could give rise to $\xbm_t$ \citep{chung2023diffusion}.
Various methods have been proposed to circumvent the intractability and can mostly be categorized into two groups.
One group of methods explicitly approximate $\nabla \log p_t(\ybm|\xbm_t)$ by making simplifying assumptions \citep{song2022solving, chung2023diffusion, song2023pseudoinverseguided, boys2023tweedie}. 
However, even for arguably the finest approximation to date proposed in the recent work \citep{boys2023tweedie}, it is exact only when the prior distribution $p(\xbm)$ is Gaussian.
For general prior distributions beyond Gaussian, these methods do not sample the true posterior $p(\xbm|\ybm)$.
The other group of methods do not make explicit approximations but instead substitute $\nabla \log p_t(\ybm|\xbm_t)$ with empirically designed updates where $\ybm$ is treated as a guidance signal \citep{wang2022zero, zhu2023denoising, kawar2022denoising, song2024solving, liu2023dolce, choi2021ilvr, xia2023diffir, chung2022score, rout2023solving}.
Although these methods may have strong empirical performance, they have deviated from the Bayesian formulation and no longer aim to sample the target posterior. 
In summary, these existing DM-based inverse methods should be best viewed as \emph{guidance methods}, where the generative process is \emph{guided} towards the regions where the measurement $\ybm$ is more likely to be observed, not as posterior sampling methods \citep{boys2023tweedie}.
We also note that some recent work considered combing DMs with Sequential Monte Carlo to ensure asymptotic consistency in posterior sampling \citep{cardoso2023monte, dou2024diffusion}, but the investigation has been limited to linear imaging inverse problems.

\paragraph{Our contributions} In this work, we pursue a different path towards posterior sampling with DM priors by proposing a new Markov chain Monte Carlo (MCMC) algorithm, which we call \emph{Plug-and-Play Diffusion Models} (PnP-DM).
It incorporates DMs in a principled way and circumvents the approximation required when taking the approach in \refeqn{posterior_score}. 
The proposed algorithm is based on the Split Gibbs Sampler \citep{vono2019split} that alternates between two sampling steps that separately involve the likelihood and prior.
While the likelihood step can be tackled with traditional sampling techniques, the prior step involves a Bayesian denoising problem that requires careful design.
Importantly, we identify a connection between the Bayesian denoising problem and the unconditional image generation problem under a general formulation of DMs presented in \citep{karras2022elucidating} (which is referred to as the EDM formulation hereafter).
This connection allows us to perform rigorous posterior sampling for denoising using DMs without approximating the generative process and enables the use of a wide range of pretrained DMs through the unified EDM formulation.
We present an analysis on the non-asymptotic behavior of PnP-DM by establishing a stationarity guarantee in terms of the average Fisher information. 
We further demonstrate the strong empirical performance of PnP-DM by investigating three linear and three nonlinear noisy inverse problems, including a black hole interferometric imaging problem involving real data that is both nonlinear and severely ill-posed (see \reffig{teaser}).
Overall, PnP-DM outperforms existing baseline methods, achieving higher accuracy in posterior estimation.

\section{Preliminaries}
\lblsec{background}

\vspace{-5pt}

\textit{Split Gibbs Sampler (SGS)} is an MCMC approach developed for Bayesian inference \citep{vono2019split}.
It is also related to the \textit{Proximal Sampler} \citep{lee2021astructured, chen2022improved, fan2023improved, yuan2023on} and serves as the backbone for the \textit{Generative Plug-and-Play (GPnP)} \citep{bouman2023generative} and \textit{Diffusion Plug-and-Play (DPnP)} \citep{xu2024provably} frameworks in computational imaging.
The goal of SGS is to sample the posterior distribution 
\begin{equation}
\lbleqn{posterior}
    p(\xbm|\ybm) \propto p(\ybm|\xbm)p(\xbm) = \exp(-f(\xbm; \ybm)-g(\xbm))
\end{equation}
where $f(\xbm; \ybm) := -\log p(\ybm|\xbm)$ and $g(\xbm) := -\log p(\xbm)$ are the potential functions of the likelihood and prior distribution, respectively.
The dual dependence of \refeqn{posterior} on both the likelihood and prior makes it nontrivial to directly sample from it in general. 
Instead, SGS leverages the composite structure of the posterior distribution by adopting a variable-splitting strategy and considers sampling an alternative distribution
\begin{equation}
\lbleqn{joint}
    \pi(\xbm, \zbm) \propto \exp\left(-f(\zbm; \ybm)-g(\xbm)-\frac{1}{2\rho^2}\|\xbm-\zbm\|_2^2\right) 
\end{equation}
where $\zbm \in \R^n$ is an augmented variable and $\rho>0$ is a hyperparameter that controls the strength of the coupling between $\xbm$ and $\zbm$.
We denote the $\xbm$- and $\zbm$-marginal distributions of \refeqn{joint} as $\piX(\xbm):=\int \pi(\xbm, \zbm) \d\zbm$ and $\piZ(\xbm):=\int \pi(\xbm, \zbm) \d\xbm$, respectively. 
As $\rho \to 0$, $\piX$ converges to the target posterior $p(\xbm|\ybm)$ in terms of total variation distance \citep{vono2019split}, so one can obtain approximate samples from the target posterior by sampling \refeqn{joint} instead.

SGS samples \refeqn{joint} via Gibbs sampling. 
Specifically, SGS starts from an initialization $\xbm^{(0)}$ and, for iteration $k=0,\cdots,K-1$, alternates between
\begin{enumerate}
    \vspace{-5pt}
    \item[1.] \textbf{Likelihood step: } sample $\zbm^{(k)} \sim \pi^{Z|X=\xbm^{(k)}} (\zbm) \propto \exp\left(-f(\zbm;\ybm)-\frac{1}{2\rho^2}\|\xbm^{(k)}-\zbm\|_2^2\right)$
    \vspace{-5pt}
    \item[2.] \textbf{Prior step: } sample $\xbm^{(k+1)} \sim \pi^{X|Z=\zbm^{(k)}}(\xbm) \propto \exp\left(-g(\xbm)-\frac{1}{2\rho^2}\|\xbm-\zbm^{(k)}\|_2^2\right)$.
    \vspace{-5pt}
\end{enumerate}
Note that the two conditional distributions separately involve $f(\cdot; \ybm)$ and $g(\cdot)$. The likelihood and prior are decoupled so that these two steps can be designed in a modular way.
A similar variable-splitting strategy is also adopted in optimization methods such as the Half-Quadratic Splitting (HQS) method \citep{geman1995nonlinear} and the Alternating Direction Method of Multipliers (ADMM) \citep{gabay1976adual, boyd2011distributed}.
In fact, SGS can be viewed as a sampling analogue of HQS. 
SGS is a principled approach to posterior sampling if the two sampling steps are rigorously implemented.

\vspace{-5pt}

\paragraph{Existing works related to SGS}
Several works have designed algorithms for solving imaging inverse problems based on SGS \citep{pereyra2023split, coeurdoux2023plugandplay, bouman2023generative, faye2024regularization, xu2024provably}.
The key distinction among these methods lies in their approaches to the prior step.
For instance, the works \citep{pereyra2023split, bouman2023generative, faye2024regularization} applied Langevin-based updates for sampling $\piXZz$ such that the prior information is encoded by either traditional regularizers or off-the-shelf image denoisers.
The work \citep{coeurdoux2023plugandplay} tackled the prior step by heuristically customizing a diffusion model (i.e. DDPM \citep{ho2020denoising}) for sampling $\piXZz$.
A concurrent work \citep{xu2024provably} improved the implementation by devising two diffusion processes that rigorously solve the prior step.
Our method differs from \citep{xu2024provably} by connecting the prior step to the EDM formulation \citep{karras2022elucidating}.
This connection allows us to seamlessly integrate state-of-the-art DMs as expressive image priors for Bayesian inference through a unified interface, eliminating the need for additional customization for each model and leading to better empirical performance.
We also note the recent work \citep{li2024decoupled} that adopted the optimization-based variable-splitting formulation of HQS and utilized general DMs as image priors.
We instead considers the SGS formulation from a Bayesian posterior sampling standpoint.
Additionally, while SGS-based methods theoretically accommodate general inverse problems, empirical evidence on real-world nonlinear inverse problems remains scarce in the literature.
In this work, we demonstrate our method on three nonlinear inverse problems, including a black hole imaging problem.
For a more comprehensive review of related works, see \refapp{related_work}.

\begin{figure*}[tb]
  \centering
  \includegraphics[width=0.9\textwidth]{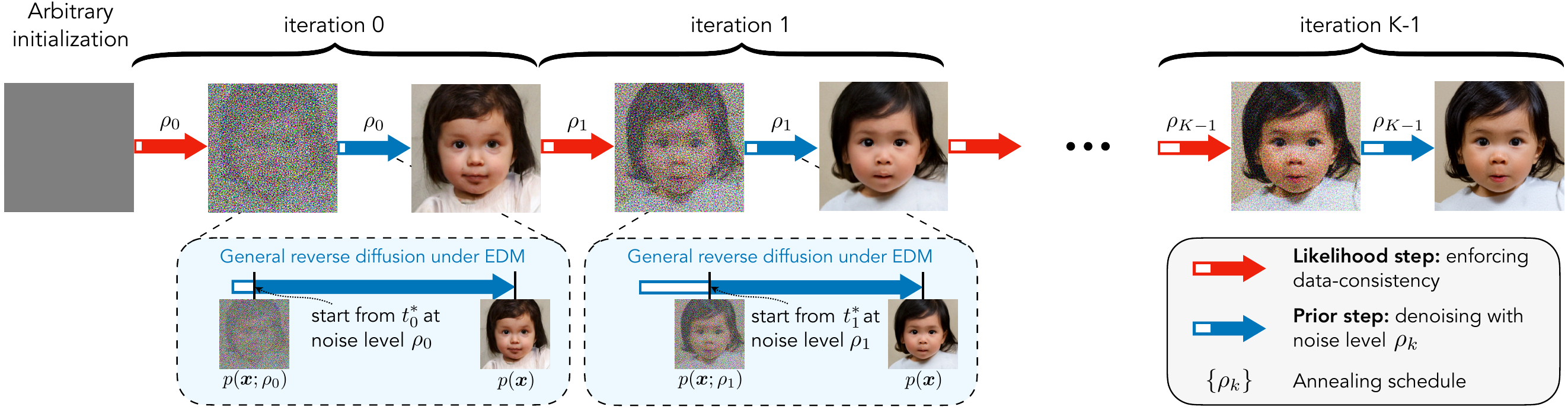}
  \caption{A schematic diagram of our method. Our method alternates between a likelihood step that enforces data consistency and a prior step that solves a denoising posterior sampling problem by leveraging the Split Gibbs Sampler \citep{vono2019split}. An annealing schedule controls the strength of the two steps at each iteration to facilitate efficient and accurate sampling. A crucial part of our design is the prior step, where we identify a key connection to a general diffusion model framework called the EDM \citep{karras2022elucidating}. This connection allows us to easily incorporate a family of state-of-the-art diffusion models as priors to conduct posterior sampling in a principled way without additional training. Our method demonstrates strong performance on a variety of linear and nonlinear inverse problems.}
  \vspace{-5pt}
  \lblfig{diagram}
\end{figure*}

\section{Method}

\vspace{-5pt}

A schematic diagram for the proposed method is shown in \reffig{diagram}. 
Our method, dubbed PnP-DM, builds upon the SGS framework with rigorous implementations of the two sampling steps and an annealing schedule for the coupling parameter $\rho$.
We start with our implementations of the first step for solving both linear and nonlinear inverse problems.

\subsection{Likelihood step: enforcing data consistency}
\lblsec{likelihood}

\vspace{-5pt}

For the likelihood step at iteration $k$, we sample
\begin{align}
\lbleqn{likelihood}
    \zbm^{(k)} \sim \pi^{Z|X=\xbm^{(k)}} (\zbm) \propto \exp\left(-f(\zbm;\ybm)-\frac{1}{2\rho^2}\|\xbm^{(k)}-\zbm\|_2^2\right).
\end{align}

\vspace{-5pt}

\paragraph{Linear forward model and Gaussian noise}
We first consider a simple yet common case where the forward model $\Acal$ is linear and the noise distribution is zero-mean Gaussian, i.e. $\Acal:=\Abm \in \R^{m\times n}$ and $\nbm \sim \Ncal(\bm{0}, \Sigmabm)$.
In this case, the potential function of the likelihood term is $f(\xbm; \ybm) = \frac{1}{2}\|\ybm-\Abm\xbm\|_\Sigmabm^2$ (up to an additive constant that does not depend on $\xbm$ and $\ybm$) where $\|\cdot\|^2_\Sigmabm := \langle \cdot, \Sigmabm^{-1} \cdot \rangle$.
It is then straightforward to show that
\begin{align*}
    \piZXx = \Ncal(\mbm(\xbm), \Lambdainv)
\end{align*}
where $\Lambdabm:=\Abm^T\Sigmainv \Abm + \frac{1}{\rho^2}\Ibm$ and $\mbm(\xbm):=\Lambdainv(\Abm^T\Sigmainv \ybm + \frac{1}{\rho^2}\xbm)$.
The problem of sampling from Gaussian distributions has been systematically studied \citep{vono2022highdimensional}.
We refer readers to \refapp{likelihood_step} for a more detailed discussion.

\vspace{-5pt}

\paragraph{General case}
For general nonlinear inverse problems, the likelihood step is not sampling from a Gaussian distribution anymore.
Nevertheless, since we have access to $\piZXx$ in closed form up to a multiplicative factor, we can use Monte Carlo methods based on Langevin dynamics to draw samples from it as long as the likelihood potential is differentiable.
Specifically, we first set up the following Langevin SDE that admits $\piZXx$ as the stationary distribution
\begin{align*}
    \d \zbm_t = \nabla\log\piZXx(\zbm_t) \d t + \sqrt{2}\d \wbm_t = \left[-\nabla f(\zbm;\ybm) -\frac{1}{\rho^2}(\zbm - \xbm)\right] \d t + \sqrt{2}\d \wbm_t.
\end{align*}
We then initialize the SDE at $\zbm_0=\xbm$ and run it with Euler discretization.
The pseudocode is provided in \refapp{likelihood_step}.

\subsection{Prior step: denoising via the EDM framework}
\lblsec{prior}

\vspace{-5pt}

For the prior step at iteration $k$, we sample
\begin{align}
\lbleqn{denoising}
    \xbm^{(k+1)} \sim \pi^{X|Z=\zbm^{(k)}}(\xbm) \propto \exp\left(-g(\xbm)-\frac{1}{2\rho^2}\|\xbm-\zbm^{(k)}\|_2^2\right).
\end{align}
A closer examination of \refeqn{denoising} reveals that this prior step is essentially to draw posterior samples for a Gaussian denoising problem, where the ``measurement'' is $\zbm^{(k)}$, the noise level is $\rho$, and the prior distribution is $p(\xbm) \propto \exp(-g(\xbm))$.

We tackle this denoising posterior sampling problem within SGS using DMs as image priors. 
In particular, we leverage the EDM framework \citep{karras2022elucidating}, which was originally proposed to unify various formulations of DMs for unconditional image generation.
To see the connection of the EDM framework to \refeqn{denoising}, consider a family of mollified distributions $p(\xbm; \sigma)$ given by adding i.i.d Gaussian noise of standard deviation $\sigma$ to the prior distribution $p(\xbm)$, i.e. $\xbm + \sigma \epsbm \sim p(\xbm; \sigma)$.
The core idea of the EDM framework is that a variety of state-of-the-art DMs can be unified into the following reverse SDE:
\begin{align}
\lbleqn{sde}
    \d \xbm_t=\left[\frac{\sdot(t)}{s(t)} \xbm_t-2s(t)^2 \sigmadot(t) \sigma(t) \nabla \log p\left(\frac{\xbm_t}{s(t)} ; \sigma(t)\right)\right] \d t + s(t)\sqrt{2\sigmadot(t)\sigma(t)}\d\bar{\wbm}_t
\end{align}
where $\bar{\wbm}_t$ is an $n$-dimensional Wiener process running backward in time, $\sigma(t)>0$ is a pre-defined noise level schedule with $\sigma(0) = 0$, $s(t)$ is a pre-defined scaling schedule, and $\sigmadot(t)$, $\sdot(t)$ are their time derivatives.
As shown in \citep{karras2022elucidating}, the defining property of \refeqn{sde} is that $\xbm_t/s(t) \sim p(\xbm; \sigma(t))$ for any time $t$.
Therefore, solving this SDE backward in time allows us to travel from any noise level $\sigma(t)$ to the clean image distribution at $t=0$.
This means that we can use \refeqn{sde} to solve \refeqn{denoising} with arbitrary noise level $\rho$ as long as $\rho$ is within the range of $\sigma(t)$.
Indeed, the distribution of $\xbm_0$ conditioned on $\xbm_t$ is
\begin{align*}
    p(\xbm_0|\xbm_t) &\propto p(\xbm_t|\xbm_0)p(\xbm_0) \propto \Ncal(s(t)\xbm_0, s(t)^2\sigma(t)^2\Ibm)\exp(-g(\xbm_0)) \\
    &\propto \exp\left(-g(\xbm_0)-\frac{1}{2\sigma(t)^2}\|\xbm_0-\xbm_t/s(t)\|_2^2\right).
\end{align*}
We highlight that the last line exactly matches \refeqn{denoising} when $\xbm_t=s(t)\zbm^{(k)}$ and $\sigma(t)=\rho$.
Therefore, we can naturally design a practical algorithm that samples \refeqn{denoising} by following these three steps: (1) find $t^\ast$ such that $\sigma(t^\ast) = \rho$,
(2) initialize at $\xbm_{t^\ast}=s(t^\ast)\zbm^{(k)}$, and (3) solve \refeqn{sde} backward from $t^\ast$ to 0 by choosing the discretization time steps and integration scheme.
Through this unified interface, any DMs, once converted to the EDM formulation, can be directly turned into a rigorous solver for \refeqn{denoising}.

Leveraging the connection with EDM, our prior step implementation comes with a large design space that encompasses a variety of existing DMs, such as DDPM (or VP-SDE) \citep{ho2020denoising}, VE-SDE \citep{song2021scorebased}, and iDDPM \citep{nichol2021improved}.
In our experiments, we conduct posterior sampling with all these different models within our framework and all of them provide high-quality samples.
The pseudocode of our implementation and more details on the EDM formulation for the prior step is given in \refapp{prior_step}.

\subsection{Putting it all together}

\vspace{-5pt}

The pseudocode of PnP-DM in complete form is presented in \refalg{pnpdm}.
PnP-DM alternates between the two sampling steps with an annealing schedule $\{\rho_k\}$ for the coupling parameter.
We find that the annealing schedule on $\rho$ accelerates the mixing time of the Markov chain and prevents the algorithm from getting stuck in bad local minima for solving highly ill-posed inverse problems.
This is a common practice in both Langevin-based \citep{kawar2021snips, jalal2021robust, sun2019anonline} and SGS-based \citep{bouman2023generative, xu2024provably} MCMC algorithms to improve the empirical performance in solving inverse problems.

Our work shares some similarities with PnP-SGS \citep{coeurdoux2023plugandplay} but contains three main key differences. First, as demonstrated in our experiments, we investigate three nonlinear inverse problems, while nonlinear inverse problems are beyond the scope of \citep{coeurdoux2023plugandplay}.
Our experiments show that PnP-SGS struggles with challenging nonlinear inverse problems such as Fourier phase retrieval.
Second,  we adopt the EDM formulation to ensure that the prior step of PnP-DM is a rigorous mapping from the image manifold with the desired noise level to the clean image manifold, aligning with the theory of SGS. 
In contrast, the prior step of PnP-SGS \citep{coeurdoux2023plugandplay} is heuristic (which is also pointed out by~\citep{xu2024provably}) and not rigorously designed to sample \refeqn{denoising}.
Third, unlike PnP-SGS \citep{coeurdoux2023plugandplay} that uses a constant $\rho$, we consider an annealing schedule $\{\rho_k\}$ for the coupling parameter, which is important for highly ill-posed inverse problems. 

\begin{algorithm}[ht]
\caption{Plug-and-Play Diffusion Models (PnP-DM)}
\lblalg{pnpdm}
\begin{algorithmic}[1]
\Require initialization $\xbm_0 \in \R^n$, total number of iterations $K > 0$, coupling strength schedule $\{\rho_k > 0\}_{k=0}^{K-1}$, likelihood potential $f(\,\cdot\,; \ybm)$ with measurements $\ybm \in \R^m$, pretrained model $D_\theta(\,\cdot\,; \,\cdot\,)$ that approximates $\nabla \log p\left(\xbm ; \sigma\right)$ with $(D_\theta(\xbm; \sigma) - \xbm)/\sigma^2$.
\For{$k = 0, ..., K-1$}
    \State $\zbm^{(k)} \leftarrow \text{\texttt{LikelihoodStep}} (\xbm^{(k)}, \rho_k, f(\,\cdot\,; \ybm))$ \Comment{\refsec{likelihood}}
    \State $\xbm^{(k+1)} \leftarrow \text{\texttt{PriorStep}}(\zbm^{(k)}, \rho_k, D_\theta(\,\cdot\,; \,\cdot\,))$ \Comment{\refsec{prior}}
\EndFor
\State \Return $\xbm^{(k+1)}$
\end{algorithmic}
\end{algorithm}

\subsection{Theoretical insights}

\vspace{-5pt}

\begin{wrapfigure}{r}{0.35\textwidth}
  \vspace{-25pt}
  \begin{center}
    \includegraphics[width=0.35\textwidth]{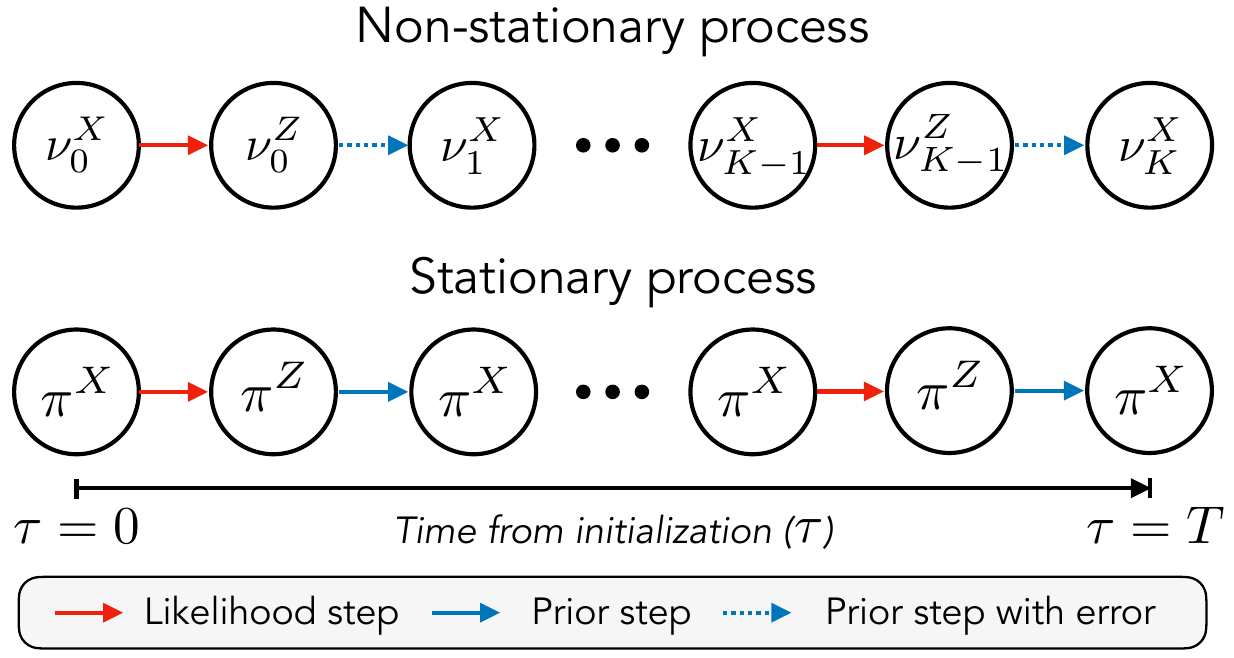}
  \end{center}
  \vspace{-10pt}
  \caption{A conceptual illustration of the non-stationary and stationary time-continuous processes as interpolations of $K$ discretize iterations of PnP-DM.}
  \vspace{-20pt}
  \lblfig{proof_idea}
\end{wrapfigure}
We provide some theoretical insights on the non-asymptotic behavior of PnP-DM.
We start with the following definitions.
For two probability measures $\mu$ and $\widetilde{\mu}$ such that $\mu \ll \widetilde{\mu}$, the \textit{Kullback–Leibler (KL) divergence} and \textit{Fisher information (or Fisher divergence)} of $\mu$ with respect to $\widetilde{\mu}$ are defined, respectively, as 
\begin{align*}
    \mathsf{KL}(\mu||\widetilde{\mu}):=\int \mu \log\frac{\mu}{\widetilde{\mu}} \quad \text{and} \quad \mathsf{FI}(\mu||\widetilde{\mu}):=\int\mu\bignorm{\nabla\log\frac{\mu}{\widetilde{\mu}}}_2^2.
\end{align*}
Both divergences are equal to zero if and only if $\mu = \widetilde{\mu}$.
KL divergence is a common metric for quantifying the difference of one distribution with respect to another.
Fisher information has been used for analyzing the stationarity of sampling algorithms \citep{balasubramanian2022towards, sun2023provable}.

We analyze PnP-DM via a continuous-time perspective, leveraging the interpolation techniques introduced for Langevin Monte Carlo \citep{vempala2019rapid, balasubramanian2022towards, sun2023provable}.
We assume that the likelihood step \refeqn{likelihood} can be implemented exactly and the prior step \refeqn{denoising} involves running the reverse diffusion process \refeqn{sde} with an approximated score function $\sbm_t \approx \nabla \log p_t := \nabla \log p(\,\cdot\,; \sigma(t))$.
Let $\nuzeroX$ be the distribution of the initialization $\xbm^{(0)}$.
Let $\nukZ$ and $\nu_{k+1}^X$ be the distributions of $\zbm^{(k)}$ and $\xbm^{(k+1)}$ at the $k^\text{th}$ iteration.
Recall that the stationary distributions are $\piX$ and $\piZ$.
Our analysis is concerned with two \emph{continuous-time} processes: (1) the non-stationary process from $\nuzeroX$, a non-stationary initialization, to $\nu_K^X$ where \refeqn{sde} is run with the approximated score function $\sbm_t$ and (2) the stationary process that alternates between stationary distributions $\piX$ and $\piZ$.
These two processes are the interpolation PnP-DM in non-stationary and stationary states and define continuous transitions over discrete iterations.
A conceptual illustration of the two processes is provided in \reffig{proof_idea} with the exact formulations in \refapp{proof_theorem}. 
Now we present our main result:

\begin{theorem} 
\lblthm{main_result}
Consider running $K$ iterations of PnP-DM with $\rho_k \equiv \rho > 0$ and a score estimate $\sbm_t \approx \nabla \log p_t := \nabla \log p(\,\cdot\,; \sigma(t))$. 
Let $t^\ast > 0$ be such that $\sigma(t^\ast)=\rho$ and $\delta:=\inf_{t\in[0,t^\ast]} v(t)$ where $v(t) := s(t)\sqrt{2\sigmadot(t)\sigma(t)}$.  
Define ${\nu_\tau}$ and ${\pi_\tau}$ as the distributions at time $\tau$ of the non-stationary and stationary process, respectively. 
Then, for over $K$ iterations of PnP-DM, or equivalently over $\tau \in [0, T]$ with $T:=K(t^\ast+1)$, we have
\vspace{-5pt}
\begin{align}
\lbleqn{stationarity}
    \frac{1}{T} \int_0^{T} \mathsf{FI}\left(\pi_\tau || \nu_\tau\right) \d \tau \leq \underbrace{\frac{4\mathsf{KL}(\piX||\nuzeroX)}{K(t^\ast+1) \min(\rho, \delta)^2}}_{\text{convergence from initialization}} + \underbrace{\frac{4\epsilon_\text{score}}{(t^\ast+1)\delta^2}}_{\text{score approximation error}},
\end{align}
\vspace{-5pt}
where we assume that the score estimation error $\epsilon_\text{score} := \int_{1}^{t^\ast+1} v(\tau)^2\E_{\pi_\tau}\|\sbm_\tau - \nabla\log p_\tau\|_2^2\d\tau < \infty$.
\end{theorem}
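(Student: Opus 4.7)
The plan is to analyze PnP-DM by a continuous-time interpolation argument, tracking the KL divergence between two coupled stationary and non-stationary processes across each iteration and converting the telescoped KL bound into an averaged Fisher information bound via standard dissipation identities, following the template used in the analysis of Langevin Monte Carlo for non-log-concave targets.

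First, I would make the two continuous-time processes explicit. On iteration $k$, I assign the subinterval $[k(t^\ast+1), k(t^\ast+1)+1]$ to the likelihood Langevin SDE and $[k(t^\ast+1)+1, (k+1)(t^\ast+1)]$ to the prior EDM reverse SDE. Both $\pi_\tau$ and $\nu_\tau$ evolve under these SDEs; the stationary process starts from $\piX$ and uses the exact score $\nabla \log p_\tau$ on prior segments, while the non-stationary process starts from $\nuzeroX$ and uses the approximated score $\sbm_\tau$. By the exactness assumption on the likelihood step and the marginal-preserving property of the EDM reverse SDE, $\pi_\tau$ matches the SGS stationary marginals $\piX$ and $\piZ$ at every iteration boundary.

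Next, I would derive a segmentwise dissipation inequality. Combining the Fokker--Planck equations for the two processes, integration by parts, and Cauchy--Schwarz on the cross-drift term gives
\begin{align*}
    \frac{d}{d\tau} \mathsf{KL}(\pi_\tau || \nu_\tau) \leq -\frac{D(\tau)^2}{4} \mathsf{FI}(\pi_\tau || \nu_\tau) + \frac{1}{D(\tau)^2} \int \pi_\tau \|b_\tau^\nu - b_\tau^\pi\|_2^2 \d \xbm,
\end{align*}
where $D(\tau)$ is the diffusion coefficient and $b_\tau^\nu, b_\tau^\pi$ are the drifts of the two processes. On likelihood segments the drifts agree (no score enters), so only the negative Fisher information term survives, with an effective coefficient of order $\rho^2$ coming from the $1/\rho^2$ coupling that sets the natural time scale of the Langevin SDE. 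On prior segments the drift mismatch equals $2 s(\tau)^2 \sigmadot(\tau) \sigma(\tau) (\sbm_\tau - \nabla \log p_\tau)$, so the Fisher coefficient $v(\tau)^2/4 \geq \delta^2/4$ controls the dissipation while the error contribution integrates to exactly $\epsilon_\text{score}$ per iteration (as checked by substituting $v(\tau)^2 = 2 s(\tau)^2 \sigmadot(\tau) \sigma(\tau)$ into the right-hand side).

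Finally, I would integrate and telescope. Summing the inequalities over all $K$ iterations, dropping the non-negative terminal KL, and uniformly lower bounding the dissipation coefficient by $\min(\rho, \delta)^2/4$ yields
\begin{align*}
    \int_0^T \min(\rho, \delta)^2 \, \mathsf{FI}(\pi_\tau || \nu_\tau) \d\tau \leq 4 \mathsf{KL}(\piX || \nuzeroX) + 4 K \epsilon_\text{score},
\end{align*}
and dividing by $T = K(t^\ast+1)$ gives the stated bound. The main obstacle I anticipate is justifying the dissipation identity on the prior segment, since the EDM SDE runs backward in time and acts on the scaled variable $\xbm_t/s(t)$; the Fokker--Planck argument must be recast as a forward-time evolution (via time reversal) before integration by parts is legitimate. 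A secondary challenge is carefully identifying the factor of $\rho^2$ in the likelihood dissipation coefficient, which requires tracking the quadratic coupling in the joint potential and matching it against the $\min(\rho, \delta)^2$ prefactor in the final bound.
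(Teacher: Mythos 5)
Your proposal follows essentially the same route as the paper: interpolate the $K$ iterations into a continuous-time process (unit-time likelihood segments realized as a Brownian bridge SDE with diffusion coefficient $\rho$, and length-$t^\ast$ prior segments given by the EDM reverse SDE), establish the dissipation inequality
\begin{align*}
\partial_\tau \mathsf{KL}(\pi_\tau\|\nu_\tau) \le -\tfrac{c(\tau)^2}{4}\,\mathsf{FI}(\pi_\tau\|\nu_\tau) + \tfrac{1}{c(\tau)^2}\int \bigl\|\widetilde b_\tau - b_\tau\bigr\|_2^2\,\pi_\tau
\end{align*}
via Fokker--Planck, integration by parts, and Young's inequality (this is exactly the paper's key lemma), note that the drifts coincide on likelihood segments while the mismatch on prior segments is $v(\tau)^2(\sbm_\tau-\nabla\log p_\tau)$, and telescope the KL. The one discrepancy is in your final step: uniformly lower-bounding the dissipation coefficient by $\min(\rho,\delta)^2/4$ before dividing puts $\min(\rho,\delta)^2$ in the denominator of the score-error term, which is weaker than the stated $\delta^2$ when $\rho<\delta$; to recover the theorem exactly you should integrate each prior segment with its own coefficient $\delta^2/4$ so the $4\epsilon_\text{score}/\delta^2$ contribution stays separate, and apply $\min(\rho,\delta)^2$ only when telescoping the KL differences across the two segment types.
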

The proof is provided in \refapp{proof_theorem}.
This theorem states that the average distance (measured by Fisher information) of the non-stationary process with respect to the stationary process over $K$ iterations of PnP-DM goes to zero at a rate of $O(1/K)$ under certain conditions up to the score approximation error. 
Note that our theory only requires $L^2$-accurate score estimate under the measure $\pi_\tau$, which is a relatively weaker condition than the common $L^\infty$-accurate score estimate 
\begin{wrapfigure}{r}{0.45\textwidth}
  \vspace{-15pt}
  \begin{center}
    \includegraphics[width=0.45\textwidth]{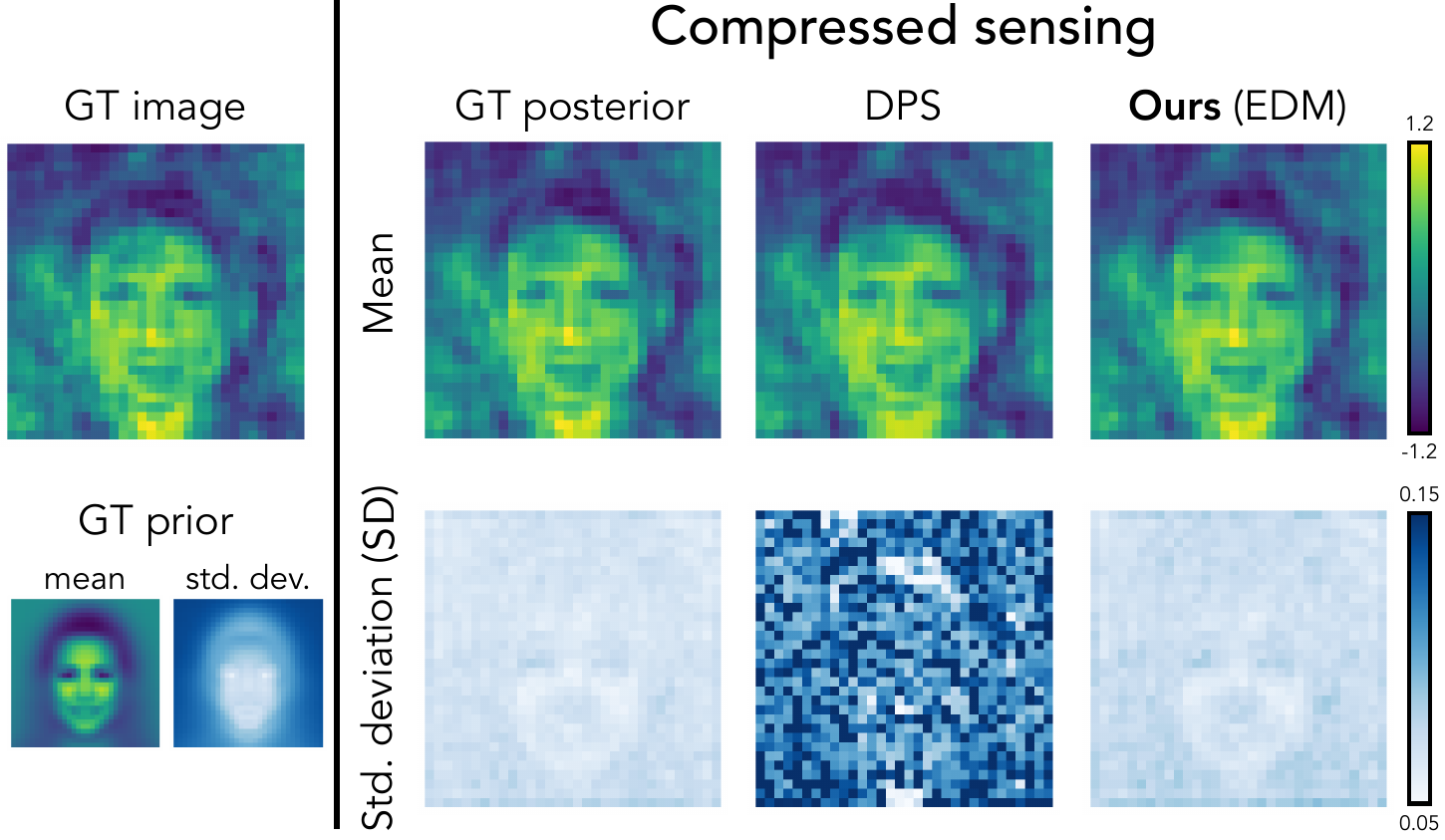}
  \end{center}
  \vspace{-10pt}
  \caption{Results on a synthetic problem with the ground truth posterior available. PnP-DM can sample it more accurately that DPS \citep{chung2023diffusion}.}
  \vspace{-10pt}
  \lblfig{gaussian_cs}
\end{wrapfigure}
assumption in prior analysis of sampling methods involving score estimates \citep{bortoli2021diffusion, sun2023provable}.
This result resembles the first-order stationarity for Langevin Monte Carlo \citep{balasubramanian2022towards}.
Unlike the non-asymptotic analysis in \citep{xu2024provably}, we utilize the average Fisher information instead of the total variation distance, enabling us to obtain an explicit convergence rate.
Here $\delta$ is the infimum of the diffusion coefficient along the reverse diffusion in \refeqn{sde}; see further discussions on the role of $\delta$ in \refapp{theory_discussion}.
Our theory shows that the accurate implementations of the two sampling steps lead to a sampler that provably converges to the stationary process that alternates between the two target stationary distributions. 

\section{Experiments}
\lblsec{experiments}

\begin{table*}[t]
\tiny
\newcolumntype{C}{>{\centering\arraybackslash}p{25pt}}
\newcolumntype{D}{>{\centering\arraybackslash}p{52pt}}
\definecolor{LightCyan}{rgb}{0.88,1,1}
\centering
\caption{Quantitative comparison on three noisy linear inverse problems for 100 FFHQ color test images. \textbf{Bold:} best; \underline{Underline:} second best.}
\begin{tabularx}{393pt}{DCCCCCCCCC}
\toprule
\multirow{2}{*}{Method} & \multicolumn{3}{c}{Gaussian deblur} & \multicolumn{3}{c}{Motion deblur} & \multicolumn{3}{c}{Super-resolution (4$\times$)} \\
\cmidrule(lr){2-4} \cmidrule(lr){5-7} \cmidrule(lr){8-10}
 & PSNR ($\uparrow$) & SSIM ($\uparrow$) & LPIPS ($\downarrow$) & PSNR ($\uparrow$) & SSIM ($\uparrow$) & LPIPS ($\downarrow$) & PSNR ($\uparrow$) & SSIM ($\uparrow$) & LPIPS ($\downarrow$)  \\ \midrule 
PnP-ADMM \citep{chan2016plugandplay} & 26.88 & 0.7855 & 0.3472 & 26.55 & 0.7655 & 0.3600 & 26.61 & 0.7634 & 0.3766  \\
DPIR \citep{zhang2021plugandplay} & 28.74 & 0.8348 & 0.2677 & 29.97 & 0.8529 & 0.2404 & 28.75 & 0.8378 & 0.2577  \\
DDRM \citep{kawar2022denoising} & 27.05 & 0.7819 & 0.2570 & -- & -- & -- & 29.47 & \textbf{0.8437} & 0.2322  \\
DPS \citep{chung2023diffusion} & 28.83 & 0.8212 & 0.2330 & 27.87 & 0.8035 & 0.2542 & 29.45 & 0.8379 & 0.2274  \\
PnP-SGS \citep{coeurdoux2023plugandplay} & 27.46 & 0.8356 & 0.2445 & 28.98 & 0.8447 & 0.2190 & 28.30 & 0.8349 & 0.2160  \\
DPnP \citep{xu2024provably} & 29.24 & 0.8360 & \underline{0.2098} & 30.21 & 0.8527 & \underline{0.2010} & 29.32 & 0.8407 & \underline{0.2127}  \\
\hline
PnP-DM (VP) & 29.46 & 0.8215 & 0.2202 & 30.06 & 0.8336 & 0.2099 & 29.40 & 0.8238 & 0.2219 \\
PnP-DM (VE) & \underline{29.65} & \underline{0.8399} & \textbf{0.2090} & \textbf{30.38} & \underline{0.8547} & \textbf{0.1971} & \underline{29.57} & 0.8431 & \textbf{0.2108} \\
PnP-DM (iDDPM) &  29.60 & 0.8383 & 0.2203  & 30.26 & 0.8507 & 0.2103 & 29.53 & 0.8404 & 0.2213 \\
PnP-DM (EDM) & \textbf{29.66} & \textbf{0.8411} & 0.2170 & \underline{30.35} & \textbf{0.8547} & 0.2062 & \textbf{29.60} & \underline{0.8435} & 0.2191  \\
\bottomrule
\end{tabularx}
\lbltab{linear}
\end{table*}

\subsection{Validation with ground truth posterior}

We first demonstrate the accuracy of PnP-DM for posterior sampling on a simulated compressed sensing problem with a Gaussian prior where the posterior distribution can be expressed in a closed form.
The mean and per-pixel standard deviation of the prior are visualized on the bottom left of \reffig{gaussian_cs}.  
The linear forward model $\Abm\in\R^{m\times n}$ is a Gaussian matrix ($m=n/2$), i.e. $\Abm_{ij}\sim \Ncal(0,1)$. 
A test image is randomly generated from the prior (see top left of \reffig{gaussian_cs}), and the measurement is calculated according to \refeqn{inverse_problem} with $\nbm \sim \Ncal(\bm{0}, 0.01^2 \Ibm)$.
We compare our method with the popular DM-based method DPS \citep{chung2023diffusion}. 
We draw 1,000 samples and visualize the empirical mean and per-pixel standard deviation for both algorithms.
Compared with the true posterior (second column), we find that the both methods accurately estimate the mean.
However, the standard deviation image estimated by DPS significantly deviates from the ground truth. 
In contrast, our standard deviation image matches the ground truth in terms of both absolute magnitude and spatial distribution. 
These results highlight the accuracy of our method over DPS by taking a more principled Bayesian approach. 

\subsection{Benchmark experiments}

\paragraph{Dataset and inverse problems}
We test our proposed algorithm and several baseline methods on 100 images from the validation set of the FFHQ dataset \citep{karras2018astylebased} for five inverse problems: (1) \textit{Gaussian deblur} with kernel size 61$\times$61 and standard deviation 3.0, (2) \textit{Motion deblur} with kernel size 61$\times$61 and intensity of 0.5, (3) \textit{Super-resolution} with 4$\times$ downsampling ratio,
(4) the coded diffraction patterns (CDP) reconstruction problem (nonlinear) in \citep{candes2013phase, metzler2018prdeep} (phase retrieval with a phase mask), and (5) the Fourier phase retrieval (nonlinear) with $4\times$ oversampling. 
We add i.i.d. Gaussian noise to all the simulated measurements $\ybm$. In particular, i.e. $\nbm \sim \Ncal(\bm{0}, \sigma_\ybm^2 \Ibm)$.
For all problems except for Fourier phase retrieval, the noise standard deviation is set as $\sigma_\ybm = 0.05$.
Due to the severe ill-posedness of Fourier phase retrieval, we consider a smaller noise standard deviation $\sigma_\ybm = 0.01$.

\vspace{-5pt}

\begin{figure*}[tb]
    \centering
    \includegraphics[width=\textwidth]{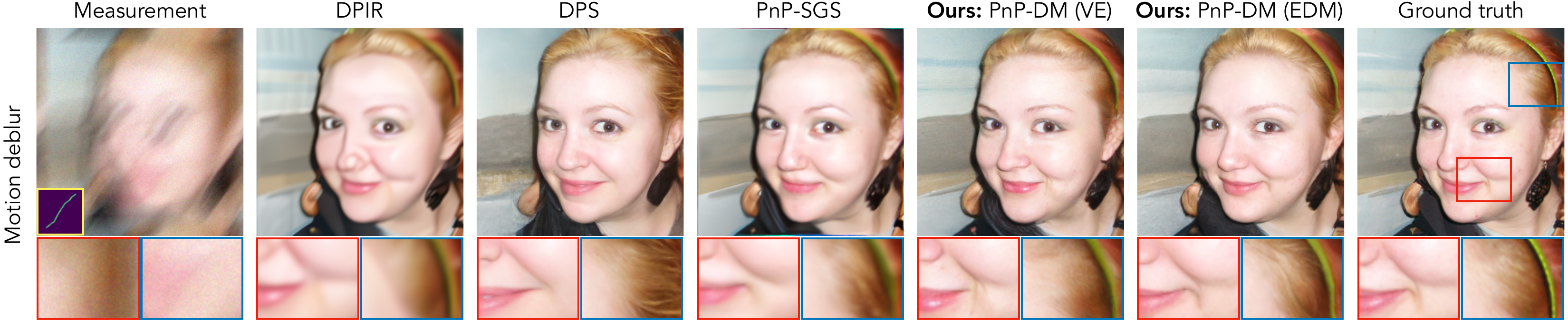}
    \vspace{-15pt}
    \caption{Visual examples for the motion deblur problem ($\sigma_\ybm=0.05$). We visualize one sample generated by each sampling algorithm.}
    \vspace{-5pt}
    \lblfig{linear_examples}
\end{figure*}

\paragraph{Baselines and comparison protocols} 
We consider four variants of DMs as plug-in priors for our method, namely VP-SDE (VP) \citep{ho2020denoising}, VE-SDE (VE) \citep{song2021scorebased}, iDDPM \citep{nichol2021improved}, and EDM \citep{karras2022elucidating}.
We compare our method with various baselines, including (1) optimization-based methods: PnP-ADMM \citep{chan2016plugandplay}, DPIR \citep{zhang2021plugandplay}; (2) conditional DMs: DDRM \citep{kawar2022denoising}, DPS \citep{chung2022score}; and (3) SGS-based method: PnP-SGS \citep{coeurdoux2023plugandplay}, DPnP \citep{xu2024provably}.
For fair comparison, we use the same pre-trained score function checkpoint for all DM-based methods.
Since the pre-trained score function was trained with the DDPM formulation (VP-SDE) \citep{ho2020denoising}, we convert it to the EDM formulation by applying the VP preconditioning \citep{karras2022elucidating}. 
We use the Peak Signal-to-Noise Ratio (PSNR), the Structural Similarity Index Measure (SSIM), and the Learned Perceptual Image Patch Similarity (LPIPS) distance for quantitative comparison.
For each sampling method, we draw 20 randoms samples, calculate their mean, and report the metrics on the mean image.
More experimental details are provided in Appendices \ref{app:inverse_problem_setup}, \ref{app:technical_details}, \ref{app:baseline_details}.
\begin{wrapfigure}{r}{0.5\textwidth}
  \begin{center}
    \includegraphics[width=0.5\textwidth]{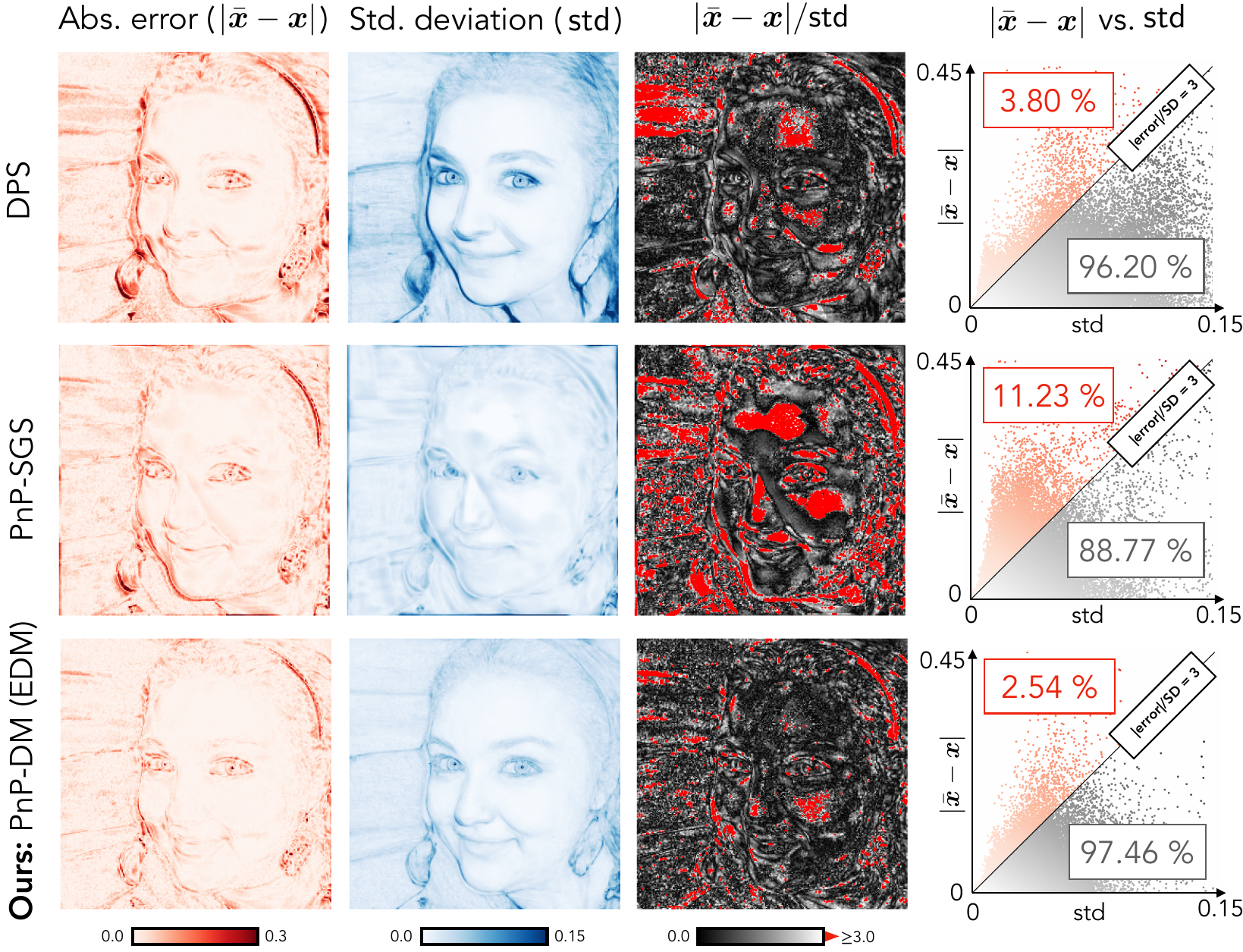}
  \end{center}
  \vspace{-10pt}
  \caption{Comparison of uncertainty quantification (UQ) for the motion deblur. Left 3 columns: absolute error ($|\bar{\xbm}-\xbm|$), standard deviation $(\mathsf{std})$, and absolute z-score ($|\bar{\xbm}-\xbm|/\mathsf{std}$) with the outlier pixels in red. Right column: scatter plot of $|\bar{\xbm}-\xbm|$ versus $\mathsf{std}$. Note that PnP-DM leads to a better UQ performance than the baselines by having the lowest percentage of outliers while avoiding having overestimated per-pixel standard deviations.}
  \vspace{-20pt}
  \lblfig{uncertainty}
\end{wrapfigure}

\paragraph{Results: linear problems}

A quantitative comparison is provided in \reftab{linear}. 
PnP-DM generally outperforms the baseline methods and that the VE and EDM variants consistently outperform the other two variants on these linear problems.
\reffig{linear_examples} contains visual examples for the motion deblur problem (see \refapp{additional_results_linear} for the other two linear problems).
PnP-DM provides high-quality reconstructions that are both sharp and consistent with the ground truth image.
We also provide an uncertainty quantification analysis based on pixel-wise statistics in \reffig{uncertainty}.
In the left three columns, we visualize the absolute error ($|\bar{\xbm}-\xbm|$), standard deviation ($\mathsf{std}$), and absolute z-score ($|\bar{\xbm}-\xbm|/\mathsf{std}$).
In the third column, red pixels highlight locations where the ground truth pixel values are outliers of the 3-sigma credible interval (CI) under the estimated posterior uncertainty.
The fourth column contains scatter plots of $|\bar{\xbm}-\xbm|$ versus $\mathsf{std}$ for each pixel of the reconstructions, where red boxes show the percentages of outliers (outside of 3-sigma CI) and gray boxes indicate the percentages within the 3-sigma CI.
Similar to the synthetic prior experiment, DPS tends to have larger standard deviation estimations, as shown by the less concentrated distribution of gray points around the origin.
Compared with baselines, especially PnP-SGS, our approach captures a higher percentage (97.46\%) of ground truth pixels than the baselines (96.20\% and 88.77\%).
If the true posterior were truly Gaussian, 99\% of the ground-truth pixels should lie within the 3-sigma CI; however, as the posterior is not Gaussian with a DM-based prior, we do not necessarily expect to reach 99\% coverage.

\paragraph{Results: nonlinear problems}

We provide a quantitative comparison in \reftab{nonlinear}. 
For the CDP reconstruction problem, PnP-DM performs on par with DPS but outperforms other SGS-based methods.
We then consider the Fourier phase retrieval (FPR) problem, which is known to be a challenging nonlinear inverse problem.
One challenge lies in its invariance to 180$^\circ$ rotation, so the posterior distribution have two modes, one with upright images and another with 180$^\circ$-rotated images, that equally fit the measurement.
To increase the chance of getting properly-oriented reconstructions, we run each algorithm with four different random initializations and report the metrics for the best run, following the practice in \citep{chung2022score}. 
We find that PnP-DM significantly outperforms the baselines on this highly ill-posed inverse problem.
As shown in \reffig{nonlinear_examples} (a), our method can provide high-quality reconstructions for both orientations, while the baseline methods fail to capture at least one of the two modes.
We further run our method for a test image with 100 different random initialization and collect reconstructions in both orientations that are above 28dB in PSNR (90 out of 100 runs).
The percentage of upright and rotated reconstructions are visualized by the pie chart in \reffig{nonlinear_examples} (b).
With a prior on upright face images, our method generate mostly samples with the upright orientation. 
Nevertheless, it can also find the other mode that has an equal likelihood, demonstrating its ability to capture multi-modal posterior distributions.

\begin{table*}[t]
\tiny
\newcolumntype{C}{>{\centering\arraybackslash}p{34pt}}
\newcolumntype{D}{>{\centering\arraybackslash}p{76pt}}
\definecolor{LightCyan}{rgb}{0.88,1,1}
\centering
\caption{Quantitative evaluation on two noisy nonlinear inverse problems for 100 FFHQ grayscale test images. \textbf{Bold:} best; \underline{Underline:} second best.}
\begin{tabularx}{364pt}{DCCCCCCCCC}
\toprule
\multirow{2}{*}{Method} & \multicolumn{3}{c}{Coded diffraction patterns} & \multicolumn{3}{c}{Fourier phase retrieval}  \\
\cmidrule(lr){2-4} \cmidrule(lr){5-7} 
 & PSNR ($\uparrow$) & SSIM ($\uparrow$) & LPIPS ($\downarrow$) & PSNR ($\uparrow$) & SSIM ($\uparrow$) & LPIPS ($\downarrow$)  \\ \midrule 
HIO \citep{fienup1982phase} & -- & -- & -- & 20.66 & 0.4308 & 0.6469  \\
DPS \citep{chung2023diffusion} & \textbf{33.43} & 0.9049 & \textbf{0.1374} & 23.60 & 0.6804 & 0.3126  \\
PnP-SGS \citep{coeurdoux2023plugandplay} & 32.19 & 0.8889 & 0.2010 & 15.36 & 0.3659 & 0.5730  \\
DPnP \citep{xu2024provably} & 32.19 & 0.8853 & 0.2000 & 29.28 & 0.8397 & 0.2180 \\  \hline
PnP-DM (VP) & 32.91  & 0.8846 & 0.1906 & 30.36 & 0.8553 & 0.2115  \\
PnP-DM (VE) & 33.13 & 0.8971 & 0.1663 & 29.88 & 0.8464 & 0.2186 \\
PnP-DM (iDDPM) & \underline{33.35} & \textbf{0.9083} & 0.1471 & \underline{30.61} & \underline{0.8718} & \textbf{0.1975}  \\
PnP-DM (EDM) & 33.25 & \underline{0.9050} & \underline{0.1386} & \textbf{31.14} & \textbf{0.8731} & \underline{0.2024}  \\
\bottomrule
\end{tabularx}
\lbltab{nonlinear}
\end{table*}

\begin{figure*}[tb]
  \centering
  \includegraphics[width=\textwidth]{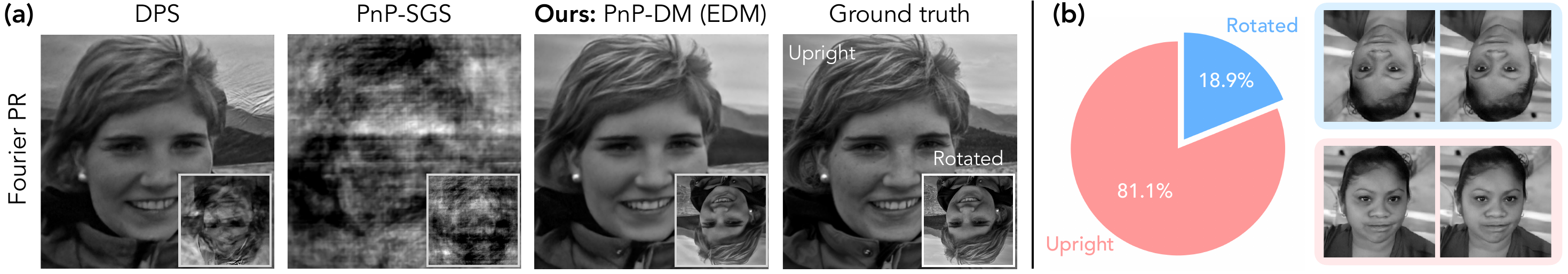}
  \vspace{-10pt}
  \caption{
  Results of the Fourier phase retrieval problem.
  (a) PnP-DM provides both upright and rotated reconstructions (two modes given by the invariance of the forward model to 180$^\circ$ rotation) with high fidelity, while the baseline methods cannot. 
  (b) We visualize the percentages of upright and rotated reconstructions out of 90 runs for a test image with two samples for each orientation.
  }
  \vspace{-10pt}
  \lblfig{nonlinear_examples}
\end{figure*}

\subsection{Experiments on black hole imaging} 

\paragraph{Problem setup}
We finally validate PnP-DM on a real-world nonlinear imaging inverse problem: black hole imaging (BHI) (see \refapp{inverse_problem_setup} for more details). 
A visual illustration of BHI is provided in \reffig{blackhole} (a).
This BHI inverse problem is severely ill-posed. 
Even with an Earth-sized telescope, only a small fraction of the Fourier frequencies of the target black hole can be measured (region within the red box); in reality, this region is further subsampled with a highly sparse pattern (black lines). 
Additionally, the atmospheric noise causes nonlinearity of this BHI problem that sometimes results in a multi-modal posterior distribution of the reconstructed image~\citep{sun2021deep}.
Here we demonstrate the effectiveness of PnP-DM in capturing a multi-modal posterior distribution.
For brevity, we restrict our choice of diffusion models in PnP-DM to EDM and use DPS as the baseline.

\begin{figure*}[tb]
  \centering
  \includegraphics[width=\textwidth]{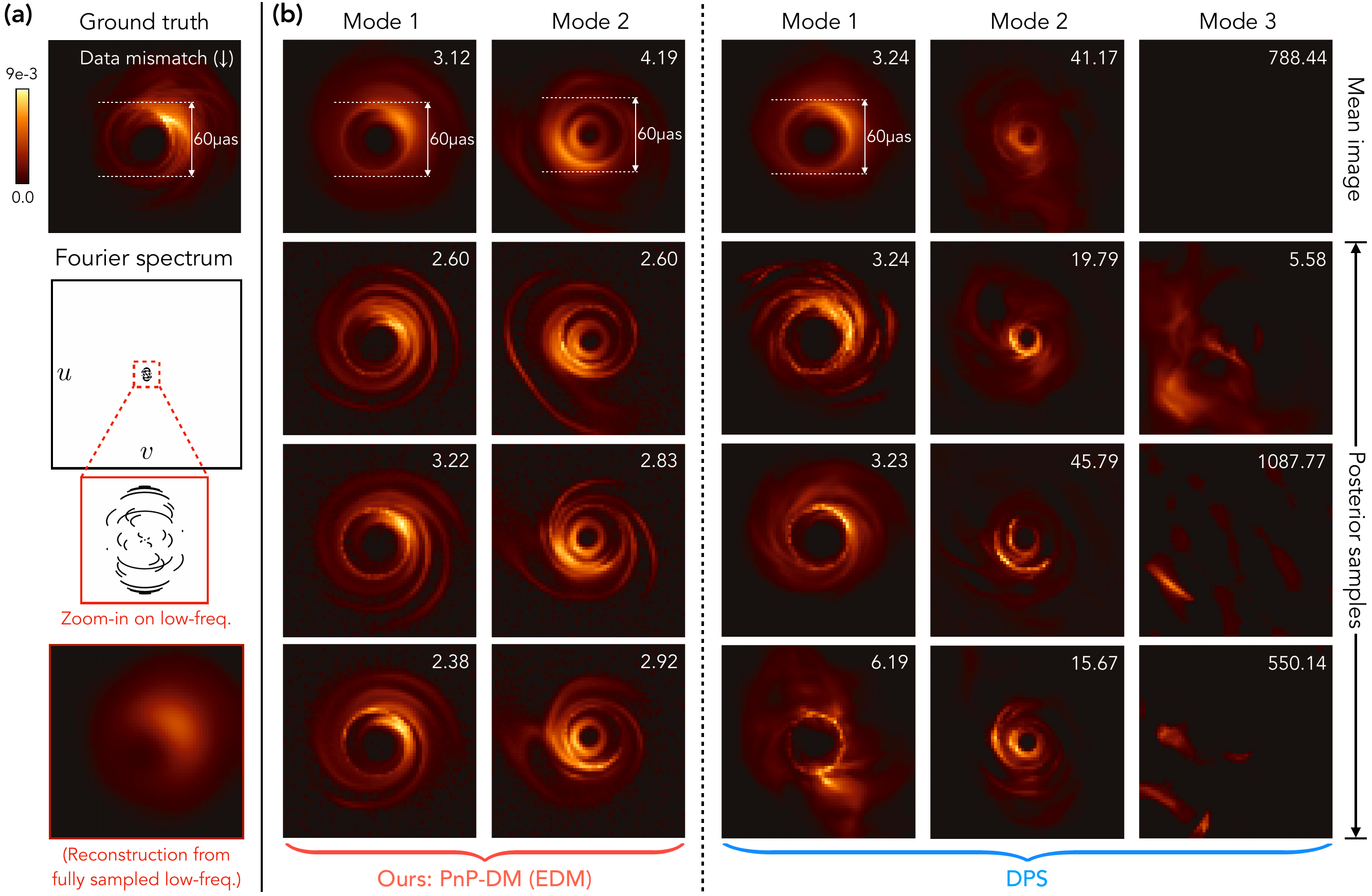}
  \vspace{-10pt}
  \caption{Results on a nonlinear and severely ill-posed black hole imaging problem. Our method, PnP-DM, is compared with the conditional diffusion model baseline DPS. A metric quantifying the mismatch with the observed measurements is labeled for each sample, which should be around 2 for ideal measurement fit. Samples generated by PnP-DM exhibit two distinct modes with sharp details and a consistent ring structure, while samples given by DPS display inconsistent ring sizes and sometimes fail to capture the black hole structure entirely with samples having poor measurement fit.}
  \vspace{-10pt}
  \lblfig{blackhole}
\end{figure*}

\vspace{-5pt}

\paragraph{Results on simulated data}
We use the simulated data from \citep{sun2021deep} where the measurements are generated assuming that the ground-truth black hole image were at the location of the Sagittarius A$^\ast$ black hole.
\reffig{blackhole} (b) visually compares the results obtained by PnP-DM and DPS.
We use the t-SNE method \citep{vandermaaten08visualizing} to cluster the generated samples (100 for each method) and identify two modes in the samples generated by PnP-DM and three modes in those generated by DPS.
We visualize the mean and three samples for each image mode.
A metric for quantifying the degree of data mismatch is labeled on the top right corner of each image. 
As illustrated by both the mean and sample images, PnP-DM successfully captures the two modes previously identified for this dataset \citep{sun2021deep}. 
Note that PnP-DM generates high-fidelity samples from both modes with sharp details of the flux ring, and its samples from ``Mode 1'' align well with the ground truth image.
In contrast, two out of the three modes sampled by DPS fail to exhibit a meaningful black hole structure and do not correspond with the observed measurements, as indicated by the significantly larger data mismatch values.

\vspace{-5pt}

\paragraph{Results on real data}
Finally, we apply PnP-DM to the real M87 black hole data from April 6$^\text{th}$, 2017 \citep{event2019first_paper4}, with the results shown in \reffig{teaser}. 
By leveraging an expressive DM-based image prior, PnP-DM generates high-quality samples that are both visually plausible and consistent with the ring diameters observed in the official EHT reconstruction. 
These results highlight the robustness and effectiveness of our method in tackling a highly ill-posed real-world inverse problem.

\section{Conclusion}

We have introduced PnP-DM, a posterior sampling method for solving imaging inverse problems.
The backbone of our method is a split Gibbs sampler that iteratively alternates between two steps that separately involve the likelihood and prior.
Crucially, we establish a link between the prior step and a general DM framework known as the EDM formulation. 
By leveraging this connection, we seamlessly integrate a diverse range of state-of-the-art DMs as priors through a unified interface.
Experimental results demonstrate that our method outperforms existing DM-based methods across both linear and nonlinear inverse problems, including a nonlinear and severely ill-posed black hole interferometric imaging problem. 

\paragraph{Limitations}
PnP-DM can be further improved in the following two aspects.
First, PnP-DM currently requires evaluating the likelihood and prior steps for the entire image at a time. 
This potentially poses computational challenges in solving large-scale inverse problems (e.g. 3D imaging) or those with expensive likelihood evaluation (e.g. PDE inverse problems).
Second, the current theoretical analysis does not consider the approximation error introduced in the likelihood step for general nonlinear inverse problems when running Langevin MCMC for finite iterations. 
Explicit incorporation of this error would offer further insights into the empirical performance of PnP-DM.

\paragraph{Broader impacts}
We expect this work to make a positive impact in computational imaging and related application domains.
For many imaging problems, there is a need to facilitate image reconstruction with expressive image priors and quantify uncertainty, which could lead to better imaging systems that enables further understanding of the imaging target.
Nonetheless, as we are introducing DMs as priors into the imaging process, it is inevitable to inherent the potential bias of these models. 

\section{Acknowledgments}
The authors thank Charles Gammie, Ben Prather, Abhishek Joshi, Vedant Dhruv, and Chi-kwan Chan for providing the black hole simulations.
The authors also thank the generous funding from Schmidt Sciences and the Heritage Medical Research Fellowship.
Z.W. was supported by an Amazon AI4Science Fellowship.
Y.S. was supported by a Computing, Data, and Society Fellowship.
B.Z. was supported by a Kortschak Fellowship. 

\clearpage 

{
\small

\bibliographystyle{plain}
\bibliography{arxiv_v2}

}

\newpage

\appendix

\section{Theory}
\lblapp{proof_theorem}

\subsection{Interpolation of PnP-DM}

In this section, we formally introduce the interpolation of PnP-DM.
We consider the case where the coupling strength $\rho$ is constant, i.e. $\rho_k \equiv \rho$ and make the following assumption.
\begin{assumption} There exists a unique $t^\ast$ such that $\sigma(t^\ast)=\rho$.
\end{assumption}
This assumption is satisfied for common diffusion models. 
Popular choices of the noise level schedule include $\sigma(t)=t$ or $\sigma(t)=\sqrt{t}$, which are monotonically increasing functions of $t$.
We first present two propositions showing that the two steps in SGS can be implemented by running two SDEs.

\begin{proposition}[Brownian bridge for the likelihood step] 
\lblprop{likelihood}
For iteration $k$ with iterate $\xbm^{(k)}$, the likelihood step of SGS is equivalent to solving the following SDE from $t=0$ to $t=1$:
\begin{align}
\lbleqn{forward_brownian_bridge}
    \d\xbm_t = \rho^2\nabla\log\phi_t(\xbm_t)\d t+\rho\d\wbm_t
\end{align}
where $\xbm_0 = \xbm^{(k)}$ and $\phi_t(\xbm) := \int \exp[-f(\zbm; \ybm)-\frac{1}{2\rho^2(1-t)}\|\xbm-\zbm\|_2^2]\d\zbm$. 
\end{proposition}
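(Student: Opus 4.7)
The claim is essentially that the SDE in \refeqn{forward_brownian_bridge}, initialized at $\xbm_0 = \xbm^{(k)}$, produces a sample $\xbm_1$ distributed according to $\piZXx$ with $\xbm = \xbm^{(k)}$. The structure of the drift immediately suggests a \emph{Doob $h$-transform} (equivalently, a F\"ollmer/Schr\"odinger-bridge construction) of Brownian motion, so my plan is to identify the reference process, the change-of-measure, and then verify that the resulting $h$ reduces to $\phi_t$ up to a factor that is constant in $\xbm$.

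\textbf{Step 1: Reference process and target.} I would take as the reference the scaled Brownian motion $\xbm_t^{\mathrm{ref}} = \xbm^{(k)} + \rho\wbm_t$, whose one-dimensional marginals are $q_t(\xbm) = \Ncal(\xbm;\xbm^{(k)},\rho^2 t\,\Ibm)$. The target at time $t=1$ is the conditional density $\mu(\zbm) := \piZXx(\zbm) \propto \exp\!\bigl(-f(\zbm;\ybm) - \tfrac{1}{2\rho^2}\|\xbm^{(k)}-\zbm\|_2^2\bigr)$. The key observation is that the quadratic terms in $\mu$ and $q_1$ cancel, so the Radon--Nikodym derivative simplifies to $\frac{\d\mu}{\d q_1}(\zbm) \propto \exp(-f(\zbm;\ybm))$, independent of $\xbm^{(k)}$ in form.

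\textbf{Step 2: Compute the $h$-function.} Setting $h(t,\xbm) := \E^{\mathrm{ref}}\!\bigl[\tfrac{\d\mu}{\d q_1}(\xbm_1)\,\big|\,\xbm_t=\xbm\bigr]$ and using the transition law $\xbm_1 \mid \xbm_t = \xbm \sim \Ncal(\xbm,\rho^2(1-t)\Ibm)$ under the reference, a direct computation gives
\begin{equation*}
    h(t,\xbm) \;=\; C_t \int \exp\!\Bigl(-f(\zbm;\ybm) - \tfrac{1}{2\rho^2(1-t)}\|\xbm-\zbm\|_2^2\Bigr)\,\d\zbm \;=\; C_t\,\phi_t(\xbm),
\end{equation*}
where $C_t$ collects the Gaussian normalization $(2\pi\rho^2(1-t))^{-n/2}$ and other factors depending only on $t$. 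Consequently $\nabla_{\xbm}\log h(t,\xbm) = \nabla_{\xbm}\log\phi_t(\xbm)$, which is exactly the drift appearing in \refeqn{forward_brownian_bridge}.

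\textbf{Step 3: Conclude via Doob's theorem.} Doob's $h$-transform then gives that the SDE $\d\xbm_t = \rho^2\nabla\log h(t,\xbm_t)\d t + \rho\,\d\wbm_t$, initialized at $\xbm_0 = \xbm^{(k)}$, has marginal density $p_t(\xbm) = h(t,\xbm)q_t(\xbm)/h(0,\xbm^{(k)})$. Plugging in $t=1$ yields $p_1 = \mu/h(0,\xbm^{(k)})$; normalization forces $h(0,\xbm^{(k)})$ to equal the implicit constant, so $\xbm_1 \sim \mu = \piZXx$. Replacing $\nabla\log h$ by $\nabla\log\phi_t$ in the drift (justified by Step~2) completes the identification with the SDE in the statement.

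\textbf{Expected obstacle.} The cleanest part is the algebraic matching $\frac{d\mu}{dq_1}\propto e^{-f}$ and the reduction $h \propto \phi_t$, which I expect to go through in a few lines. The main technical subtlety is the behaviour of the drift as $t\to 1$: since $\phi_t$ involves a Gaussian kernel of variance $\rho^2(1-t)$ that collapses to a delta, $\nabla\log\phi_t$ can become singular near the terminal time, and one must invoke the standard F\"ollmer/bridge well-posedness argument (e.g.\ integrating the drift against $p_t$ and using that the relative entropy $\mathsf{KL}(\mu\,\|\,q_1)$ is finite) to justify that the SDE admits a unique weak solution on $[0,1]$ with the claimed terminal law. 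Mild regularity on $f(\cdot;\ybm)$ (e.g.\ lower boundedness and measurability sufficient for $\int e^{-f}\d q_1 < \infty$) is all that is needed.
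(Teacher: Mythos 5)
Your proof is correct and follows essentially the same route as the paper: the paper's proof simply cites the Brownian bridge construction of Lemma 4 in \citep{yuan2023on} and notes that the SDE satisfies $p(\xbm_1|\xbm_0) \propto \exp(-f(\xbm_1;\ybm)-\frac{1}{2\rho^2}\|\xbm_0-\xbm_1\|_2^2)$, which is exactly the statement your Doob $h$-transform argument establishes from scratch. The only difference is that you supply the derivation (reference process $\xbm^{(k)}+\rho\wbm_t$, cancellation of the quadratic terms so that $\d\mu/\d q_1 \propto e^{-f}$, and $h(t,\cdot)\propto\phi_t$) together with the correct caveat about the drift singularity as $t\to 1$, all of which the paper delegates to the cited lemma.
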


\begin{proof}
This proposition is due to the Brownian bridge construction presented in Lemma 4 of \citep{yuan2023on}.
This SDE satisfies that $p(\xbm_1 | \xbm_0) \propto \exp\left(-f(\xbm_1; \ybm)-\frac{1}{2\rho^2}\|\xbm_0-\xbm_1\|_2^2\right)$. 
Therefore, solving \refeqn{forward_brownian_bridge} from $t=0$ to $t=1$ is equivalent to taking a likelihood step.
\end{proof}

\begin{proposition}[EDM reverse diffusion for the prior step]
\lblprop{prior}
For iteration $k$ with iterate $\zbm^{(k)}$, the prior step of SGS is equivalent to solving the following SDE from $t=t^\ast$ to $t=0$:
\begin{align}
\lbleqn{edm_reverse_sde}
    \d \xbm_t=\left[u(t) \xbm_t- v(t)^2 \nabla \log p_t\left(\xbm_t\right)\right] \d t + v(t)\d\bar{\wbm}_t
\end{align}
where $\xbm_{t^\ast} = s(t^\ast)\zbm^{(k)}$, $u(t) := \frac{\sdot(t)}{s(t)}$, $v(t) := s(t)\sqrt{2\sigmadot(t)\sigma(t)}$, and $p_t$ is the distribution of $s(t)\xbm+s(t)\sigma(t)\epsbm$ with $\xbm$ following the prior distribution $p(\xbm)\propto\exp(-g(\xbm))$ and $\epsbm \sim \Ncal(\bm{0}, \Ibm)$.

\end{proposition}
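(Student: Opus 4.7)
The plan is to verify that the reverse SDE \refeqn{edm_reverse_sde}, started at the deterministic point $\xbm_{t^\ast} = s(t^\ast)\zbm^{(k)}$ and integrated backward to $t=0$, produces a sample $\xbm_0$ distributed according to $\pi^{X|Z=\zbm^{(k)}}$ in \refeqn{denoising}. The argument rests on the defining property of the EDM formulation: the SDE \refeqn{edm_reverse_sde} is precisely Anderson's time reversal of a forward SDE whose transition kernel is $p(\xbm_t|\xbm_0) = \Ncal(s(t)\xbm_0, s(t)^2\sigma(t)^2\Ibm)$, so its joint law on any pair of times matches that of the forward process.

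First I would invoke the conditional form of this time-reversal identity: conditioning on $\xbm_{t^\ast}=x$ and running \refeqn{edm_reverse_sde} backward samples exactly the conditional density $p(\xbm_0|\xbm_{t^\ast}=x)$, because the Markov kernel of the reverse SDE from $t^\ast$ to $0$ coincides with $p(\xbm_0|\xbm_{t^\ast}=\cdot)$ regardless of the initial distribution fed into it. Next I would compute this conditional density via Bayes' rule, reusing the calculation carried out in the main text immediately after \refeqn{sde}:
\begin{align*}
    p(\xbm_0|\xbm_{t^\ast}) &\propto \Ncal\bigl(\xbm_{t^\ast};\, s(t^\ast)\xbm_0,\, s(t^\ast)^2\sigma(t^\ast)^2\Ibm\bigr)\exp(-g(\xbm_0)) \\
    &\propto \exp\left(-g(\xbm_0) - \frac{1}{2\sigma(t^\ast)^2}\bignorm{\xbm_0 - \xbm_{t^\ast}/s(t^\ast)}_2^2\right).
\end{align*}
Plugging in the proposition's initialization $\xbm_{t^\ast}/s(t^\ast) = \zbm^{(k)}$ and the definition $\sigma(t^\ast)=\rho$ collapses this into $\exp\bigl(-g(\xbm_0)-\tfrac{1}{2\rho^2}\|\xbm_0-\zbm^{(k)}\|_2^2\bigr)$, which is exactly $\pi^{X|Z=\zbm^{(k)}}$.

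The only real subtlety is justifying the conditional form of time reversal when the SDE is started from a Dirac mass at $\xbm_{t^\ast}$ rather than from the marginal $p_{t^\ast}$ that appears in Anderson's usual statement. The cleanest way to dispatch this is to observe that the backward Kolmogorov equation associated with \refeqn{edm_reverse_sde}, viewed as a function of $\xbm_t$ and $t\leq t^\ast$ with $x$ held fixed, admits the forward conditional density $p(\xbm_t|\xbm_{t^\ast}=x)$ as its fundamental solution; this follows from the Fokker--Planck structure of the EDM forward process together with the score identity derived in \citep{karras2022elucidating}. Once this conditional identity is in hand, the remainder of the proposition is just the Bayes' rule algebra above.
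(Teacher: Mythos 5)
Your proposal is correct and follows essentially the same route as the paper's proof: identify \refeqn{edm_reverse_sde} as the Anderson time reversal of the EDM forward SDE $\d\xbm_t = u(t)\xbm_t\,\d t + v(t)\,\d\wbm_t$, use the Gaussian transition kernel $p(\xbm_t|\xbm_0)=\Ncal(s(t)\xbm_0, s(t)^2\sigma(t)^2\Ibm)$ together with Bayes' rule to compute $p(\xbm_0|\xbm_{t^\ast})$, and substitute $\xbm_{t^\ast}=s(t^\ast)\zbm^{(k)}$, $\sigma(t^\ast)=\rho$ to recover $\piXZzbar$ with $\bar{z}=\zbm^{(k)}$. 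The one point where you go beyond the paper is in explicitly flagging and resolving the subtlety that the reverse SDE is initialized at a Dirac mass rather than at the marginal $p_{t^\ast}$; the paper dispatches this more tersely by asserting that the forward and reversed processes share the same path distribution and hence the same conditional $p(\xbm_0|\xbm_{t^\ast})$, which is the same fact you establish.
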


\begin{proof}
First note that \refeqn{edm_reverse_sde} is exactly \refeqn{sde} written in terms of $u(t)$ and $v(t)$.
We know that the \refeqn{edm_reverse_sde} is the reverse SDE of the following SDE
\begin{align}
\lbleqn{edm_forward_sde}
    \d \xbm_t=u(t) \xbm_t \d t + v(t)\d\wbm_t.
\end{align}
where $\xbm_0 \sim p(\xbm)$ and $p_t$ is the marginal distribution of $\xbm_t$.
As we showed in the main text, it holds for \refeqn{edm_forward_sde} that
\begin{align*}
    p(\xbm_0|\xbm_t) \propto \exp\left(-g(\xbm_0)-\frac{1}{2\sigma(t)^2}\|\xbm_0-\xbm_t/s(t)\|_2^2\right).
\end{align*}
As \refeqn{edm_reverse_sde} is the time-reversed process of \refeqn{edm_forward_sde}, they share the same path distribution and thus the same conditional distribution $p(\xbm_0|\xbm_t)$.
So, if we set $\xbm_{t^\ast} = s(t^\ast)\zbm^{(k)}$, we have that
\begin{align*}
    p(\xbm_0|\xbm_{t^\ast}) \propto \exp\left(-g(\xbm_0)-\frac{1}{2\sigma(t^\ast)^2}\|\xbm_0-\zbm^{(k)}\|_2^2\right) \propto \exp\left(-g(\xbm_0)-\frac{1}{2\rho^2}\|\xbm_0-\zbm^{(k)}\|_2^2\right),
\end{align*}
which is the desired conditional distribution of the prior step. 
Therefore, solving \refeqn{edm_reverse_sde} from $t=t^\ast$ to $t=0$ is equivalent to taking a prior step.
\end{proof}

Due to \refprop{likelihood} and \refprop{prior}, the SDEs \refeqn{forward_brownian_bridge} and \refeqn{edm_reverse_sde} implement the two desired conditional distributions in SGS.
In PnP-DM, the prior step involves a network that approximates the score function of the prior distribution, i.e. $\sbm_t \approx \nabla\log p_t$, so the continuous-time process for the actual update is
\begin{align}
\lbleqn{edm_reverse_sde_error}
    \d \xbm_t=\left[u(t) \xbm_t- v(t)^2 \sbm_t\left(\xbm_t\right)\right] \d t + v(t)\d\bar{\wbm}_t.
\end{align}
We can then interpolate PnP-DM by considering a dynamic that alternates between running \refeqn{forward_brownian_bridge} and \refeqn{edm_reverse_sde_error}.

Since each likelihood step takes $1$ unit of time and each prior step takes $t^\ast$ unit of time, the total time of the interpolating process for $K$ iterations of PnP-DM is $T:=K(t^\ast+1)$.
We use $\tau$ to denote the time that has elapsed from initializing PnP-DM with $\xbm^{(0)}$.
We define $\{\nu_\tau\}$ and $\{\pi_\tau\}$ as the distributions at time $\tau$ of the non-stationary process initialized at $\xbm^{(0)}\sim \nuzeroX$ (\reffig{proof_idea} top) and the stationary process initialized at $\xbm^{(0)}\sim \piX$ (\reffig{proof_idea} bottom), respectively.
Therefore, we have 
\begin{itemize}
    \item $\nu_\tau = \nu_k^X$, $\pi_\tau = \piX$ for $\tau=k(t^\ast+1)$ with $k=0, \cdots, K$, and
    \item $\nu_\tau = \nu_k^Z$, $\pi_\tau = \piZ$ for $\tau=k(t^\ast+1)+1$ with $k=0, \cdots, K-1$.
\end{itemize}

\subsection{Proof of \refthm{main_result}}

Before proving our main result, we present a key lemma for our analysis, which quantifies the time-derivative of the KL divergence in terms of the Fisher information along a pair of general diffusion processes.

\begin{lemma}
\lbllem{key_lemma}
Given the following pair of diffusion processes
\begin{align}
    \d \xbm_t = b(\xbm_t, t)\d t + c(t) \d \wbm_t \lbleqn{process_true} \\
    \d \widetilde{\xbm}_t = \widetilde{b}(\widetilde{\xbm}_t, t)\d t + c(t) \d \wbm_t \lbleqn{process_error}
\end{align}
where $b(\cdot, \cdot): \R^n \times \R \to \R^n$, $\widetilde{b}(\cdot, \cdot): \R^n \times \R \to \R^n$, and $c(\cdot): \R \to \R$.
Let $\mu_t$ be the distribution of $\xbm_t$ initialized with $\xbm_0 \sim \mu_0$ for \refeqn{process_true}, and let $\widetilde{\mu_t}$ be the distribution of $\widetilde{\xbm}_t$ initialized with $\widetilde{\xbm}_0 \sim \widetilde{\mu}_0$ for \refeqn{process_error}. 
Then we have
\begin{align}
\lbleqn{key_lemma}
    \partialt \mathsf{KL}(\mu_t||\widetilde{\mu}_t) \leq -\frac{c(t)^2}{4} \mathsf{FI}\left(\mu_t || \widetilde{\mu}_t\right) + \frac{1}{c(t)^2} \int \left\|\widetilde{b}_t - b_t \right\|_2^2 \mu_t.
\end{align}
\end{lemma}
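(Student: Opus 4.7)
The plan is to compute $\partial_t \mathsf{KL}(\mu_t \| \widetilde{\mu}_t)$ directly using the Fokker--Planck equations for both processes, then apply integration by parts to produce the Fisher information (as a dissipation term) and a cross term involving the drift discrepancy $\widetilde{b}_t - b_t$, and finally split the cross term via Young's inequality. This is the same scheme used in the Vempala--Wibisono style of analysis for Langevin Monte Carlo and related SDE dynamics.

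First, I would write the Fokker--Planck equations
\begin{equation*}
\partial_t \mu_t = -\nabla\cdot(\mu_t b_t) + \tfrac{c(t)^2}{2}\Delta \mu_t, \qquad \partial_t \widetilde{\mu}_t = -\nabla\cdot(\widetilde{\mu}_t \widetilde{b}_t) + \tfrac{c(t)^2}{2}\Delta \widetilde{\mu}_t,
\end{equation*}
and then expand
\begin{equation*}
\partial_t \mathsf{KL}(\mu_t\|\widetilde{\mu}_t) = \int (\partial_t \mu_t)\log\tfrac{\mu_t}{\widetilde{\mu}_t} - \int \tfrac{\mu_t}{\widetilde{\mu}_t}\,\partial_t\widetilde{\mu}_t,
\end{equation*}
using $\int \partial_t \mu_t = 0$ to drop the stray $\int \partial_t \mu_t$ term. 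Substituting the Fokker--Planck expressions and integrating by parts (assuming sufficient decay of $\mu_t, \widetilde{\mu}_t$ at infinity so that boundary terms vanish) should yield the standard identity
\begin{equation*}
\partial_t \mathsf{KL}(\mu_t\|\widetilde{\mu}_t) = \int \mu_t\, (b_t - \widetilde{b}_t)\cdot \nabla\log\tfrac{\mu_t}{\widetilde{\mu}_t} - \tfrac{c(t)^2}{2}\,\mathsf{FI}(\mu_t\|\widetilde{\mu}_t).
\end{equation*}
The bookkeeping here is the only delicate part: one must carefully combine the drift and diffusion contributions from each process, and use $\nabla \mu_t = \mu_t \nabla \log \mu_t$ (and similarly for $\widetilde{\mu}_t$) to convert Laplacian terms into an inner product against $\nabla \log(\mu_t/\widetilde{\mu}_t)$, producing exactly the $-(c^2/2)\,\mathsf{FI}$ dissipation.

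Finally, I would apply Young's inequality $a\cdot b \leq \tfrac{c(t)^2}{4}\|a\|_2^2 + \tfrac{1}{c(t)^2}\|b\|_2^2$ pointwise with $a = \nabla\log(\mu_t/\widetilde{\mu}_t)$ and $b = \widetilde{b}_t - b_t$, integrated against $\mu_t$, to bound the cross term by $\tfrac{c(t)^2}{4}\mathsf{FI}(\mu_t\|\widetilde{\mu}_t) + \tfrac{1}{c(t)^2}\int \mu_t\|\widetilde{b}_t - b_t\|_2^2$. Combining with the dissipation term above absorbs half of the Fisher information and leaves precisely the claimed bound \eqref{eq:key_lemma}.

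The main obstacle I anticipate is neither step individually but rather the justification of the integration by parts and the interchange of $\partial_t$ with $\int$: these rely on regularity/integrability of $\mu_t, \widetilde{\mu}_t$ and their logarithmic derivatives. In the interest of a clean statement this is usually handled by an implicit regularity assumption (as is standard in the Langevin literature); I would either state such an assumption explicitly or, following the convention in \citep{vempala2019rapid, balasubramanian2022towards}, note that the formal computation is rigorous under mild smoothness of $b_t$, $\widetilde{b}_t$ and positivity of $\widetilde{\mu}_t$.
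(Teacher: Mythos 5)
Your proposal is correct and follows essentially the same route as the paper's proof: differentiate the KL divergence via the two Fokker--Planck equations, integrate by parts to obtain the identity $\partial_t\mathsf{KL}(\mu_t\|\widetilde{\mu}_t) = -\tfrac{c(t)^2}{2}\mathsf{FI}(\mu_t\|\widetilde{\mu}_t) - \int\langle\nabla\log(\mu_t/\widetilde{\mu}_t),\,\widetilde{b}_t-b_t\rangle\,\mu_t$, and absorb the cross term with Young's inequality (the paper phrases this as $-\tfrac{1}{2}a^2-ab\le-\tfrac{1}{4}a^2+b^2$). The paper likewise treats the integration by parts and the interchange of $\partial_t$ with the integral as formal, so your remark on regularity is consistent with its level of rigor.
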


\begin{proof}[Proof of \reflem{key_lemma}]
Writing $b(\cdot, t)$ as $b_t$ and $\widetilde{b}(\cdot, t)$ as $\widetilde{b}_t$, by the Fokker-Planck equations of \refeqn{process_true} and \refeqn{process_error}, we have that
\begin{align*}
    \partialt \mu_t = \div\left[\left(\frac{c(t)^2}{2}\nabla \log \mu_t - b_t\right)\mu_t\right] \quad \text{and} \quad \partialt \widetilde{\mu}_t = \div\left[\left(\frac{c(t)^2}{2}\nabla \log \widetilde{\mu}_t - \widetilde{b}_t\right)\widetilde{\mu}_t\right].
\end{align*}
Defining $\phi(x):=x\log x$ and $\phi^\prime(x)=\frac{\d}{\d x}\phi(x)=\log x+1$, we can calculate
\begin{align*}
    &\partialt \mathsf{KL}(\mu_t||\widetilde{\mu}_t) = \partialt \int \phi\left(\frac{\mu_t}{\widetilde{\mu}_t}\right) \widetilde{\mu}_t \\
    &= \int \phi^\prime\left(\frac{\mu_t}{\widetilde{\mu}_t}\right)\left(\partialt \mu_t - \frac{\mu_t}{\widetilde{\mu}_t}\partialt \widetilde{\mu}_t\right) + \int \phi\left(\frac{\mu_t}{\widetilde{\mu}_t}\right) \partialt \widetilde{\mu}_t \\
    &= \int \phi^\prime\left(\frac{\mu_t}{\widetilde{\mu}_t}\right)\left(\div\left[\left(\frac{c(t)^2}{2}\nabla \log \mu_t - b_t\right)\mu_t\right] - \frac{\mu_t}{\widetilde{\mu}_t} \div\left[\left(\frac{c(t)^2}{2}\nabla \log \widetilde{\mu}_t - \widetilde{b}_t\right)\widetilde{\mu}_t\right]\right) \\
    &\qquad + \int \phi\left(\frac{\mu_t}{\widetilde{\mu}_t}\right) \div\left[\left(\frac{c(t)^2}{2}\nabla \log \widetilde{\mu}_t - \widetilde{b}_t\right)\widetilde{\mu}_t\right] \\
    &= - \int \left\langle\nabla\phi^\prime\left(\frac{\mu_t}{\widetilde{\mu}_t}\right), \frac{c(t)^2}{2}\nabla \log \mu_t - b_t\right\rangle \mu_t + \int \left\langle\nabla\left[\phi^\prime\left(\frac{\mu_t}{\widetilde{\mu}_t}\right)\frac{\mu_t}{\widetilde{\mu}_t}\right], \frac{c(t)^2}{2}\nabla \log \widetilde{\mu}_t - \widetilde{b}_t\right\rangle \widetilde{\mu}_t \\
    &\qquad - \int \left\langle\nabla\phi\left(\frac{\mu_t}{\widetilde{\mu}_t}\right), \frac{c(t)^2}{2}\nabla \log \widetilde{\mu}_t - \widetilde{b}_t\right\rangle \widetilde{\mu}_t \\
    &= - \int \left\langle\nabla\phi^\prime\left(\frac{\mu_t}{\widetilde{\mu}_t}\right), \frac{c(t)^2}{2}\nabla \log \left(\frac{\mu_t}{\widetilde{\mu}_t}\right) - b_t + \widetilde{b}_t \right\rangle \mu_t + \int \left\langle\nabla \frac{\mu_t}{\widetilde{\mu}_t}, \frac{c(t)^2}{2}\nabla \log \widetilde{\mu}_t - \widetilde{b}_t\right\rangle \phi^\prime\left(\frac{\mu_t}{\widetilde{\mu}_t}\right) \widetilde{\mu}_t \\
    &\qquad - \int \left\langle\nabla \frac{\mu_t}{\widetilde{\mu}_t}, \frac{c(t)^2}{2}\nabla \log \widetilde{\mu}_t - \widetilde{b}_t\right\rangle \phi^\prime\left(\frac{\mu_t}{\widetilde{\mu}_t}\right) \widetilde{\mu}_t \\
    &= - \frac{c(t)^2}{2} \int \left\|\nabla\log\left(\frac{\mu_t}{\widetilde{\mu}_t}\right)\right\|_2^2 \mu_t - \int \left\langle\nabla\log\left(\frac{\mu_t}{\widetilde{\mu}_t}\right), \widetilde{b}_t - b_t \right\rangle \mu_t \\
    &\leq - \frac{c(t)^2}{4} \int \left\|\nabla\log\left(\frac{\mu_t}{\widetilde{\mu}_t}\right)\right\|_2^2 \mu_t + \frac{1}{c(t)^2} \int \left\|\widetilde{b}_t - b_t \right\|_2^2 \mu_t \\
    &= - \frac{c(t)^2}{4} \int \left\|\nabla\log\left(\frac{\mu_t}{\widetilde{\mu}_t}\right)\right\|_2^2 \mu_t + \frac{1}{c(t)^2} \int \left\|\widetilde{b}_t - b_t \right\|_2^2 \mu_t \\
    &= - \frac{c(t)^2}{4} \mathsf{FI}\left(\mu_t || \widetilde{\mu}_t\right) + \frac{1}{c(t)^2} \int \left\|\widetilde{b}_t - b_t \right\|_2^2 \mu_t
\end{align*}
where we used the fact that $-\frac{1}{2}a^2-ab\leq -\frac{1}{4}a^2+b^2, \forall a, b \in \R$ for the inequality.
\end{proof}

Now we are ready to prove \refthm{main_result}.

\begin{proof}[Proof of \refthm{main_result}]

We first consider the likelihood steps over $K$ iterations of PnP-DM. 
Applying Lemma 2 of \citep{yuan2023on} to the likelihood steps \refeqn{forward_brownian_bridge} of the non-stationary and stationary processes, we have that 
\begin{align*}
    \partial_\tau \mathsf{KL}(\pi_\tau||\nu_\tau) = -\frac{\rho^2}{2}\mathsf{FI}(\pi_\tau||\nu_\tau) \leq -\frac{\rho^2}{4}\mathsf{FI}(\pi_\tau||\nu_\tau),
\end{align*}
for $\tau\in[k(t^\ast+1),k(t^\ast+1)+1]$ with $k=0,...,K-1$.
Integrating both sides over $\tau\in[k(t^\ast+1),k(t^\ast+1)+1]$, we get 
\begin{align}
\lbleqn{fi_bound_likelihood}
    \int_{k(t^\ast+1)}^{k(t^\ast+1)+1} \mathsf{FI}\left(\pi_\tau||\nu_\tau\right)\d \tau &= \frac{4[\mathsf{KL}(\piX||\nukX)-\mathsf{KL}(\piZ||\nukZ)]}{\rho^2}
\end{align}
for $k=0,...,K-1$.

Then, applying \reflem{key_lemma} to the prior steps \refeqn{edm_reverse_sde_error} with
\begin{align*}
    b(\xbm_t, t) &:= u(t) \xbm_t- v(t)^2 \nabla \log p_t\left(\xbm_t\right) \\
    \widetilde{b}(\xbm_t, t) &:= u(t) \xbm_t- v(t)^2 \sbm_t\left(\xbm_t\right) \\
    c(t) &:= v(t) \\
    \delta &:= \inf_{t\in[0,t^\ast]} v(t),
\end{align*}
we have that 
\begin{align*}
    \partial_\tau \mathsf{KL}(\pi_\tau||\nu_\tau) &\leq - \frac{v(\tau)^2}{4} \mathsf{FI}\left(\pi_\tau||\nu_\tau\right) + \frac{1}{v(\tau)^2} \int \left\|v(\tau)^2\left(\sbm_\tau - \nabla\log p_\tau\right)\right\|_2^2 \pi_\tau \\
    &\leq - \frac{v(\tau)^2}{4} \mathsf{FI}\left(\pi_\tau||\nu_\tau\right) + v(\tau)^2 \int \left\|\sbm_\tau - \nabla\log p_\tau \right\|_2^2 \pi_\tau \\
    &\leq - \frac{\delta^2}{4} \mathsf{FI}\left(\pi_\tau||\nu_\tau\right) + v(\tau)^2 \E_{\pi_\tau}\left\|\sbm_\tau - \nabla\log p_\tau \right\|_2^2 ,
\end{align*}
for $\tau\in[k(t^\ast+1)+1, (k+1)(t^\ast+1)]$ with $k=0,...,K-1$.
Integrating both sides over $\tau\in[k(t^\ast+1)+1, (k+1)(t^\ast+1)]$, we get 
\begin{align}
\lbleqn{fi_bound_prior}
    \int_{k(t^\ast+1)+1}^{(k+1)(t^\ast+1)}\mathsf{FI}\left(\pi_\tau||\nu_\tau\right)\d \tau &\leq \frac{4[\mathsf{KL}(\piZ||\nukZ)-\mathsf{KL}(\piX||\nu_{k+1}^X)]}{\delta^2} + \frac{4\epsilon_\text{score}}{\delta^2}
\end{align}
where 
\begin{align*}
    \epsilon_\text{score} := \int_{k(t^\ast+1)+1}^{(k+1)(t^\ast+1)} v(\tau)^2 \E_{\pi_\tau} \left\|\sbm_\tau - \nabla\log p_\tau \right\|_2^2 \d\tau = \int_{1}^{t^\ast+1} v(\tau)^2 \E_{\pi_\tau} \left\|\sbm_\tau - \nabla\log p_\tau \right\|_2^2 \d\tau.
\end{align*}

Finally, combining \refeqn{fi_bound_likelihood} and \refeqn{fi_bound_prior} for $k=0,...,K-1$, we obtain 
\begin{align*}
    \int_0^{T} \mathsf{FI}\left(\pi_\tau||\nu_\tau\right)\d \tau &\leq \frac{4[\mathsf{KL}(\piX||\nuzeroX)-\mathsf{KL}(\piX||\nuKX)]}{\min(\rho, \delta)^2} + \frac{4K\epsilon_\text{score}}{\delta^2} \\
    &\leq \frac{4\mathsf{KL}(\piX||\nuzeroX)}{\min(\rho, \delta)^2} + \frac{4K\epsilon_\text{score}}{\delta^2}.
\end{align*}
The proof is concluded by dividing $T=K(t^\ast+1)$ on both sides.
\end{proof}

\subsection{Discussion}
\lblapp{theory_discussion}

To facilitate the discussion, we first present the following proposition.

\begin{proposition}
\lblprop{alternative}
Define a weighting function $\lambda(\tau)$ over $\tau \in [0, T]$ such that for $k=0,...,K-1$, 
\begin{align*}
    \lambda(\tau) =
    \begin{cases}
        \rho^2 & \text{if } \tau \in[k(t^\ast+1),k(t^\ast+1)+1], \\
        v(\tau)^2 & \text{if } \tau \in[k(t^\ast+1)+1, (k+1)(t^\ast+1)].
    \end{cases}
\end{align*}
Then, under the same settings of \refthm{main_result}, we have
\begin{align}
\lbleqn{weighting_bound}
    \frac{1}{T} \int_0^{T} \lambda(\tau) \mathsf{FI}\left(\pi_\tau || \nu_\tau\right) \d \tau = \frac{4\mathsf{KL}(\piX||\nuzeroX)}{K(t^\ast+1)} + \frac{4\epsilon_\text{score}}{t^\ast+1}
\end{align}
where $\epsilon_\text{score} := \int_{1}^{t^\ast+1}v(\tau)^2\E_{\pi_\tau}\|\sbm_\tau - \nabla\log p_\tau\|_2^2\d\tau$.
\end{proposition}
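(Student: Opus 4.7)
The plan is to re-do the proof of \refthm{main_result} while keeping the step-dependent weight $\lambda(\tau)$ inside the integral, rather than lower-bounding it by $\min(\rho,\delta)^2$ (on likelihood sub-intervals) or $\delta^2$ (on prior sub-intervals) as was done to obtain \refeqn{stationarity}. By not pulling out these uniform lower bounds, the proof of \refthm{main_result} becomes, up to the final division by $T$, exactly the identity claimed in \refeqn{weighting_bound}. I should remark at the outset that because both Lemma 2 of~\citep{yuan2023on} (used on likelihood intervals) and \reflem{key_lemma} (used on prior intervals) are applied through the AM-GM style inequality $-\tfrac12 a^2 - ab \le -\tfrac14 a^2 + b^2$, the natural conclusion is an inequality ``$\le$'' rather than an equality ``$=$''; I would therefore interpret \refeqn{weighting_bound} as stated with ``$\le$'' and prove that version (the argument also yields the equality if one keeps the cross term exactly, but then $\epsilon_\text{score}$ no longer has the clean form given).

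First I would partition $[0,T]$ into the $2K$ sub-intervals $I_k^{\text{lik}}:=[k(t^\ast+1),\,k(t^\ast+1)+1]$ and $I_k^{\text{pri}}:=[k(t^\ast+1)+1,\,(k+1)(t^\ast+1)]$ for $k=0,\dots,K-1$, on which $\lambda(\tau)$ equals $\rho^2$ and $v(\tau)^2$ respectively. On $I_k^{\text{lik}}$, the Brownian-bridge identity from \refprop{likelihood} combined with Lemma 2 of~\citep{yuan2023on} gives $\partial_\tau \mathsf{KL}(\pi_\tau \| \nu_\tau) \le -\tfrac{\rho^2}{4}\mathsf{FI}(\pi_\tau\|\nu_\tau)$, so integrating and rearranging yields
\begin{align*}
\int_{I_k^{\text{lik}}} \lambda(\tau)\,\mathsf{FI}(\pi_\tau\|\nu_\tau)\,\d\tau
\;\le\; 4\bigl[\mathsf{KL}(\piX\|\nukX) - \mathsf{KL}(\piZ\|\nukZ)\bigr].
\end{align*}

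Next, on $I_k^{\text{pri}}$, I invoke \reflem{key_lemma} with $b(\xbm_\tau,\tau) = u(\tau)\xbm_\tau - v(\tau)^2\nabla\log p_\tau(\xbm_\tau)$, $\widetilde b(\xbm_\tau,\tau) = u(\tau)\xbm_\tau - v(\tau)^2 \sbm_\tau(\xbm_\tau)$, and $c(\tau)=v(\tau)$, which gives $\partial_\tau\mathsf{KL}(\pi_\tau\|\nu_\tau) \le -\tfrac{v(\tau)^2}{4}\mathsf{FI}(\pi_\tau\|\nu_\tau) + v(\tau)^2\,\E_{\pi_\tau}\|\sbm_\tau - \nabla\log p_\tau\|_2^2$. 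Rearranging and integrating over $I_k^{\text{pri}}$,
\begin{align*}
\int_{I_k^{\text{pri}}} \lambda(\tau)\,\mathsf{FI}(\pi_\tau\|\nu_\tau)\,\d\tau
\;\le\; 4\bigl[\mathsf{KL}(\piZ\|\nukZ) - \mathsf{KL}(\piX\|\nu_{k+1}^X)\bigr] + 4\epsilon_\text{score},
\end{align*}
where the $\epsilon_\text{score}$ term is, by stationarity of the $\pi$-process, independent of $k$ and matches the definition in the statement.

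Finally, I sum the two displays above over $k=0,\dots,K-1$; the intermediate $\mathsf{KL}(\piZ\|\nukZ)$ terms cancel in pairs and the $\mathsf{KL}(\piX\|\nukX)$ terms telescope, leaving $4[\mathsf{KL}(\piX\|\nuzeroX) - \mathsf{KL}(\piX\|\nuKX)] + 4K\epsilon_\text{score}$, which is at most $4\mathsf{KL}(\piX\|\nuzeroX) + 4K\epsilon_\text{score}$ because KL divergence is non-negative. Dividing by $T = K(t^\ast+1)$ produces \refeqn{weighting_bound}. There is no real obstacle here beyond the bookkeeping; the only conceptual point worth emphasizing is that this formulation makes the role of $\lambda(\tau)$ transparent, since \refeqn{stationarity} is immediately recovered from \refeqn{weighting_bound} by the crude bound $\lambda(\tau)\ge \min(\rho,\delta)^2$ on likelihood intervals and $\lambda(\tau)\ge \delta^2$ on prior intervals, justifying calling \refprop{alternative} a tighter restatement of \refthm{main_result}.
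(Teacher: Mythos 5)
Your proof is correct and follows the paper's own argument essentially verbatim: integrate the weighted differential (in)equalities coming from Lemma 2 of \citep{yuan2023on} on the likelihood sub-intervals and from \reflem{key_lemma} on the prior sub-intervals, telescope the KL terms, drop $-\mathsf{KL}(\piX||\nuKX)\le 0$, and divide by $T=K(t^\ast+1)$. Your remark that the stated ``$=$'' in \refeqn{weighting_bound} should really be ``$\leq$'' is also well taken --- the paper's own derivation passes through the relaxation $-\tfrac{1}{2}a^2-ab\leq-\tfrac{1}{4}a^2+b^2$ and therefore only establishes the inequality, so the equality signs in the statement and in the intermediate displays are best read as typos.
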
 
\begin{proof}
With the definition of $\lambda(\tau)$, we can apply Lemma 2 of \citep{yuan2023on} to the likelihood steps and obtain
\begin{align}
\lbleqn{fi_weighting_bound_likelihood}
    \int_{k(t^\ast+1)}^{k(t^\ast+1)+1} \lambda(\tau) \mathsf{FI}\left(\pi_\tau||\nu_\tau\right)\d \tau &= 4[\mathsf{KL}(\piX||\nukX)-\mathsf{KL}(\piZ||\nukZ)]
\end{align}
for $k=0,...,K-1$.
Similarly, we can apply \reflem{key_lemma} to the prior steps and obtain
\begin{align}
\lbleqn{fi_weighting_bound_prior}
    \int_{k(t^\ast+1)+1}^{(k+1)(t^\ast+1)} \lambda(\tau) \mathsf{FI} \left(\pi_\tau||\nu_\tau\right)\d \tau &\leq 4[\mathsf{KL}(\piZ||\nukZ)-\mathsf{KL}(\piX||\nu_{k+1}^X)] + 4\epsilon_\text{score}
\end{align}
where
\begin{align*}
    \epsilon_\text{score} := \int_{k(t^\ast+1)+1}^{(k+1)(t^\ast+1)} v(\tau)^2 \E_{\pi_\tau} \left\|\sbm_\tau - \nabla\log p_\tau \right\|_2^2 \d\tau = \int_{1}^{t^\ast+1} v(\tau)^2 \E_{\pi_\tau} \left\|\sbm_\tau - \nabla\log p_\tau \right\|_2^2 \d\tau.
\end{align*}
Together, for $\tau\in[0, T]$. We can then get \refeqn{weighting_bound} by combining \refeqn{fi_weighting_bound_likelihood} and \refeqn{fi_weighting_bound_prior} for $k=0,...,K-1$ and dividing by $T:=K(t^\ast+1)$.
\end{proof}

Unlike \refthm{main_result}, this proposition calculates the weighted average of the Fisher information along the two processes with the weighting function $\lambda(\tau)$.
The bound in \refthm{main_result} on the unweighted average of Fisher information can be obtained by further lower-bounding the left hand side of \refeqn{weighting_bound} using the infimum of $\lambda(\tau)$ over $\tau \in [0, T]$.
Given this observation, we can see the role of $\delta$ in \refthm{main_result}.
With a strictly positive $\delta$, the weighting function $\lambda(\tau)$ is always strictly positive, so the (unweighted) average Fisher information must converge to 0.
This is precisely the case for the VP- and VE-SDE \citep{song2021scorebased}.
On the other hand, if $\delta = 0$, the Fisher information $\mathsf{FI}\left(\pi_\tau || \nu_\tau\right)$ may be increasingly large as $\lambda(\tau)$ gets closer to 0.
For iDDPM and EDM, this could happen near $t=0$ in the reverse diffusion at $v(0)=0$ for these diffusion processes.
Nevertheless, we can instead consider a slightly adjusted diffusion coefficient $\tilde{v}(t):=v(t) + \epsilon$ with $\epsilon > 0$. Using the relation between scores and diffusions $\div(p\nabla \log p) = \Delta p$, we get the following reverse SDE which has the same law as \refeqn{sde} at each $t$:
    \[ \d \xbm_t=\left[\frac{\sdot(t)}{s(t)} \xbm_t+\left(\frac{\epsilon^2}{2}-2s(t)^2 \sigmadot(t) \sigma(t)\right) \nabla \log p\left(\frac{\xbm_t}{s(t)} ; \sigma(t)\right)\right] \d t + \left(s(t)\sqrt{2\sigmadot(t)\sigma(t)}+\epsilon\right)\d\bar{\wbm}_t.\]
In this case, $\tilde{v}(t) = s(t)\sqrt{2\sigmadot(t)\sigma(t)}+\epsilon$ is strictly positive, so the convergence on the unweighted average Fisher information is also guaranteed.

\section{Inverse problem setup}
\lblapp{inverse_problem_setup}

\paragraph{Data usage} We list the data we have used for our experiments:
\begin{itemize}
    \item For the synthetic prior experiment, we took images from the CelebA dataset \citep{liu2015faceattributes}, turned them into grayscale, rescaled them to $[-1, 1]$, and resized them to $32\times 32$ pixels for efficient computation. We then found the empirical mean and covariance of the images to construct the Gaussian image prior. The test image was randomly drawn from this Gaussian prior.
    \item For the benchmark experiments, we used the first 100 images (index 00000 to 00099) in the FFHQ dataset \citep{karras2018astylebased}. For all linear inverse problems, the test images were in RGB and normalized to range $[-1, 1]$. For all nonlinear problems, the test images were in grayscale  and normalized to range $[0, 1]$.
    \item For the black hole experiments, we used the simulated data used in \citep{sun2021deep} and the publicly available EHT 2017 data\footnote{\href{https://eventhorizontelescope.org/blog/public-data-release-event-horizon-telescope-2017-observations}{https://eventhorizontelescope.org/blog/public-data-release-event-horizon-telescope-2017-observations}} that was used to produce the first image of the M87 black hole.
\end{itemize}

\paragraph{Gaussian and motion deblur} 
The forward model is defined as
\begin{align*}
    \ybm \sim \Ncal(\Bbm \xbm, \sigma_\ybm^2 \Ibm)
\end{align*}
where $\Bbm \in \R^{n\times n}$ is a circulant matrix that effectively implements a convolution with kernel $\kbm$ under the circular boundary condition.
For the Gaussian deblurring problem, we fixed the kernel $\kbm$ as a Gaussian kernel with standard deviation $3.0$ and size $61\times 61$.
For the motion deblurring problem, we randomly generated the kernel $\kbm$ for each test image using the code\footnote{\href{https://github.com/LeviBorodenko/motionblur}{https://github.com/LeviBorodenko/motionblur} (license unknown)} with intensity of $0.5$ and size $61\times 61$.
For fair comparison, the blur kernel for each test image was set the same for all compared methods.

\paragraph{Super-resolution} 
The forward model is defined as
\begin{align*}
    \ybm \sim \Ncal(\Pbm_f \xbm, \sigma_\ybm^2 \Ibm)
\end{align*}
where $\Pbm_f \in \R^{\frac{n}{f}\times n}$ is a matrix that implements a block averaging filter to downscale the images by a factor of $f$.
Specifically, we set $f=4$ and used the SVD implementation from the code\footnote{\href{https://github.com/bahjat-kawar/ddrm}{https://github.com/bahjat-kawar/ddrm} (MIT license)} of \citep{kawar2022denoising}.

\paragraph{Coded diffraction patterns (CDP)} 
CDP is a measurement model originally proposed in \citep{candes2013phase}.
The target $\xbm$ is illuminated by a coherent source and modulated by a phase mask $\Dbm$.
The light field then undergoes the far-field Fraunhofer diffraction and is measured by a standard camera.
Mathematically, the forward model of CDP is defined as  
\begin{align*}
    \ybm \sim \Ncal(|\Fbm\Dbm\xbm|, \sigma_\ybm^2 \Ibm)
\end{align*}
where $\Fbm$ denotes the 2D Fourier transform.
We followed \citep{wu2019online} to set $\Dbm$ as a diagonal matrix with entries drawn randomly from the complex unit circle.

\paragraph{Fourier phase retrieval} 
We adopted a similar setting as \citep{chung2022score}. In particular, the forward model is defined as 
\begin{align*}
    \ybm \sim \Ncal(|\Fbm\Pbm\xbm|, \sigma_\ybm^2 \Ibm),
\end{align*}
where $\Pbm$ denotes the oversampling matrix that effectively pads $\xbm$ in 2D matrix form with zeros.
We considered a 4$\times$ oversampling ratio for grayscale images of size $256\times 256$, so $\Pbm\xbm$ has a size of $512\times 512$.

\paragraph{Black hole imaging}
We adopted the same BHI setup as in \citep{sun2021deep, sun2023provable}. 
The relationship between the black hole image and each interferometric measurement, or so-called \emph{visibility}, is given by
\begin{equation}
\lbleqn{visibility}
V_{a,b}^t = g_a^t g_b^t \cdot e^{-i(\phi_a^t-\phi_b^t)} \cdot \Fbm_{a,b}^t(\xbm) + \eta_{a,b} \in \C,
\end{equation}
where $a$ and $b$ denote a pair of telescopes, $t$ represents the time of measurement acquisition, $i$ is the imaginary unit, and $\Fbm_{a,b}^t(\xbm)$ is the Fourier component of the image $\xbm$ corresponding to the baseline between telescopes $a$ and $b$ at time $t$. 
In practice, there are three main sources of noise in \refeqn{visibility}: gain error $g_a$ and $g_b$ at the telescopes, phase error $\phi_a^t$ and $\phi_b^t$, and baseline-based additive white Gaussian noise $\eta_{a,b}$. 
The gain and phase errors stem from atmospheric turbulence and instrument miscalibration and often cannot be ignored.
To correct for these two errors, multiple noisy visibilities can be combined into data products that are invariant to these errors, which are called \emph{closure phase} and \emph{log closure amplitude} measurements \citep{chael2018chael}
\begin{align*}
&\ybm^\mathsf{cph}_{t,(a,b,c)} = \angle(V_{a,b}V_{b,c}V_{a,c}) := \Acal^\mathsf{cph}_{t,(a,b,c)}(\xbm), \\
&\ybm^\mathsf{logcamp}_{t,(a,b,c,d)} = \log\left( \frac{|V_{a,b}^t| |V_{c,d}^t|}{|V_{a,c}| |V_{b,d}^t|} \right) := \Acal^\mathsf{logcamp}_{t,(a,b,c,d)}(\xbm),
\end{align*}
where $\angle$ computes the angle of a complex number. 
Given a total of $M$ telescopes, there are in total $\frac{(M-1)(M-2)}{2}$ closure phase and $\frac{M(M-3)}{2}$ log closure amplitude measurements at time $t$, after eliminating repetitive measurements.
In our experiments, we used a 9-telescope array ($M=9$) from the Event Horizon Telescope (EHT) and constructed the data likelihood term based on these nonlinear closure quantities. 
Additionally, because the closure quantities do not constrain the total flux (i.e. summation of the pixel values) of the underlying black hole image, we added a constraint on the total flux in the likelihood term. 
The overall potential function of the likelihood is given by
\begin{equation}
\lbleqn{bhi_likelihood}
f(\xbm; \ybm) = 
\sum_{t,\mathsf{c}} \frac{\|\Acal^\mathsf{cph}_{t, \mathsf{c}}(\xbm) - \ybm^\mathsf{cph}_{t,\mathsf{c}}\|^2_2}{2\sigma_\mathsf{cph}^2}
+ \sum_{t,\mathsf{d}}\frac{\|\Acal^\mathsf{logcamp}_{t, \mathsf{d}}(\xbm) - \ybm^\mathsf{logcamp}_{t,\mathsf{d}}\|_2^2}{2\sigma_\mathsf{logcamp}^2}
+\frac{\left\|\sum_{i}\xbm_i-\ybm^\mathsf{flux}\right\|_2^2}{2\sigma_\mathsf{flux}^2}.
\end{equation}
In this equation, $\ybm^\mathsf{flux}$ is the total flux of the underlying black hole, which can be accurately measured.
We use $\ybm:=(\ybm^\mathsf{cph}, \ybm^\mathsf{logcamp}, \ybm^\mathsf{flux})$ to denote all the measurements and $\mathsf{c}$, $\mathsf{d}$ as the indices for the closure phase and log closure amplitude measurements.
Parameters $\sigma_\mathsf{cph}$, $\sigma_\mathsf{logcamp}$ were given by the telescope system and $\sigma_\mathsf{flux}$ was set to $\sqrt{2}$ in our experiments to constrain the total flux.
The data mismatch metric reported in \reffig{blackhole} is defined as the sum of the reduced $\chi^2$ values for the closure phase and log closure amplitude measurements, which are calculated using the \texttt{ehtim.obsdata.Obsdata.chisq} function of the \texttt{ehtim} package\footnote{\href{https://github.com/achael/eht-imaging}{https://github.com/achael/eht-imaging} (GPL-3.0 license)}.
Both $\chi^2$ values should ideally be around 1 for data with high signal-to-noise ratio (SNR). 
Therefore, a data mismatch value around 2 to 3 is considered as fitting the measurements well.

\section{Technical details of PnP-DM}
\lblapp{technical_details}

\subsection{Likelihood step}
\lblapp{likelihood_step}

\paragraph{Linear forward model and Gaussian noise}
As we showed in the main paper, in case of linear forward models and Gaussian noise, the likelihood step is 
\begin{align*}
    \piZXx = \Ncal(\mbm(\xbm), \Lambdainv)
\end{align*}
where $\Lambdabm:=\Abm^T\Sigmainv \Abm + \frac{1}{\rho^2}\Ibm$ and $\mbm(\xbm):=\Lambdainv(\Abm^T\Sigmainv \ybm + \frac{1}{\rho^2}\xbm)$.
The bottleneck here is that both the mean and the covariance involve the matrix inverse $\Lambdainv$, which can be prohibitive to compute directly for high-dimensional problems. 
Nevertheless, the computational cost can be significantly alleviated when the noise is i.i.d. Gaussian, i.e. $\Sigmabm = \sigma_\ybm^2 \Ibm$, and $\Abm$ can be efficiently decomposed.
For example, if one can efficiently calculate the SVD of the forward model $\Abm$, i.e. $\Abm = \Ubm \Sbm \Vbm^T$, one can find the Cholesky decomposition of $\Lambdainv$ as 
\begin{align*}
    \Lambdainv = \Lbm \Lbm^T \quad \text{where} \quad \Lbm := \Vbm \left(\frac{1}{\sigma^2}\Sbm^2 + \frac{1}{\rho^2}\Ibm\right)^{-1/2}.
\end{align*}
Since $\Sbm$ is a diagonal matrix, the second term can be calculated with only $O(n)$ complexity. 
Then, leveraging the property of multivariate Gaussian distribution, we can sample $\etabm \sim \Ncal(\bm{0}, \Ibm)$ and calculate $\zbm = \mbm(\xbm) + \Lbm \etabm$ as a sample that exactly follows the target Gaussian distribution $\Ncal(\mbm(\xbm), \Lambdainv)$.
An analogous derivation with Fourier transform can be done when $\Abm$ is a circulant convolution matrix.

\paragraph{Nonlinear forward model}
We provide the pseudocode of the LMC algorithm for sampling the likelihood step with general differentiable forward models in \refalg{lmc}.
\begin{algorithm}[ht]
\caption{Langevin Monte Carlo for the likelihood step under general $\Acal$}
\lblalg{lmc}
\begin{algorithmic}[1]
\Require state $\xbm$, coupling strength $\rho > 0$, likelihood potential $f(\,\cdot\,; \ybm)$ with measurements $\ybm$
\Ensure step size $\gamma > 0$, number of iterations $J > 0$
\State $\ubm_0 \leftarrow \xbm$
\For{$j = 0, \cdots, J-1$}
    \State $\ubm_{j+1} \leftarrow \ubm_j - \gamma \nabla f(\ubm_j; \ybm) - \frac{\gamma}{\rho^2}(\ubm_j - \xbm) + \sqrt{2\gamma} \epsbm_j \quad$ where $\quad \epsbm_j \sim \Ncal(\bm{0}, \Ibm)$
\EndFor
\State \Return $\ubm_J$
\end{algorithmic}
\end{algorithm}

\reftab{hyperparameter_likelihood} summarizes the hyperparameters we used for solving the nonlinear inverse problems considered in this work.
\begin{table*}[ht]
\scriptsize
\newcolumntype{D}{>{\centering\arraybackslash}p{80pt}}
\newcolumntype{C}{>{\centering\arraybackslash}p{40pt}}
\definecolor{LightCyan}{rgb}{0.88,1,1}
\centering
\caption{List of hyperparameters for the likelihood step of PnP-DM}
\begin{tabularx}{230pt}{DCD}
\toprule
Inverse problem & Step size ($\gamma$) & Number of iterations ($J$) \\
\midrule
Coded diffraction patterns & 1.0e-3 & 100 \\
Fourier phase retrieval & 1.0e-4 & 100 \\
Black hole imaging & 1.0e-5 & 200 \\
\bottomrule
\end{tabularx}
\lbltab{hyperparameter_likelihood}
\vspace{-10pt}
\end{table*}

\subsection{Prior step}
\lblapp{prior_step}

\paragraph{The EDM framework}

We formally introduce the EDM formulation \citep{karras2022elucidating} using our notations. 
The forward diffusion process is defined as the following linear It\^o SDE
\begin{align}
\lbleqn{forward_sde}
    \d\xbm_t=u(t)\xbm_t\d t+v(t)\d \wbm_t,
\end{align}
where $u(t): \R \to \R$, $v(t): \R \to \R$ are the drift and diffusion coefficients.
The generative process is the time-reversed version of \refeqn{forward_sde}.
According to \citep{anderson1982reverse}, it is another It\^o SDE of the form
\begin{align}
\lbleqn{backward_sde}
    \d\xbm_t=\left[u(t)\xbm_t-v(t)^2\nabla_{\xbm_t}\log p_t(\xbm_t)\right]\d t+v(t)\d \bar{\wbm}_t,
\end{align}
where $p_t(\xbm_t)$ is the marginal distribution of $\xbm_t$.
There also exists a reverse probability flow ODE
\begin{align}
\lbleqn{backward_ode}
    \d\xbm_t=\left[u(t)\xbm_t-\frac{1}{2}v(t)^2\nabla_{\xbm_t}\log p_t(\xbm_t)\right]\d t,
\end{align}
which shares the same marginal distributions as \refeqn{backward_sde}.
Based on \refeqn{forward_sde}, we have
\begin{align*}
    p(\xbm_t|\xbm_0) = \Ncal(s(t)\xbm_0, s(t)^2\sigma(t)^2\Ibm)
\end{align*}
where $s(t) := \exp \left(\int_0^t u(\xi) \mathrm{d} \xi\right) $ and $\sigma(t) := \sqrt{\int_0^t \frac{v(\xi)^2}{s(\xi)^2} \mathrm{~d} \xi}$. 
We also have $\xbm_t / s(t) \sim p(\xbm; \sigma(t))$ where $p(\xbm; \sigma(t))$ is the distribution obtained by adding i.i.d. Gaussian noise of standard deviation $\sigma(t))$ to the prior data. 
The idea of the EDM formulation is to write the reverse diffuison directly in terms of the scaling and noise level of $\xbm_t$ with respect to $\xbm_0$, which are more important than the drift and diffusion coefficients.
With the relations between $u(t)$, $v(t)$, $p_t$ and  $s(t)$, $\sigma(t)$, $p(\cdot; \sigma(t))$, we can rewrite \refeqn{backward_sde} and \refeqn{backward_ode} as 
\begin{align}
\lbleqn{backward_sde_edm}
    \d\xbm_t=\left[\frac{\sdot(t)}{s(t)}\xbm_t-2s(t)^2 \sigmadot(t) \sigma(t)\nabla_{\xbm_t}\log p\left(\frac{\xbm_t}{s(t)} ; \sigma(t)\right)\right]\d t + s(t)\sqrt{2\sigmadot(t)\sigma(t)}\d\bar{\wbm}_t
\end{align}
and 
\begin{align}
\lbleqn{backward_ode_edm}
    \d\xbm_t=\left[\frac{\sdot(t)}{s(t)}\xbm_t-s(t)^2\sigmadot(t)\sigma(t)\nabla_{\xbm_t}\log p\left(\frac{\xbm_t}{s(t)} ; \sigma(t)\right)\right]\d t.
\end{align}
Note that \refeqn{backward_sde_edm} is precisely the SDE \refeqn{sde} we considered for the prior step.
Finally, due to the Tweedie's formula \citep{efron2011tweedie}, we can approximate $\nabla_{\xbm_t}\log p\left(\,\cdot\, ; \sigma(t)\right)$ by a denoiser $[D_\theta(\,\cdot\,; \sigma(t))-\,\cdot\,]/\sigma(t)^2$ trained to minimize the $\ell_2$ error of a denoising objective.
Substituting the score function by the approximation with the denoiser and using the chain rule, we can further rewrite \refeqn{backward_sde_edm} and \refeqn{backward_ode_edm} as 
\begin{align}
\lbleqn{backward_sde_edm_denoiser}
    \d\xbm_t=\left[\left(\frac{2\sigmadot(t)}{\sigma(t)}+\frac{\sdot(t)}{s(t)}\right)\xbm_t-\frac{2\sigmadot(t)s(t)}{\sigma(t)}D_\theta\left(\frac{\xbm_t}{s(t)} ; \sigma(t)\right)\right]\d t + s(t)\sqrt{2\sigmadot(t)\sigma(t)}\d\bar{\wbm}_t
\end{align}
and 
\begin{align}
\lbleqn{backward_ode_edm_denoiser}
    \d\xbm_t=\left[\left(\frac{\sigmadot(t)}{\sigma(t)}+\frac{\sdot(t)}{s(t)}\right)\xbm_t-\frac{\sigmadot(t)s(t)}{\sigma(t)}D_\theta\left(\frac{\xbm_t}{s(t)} ; \sigma(t)\right)\right]\d t.
\end{align}

\paragraph{Pseudocode} We provide the pseudocode for our prior step in \refalg{denoising}.
Note that the update rule is precisely the Euler discretization of \refeqn{backward_sde_edm_denoiser} and \refeqn{backward_ode_edm_denoiser}.
The discretization time steps $\{t_i\}_{i=0}^N$, scaling schedule $s(\cdot)$, and noise schedule $\sigma(\cdot)$, are kept the same as in Table 1 of \citep{karras2022elucidating}.
For all experiments, we set the total number of time steps to 100, i.e. $N=100$.
We note that this does not imply that each prior step has a number of function evaluations (NFE) equal to 100. 
Since $\rho$ is to a small value as the algorithm runs, the number of steps in later iterations of the algorithm are much fewer than 100.
The prior step is similar to the image synthesis process in SDEdit \citep{meng2021sdedit} that starts from the middle of the reverse diffusion process. 
We used the \texttt{pf-ODE} solver for the CDP problem and the \texttt{SDE} solver for all other problems.
Part of our code implementation is based on the repository\footnote{\href{https://github.com/NVlabs/edm/tree/main}{https://github.com/NVlabs/edm/tree/main} (Creative Commons Attribution-NonCommercial-ShareAlike 4.0 International
Public license)}.

\begin{algorithm}[ht]
\caption{EDM for the prior step (Bayesian denoising with noise level $\rho$)}
\lblalg{denoising}
\begin{algorithmic}[1]
\Require noisy image $\zbm \in \R^n$, assumed noise level $\rho > 0$, pretrained model $D_\theta(\,\cdot\,; \,\cdot\,)$ that approximates $\nabla \log p\left(\xbm ; \sigma\right)$ with $(D_\theta(\xbm; \sigma) - \xbm)/\sigma^2$
\Ensure discretization time steps $\{t_i\}_{i=0}^N$ (monotonically decreasing to $t_N=0$), scaling schedule $s(\cdot)$, noise schedule $\sigma(\cdot)$, solver (\texttt{SDE} or \texttt{pf-ODE})
\State $i^\ast \leftarrow \text{smallest $i$ such that $\sigma(t_i) \leq \rho$}$ \Comment{Find the starting point of the reverse diffusion}
\State $\vbm_{i^\ast} \leftarrow s(t_{i^\ast})\zbm$ \Comment{Initialize at time $t_{i^\ast}$}
\For{$i = i^\ast, \cdots, N-1$}
    \State $\lambda \leftarrow 2$ \textbf{if} solver is \texttt{SDE} \textbf{else} $1$
    \State $\dbm_i \leftarrow\left(\frac{\lambda\sigmadot\left(t_i\right)}{\sigma\left(t_i\right)}+\frac{\sdot\left(t_i\right)}{s\left(t_i\right)}\right) \vbm_i-\frac{\lambda\sigmadot\left(t_i\right) s\left(t_i\right)}{\sigma\left(t_i\right)} D_\theta\left(\frac{\vbm_i}{s\left(t_i\right)} ; \sigma\left(t_i\right)\right)$
    \State $\vbm_{i+1} \leftarrow \vbm_{i} + (t_{i+1} - t_i) \dbm_i$ \Comment{Drift}
    \If{$i\neq N-1$ \textbf{and} solver is \texttt{SDE}}
        \State $\vbm_{i+1} \leftarrow \vbm_{i+1} + s(t)\sqrt{2\sigmadot(t)\sigma(t)(t_i - t_{i+1})}\epsbm_i \quad \text{where} \quad \epsbm_i \sim \Ncal(\bm{0}, \Ibm)$ \Comment{Diffusion}
    \EndIf
\EndFor
\State \Return $\vbm_N$
\end{algorithmic}
\end{algorithm}

\paragraph{Model checkpoint} 
For experiments with FFHQ color images, we used the pretrained checkpoint from \citep{choi2022perception} available at the repository\footnote{\href{https://github.com/jychoi118/P2-weighting}{https://github.com/jychoi118/P2-weighting} (MIT license)}.
For experiments with synthetic data, FFHQ grayscale images, and black hole images, we trained our own models using the same repository.
The model network is based on the U-Net architecture in \citep{nichol2021improved} with BigGAN \citep{brock2018large} residual blocks, multi-resolution attention, and
multi-head attention with fixed channels per head.
See the appendix of \citep{choi2022perception} for architecture details.
Specifically, we changed the input and output channels to 1 and 2, respectively, to accommodate grayscale inputs, and reduced the number of down-pooling and up-pooling levels in the U-Net for smaller images.
We trained all models until convergence using an exponential moving average (EMA) rate of 0.9999, 32-bit precision, and the AdamW optimizer \citep{loshchilov2018decoupled}.
Here is a list of training data we used for each model:
\begin{itemize}
    \item For the Gaussian prior model, we randomly generated images from the constructed Gaussian prior distribution.
    \item For the FFHQ grayscale model, we used the images with index 01000 to 69999 in the FFHQ dataset.
    \item For the black hole model, we used 3068 simulated black hole images from the GRMHD simulation, which stands for \textit{general relativistic magnetohydrodynamic} simulation \citep{event2019first_paper5}. See \reffig{grmhd_examples} for some example training images. We applied data augmentation with random flipping and resizing, so that the flux spin rotation and the ring diameter vary from sample to sample.
\end{itemize}

\begin{figure*}[ht]
    \centering
    \includegraphics[width=\textwidth]{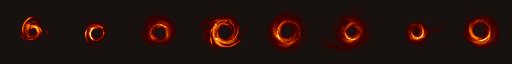}
    \vspace{-20pt}
    \caption{Example images from the dataset for training the black hole diffusion model prior.}
    \vspace{-10pt}
    \lblfig{grmhd_examples}
\end{figure*}

\paragraph{Preconditioning}
Since the checkpoints we used are all trained based on the DDPM (or VP-SDE) formulation \citep{ho2020denoising}, we converted them to the denoiser $D_\theta$ under the EDM formulation via the VP preconditioning \citep{karras2022elucidating}. 
Specifically, if we denote the pretrained model as $F_\theta(\,\cdot\, ; \,\cdot\,)$, the model we used for \refalg{denoising} is
\begin{align}
\lbleqn{vp_preconditioning}
    D_\theta(\xbm ; \sigma):=c_{\text{skip}}(\sigma) \xbm+c_{\text{out}}(\sigma) F_\theta\left(c_{\text{in}}(\sigma) \xbm ; c_{\text{noise}}(\sigma)\right),
\end{align}
where $c_{\text{skip}}(\sigma)=1$, $c_{\text{out}}(\sigma)=-\sigma$, $c_{\text{in}}(\sigma)=1/\sqrt{\sigma^2+1}$, and $c_{\text{noise}}(\sigma)=999\sigma^{-1}_\mathsf{VP}(\sigma)$.
Here $\sigma^{-1}_\mathsf{VP}(\cdot)$ is the inverse of the VP-SDE noise schedule defined as $\sigma_\mathsf{VP}(t):=\sqrt{e^{\frac{1}{2} \beta_\text{d} t^2+\beta_\text{min} t}-1}$ with $\beta_\text{d} = 19.9$ and $\beta_\text{min}=0.1$.
This adaption allows us to make a fair comparison with other DM-based methods using the same pretrained models.
One can also incorporate DMs trained with other formulations into PnP-DM by properly setting the preconditioning parameters.

\paragraph{Connection to \texttt{DDS-DDPM} in \citep{xu2024provably}}
A concurrent work \citep{xu2024provably} introduced a rigorous implementation of the prior step, called \texttt{DDS-DDPM}, by converting the DDPM \citep{ho2020denoising} (or VP-SDE \citep{song2021scorebased}) sampler into a reverse diffusion based on the VE-SDE \citep{song2021scorebased}.
The diffusion process after the conversion can be used to solve \refeqn{denoising} rigorously by properly choosing the starting point.
In fact, our formulation admits \texttt{DDS-DDPM} as a special case with the VP preconditioning and reverse diffusion based on the VE-SDE. 
Here we explicitly show this connection.
For the VE-SDE, we have $s_\mathsf{VE}(t)=1$, $\sigma_\mathsf{VE}(t)=\sqrt{t}$, $u_\mathsf{VE}(t)=0$, and $v_\mathsf{VE}(t)=1$. 
So \refeqn{backward_sde_edm_denoiser} becomes
\begin{align}
\lbleqn{connection_to_dds_ddpm_1}
    \d\xbm_t=\left[\frac{1}{t}\xbm_t-\frac{1}{t}D_\theta\left(\xbm_t ; \sqrt{t}\right)\right]\d t + \d\bar{\wbm}_t
\end{align}
Applying the VP preconditioning \refeqn{vp_preconditioning} to \refeqn{connection_to_dds_ddpm_1}, we obtain
\begin{align}
\lbleqn{connection_to_dds_ddpm_2}
    \d\xbm_t=\left[\frac{1}{\sqrt{t}}F_\theta\left(\frac{\xbm_t}{\sqrt{t+1}} ; 999\sigma^{-1}_\mathsf{VP}(\sqrt{t})\right)\right]\d t + \d\bar{\wbm}_t.
\end{align}
We can then rescale the time range from $[0, 1]$ to  $[0, 1000]$, discretize \refeqn{connection_to_dds_ddpm_2} backward in time over the time steps $\{\tau_t\}$ from \citep{xu2024provably}, and apply the exponential integrator \citep{zhang2023fast} to the drift term, resulting in the following update rule:
\begin{align}
\nonumber
    \widehat{\xbm}_{t-1}=\widehat{\xbm}_t - 2(\sqrt{\tau_t}-\sqrt{\tau_{t-1}})F_\theta\left(\frac{\widehat{\xbm}_t}{\sqrt{\tau_t+1}} ; \sigma^{-1}_\mathsf{VP}(\sqrt{\tau_t})\right) + \sqrt{\tau_t - \tau_{t-1}}\epsbm \quad \text{where} \,\, \epsbm \sim \Ncal(\bm{0}, \Ibm)
\end{align}
Based on the definition $\tau_t:=\bar{\alpha}_t^{-1}-1=\sigma_\mathsf{VP}(t)^2$ in \texttt{DDS-DDPM}, we get
\begin{align}
\lbleqn{connection_to_dds_ddpm_4}
    \widehat{\xbm}_{t-1}=\widehat{\xbm}_t - 2(\sqrt{\tau_t}-\sqrt{\tau_{t-1}})F_\theta\left(\sqrt{\bar{\alpha}_t}\widehat{\xbm}_t ; t\right) + \sqrt{\tau_t - \tau_{t-1}}\epsbm \quad \text{where} \,\, \epsbm \sim \Ncal(\bm{0}, \Ibm)
\end{align}
This is exactly the update rule of \texttt{DDS-DDPM} with $F_\theta(\,\cdot\,; t)$ denoting the noise estimate $\widehat{\epsilon}_t(\cdot)$ of DDPM. 
One can also verify that the initialization in \texttt{DDS-DDPM} is equivalent to ours by checking that $\bar{\alpha}_t \geq \frac{1}{\eta^2+1}$ is equivalent to $\tau_t = \sigma_\mathsf{VP}(t)^2 \leq \eta^2$ where $\eta \equiv \rho$ is the assumed noise level in \refeqn{denoising}.
As one can see, \texttt{DDS-DDPM} is equivalent to our prior step by choosing the VP-preconditioning, VE reverse diffusion, and a particular integration scheme.
In fact, our prior step allows for more general definitions of diffusion processes and includes both the ODE and SDE solvers.

\subsection{Others}
\lblapp{others}

\paragraph{Annealing schedule for $\rho$} 
In this work, we considered an exponential annealing schedule for the coupling strength $\rho$. 
We note that the schedule can be more general than exponential decay and we leave the investigation of other decay schedules in future work.
Specifically, we specified a starting level $\rho_0$, decay rate $\alpha$, and a minimum value $\rho_\text{min}$. Then we set
\begin{align*}
    \rho_k = \max(\alpha^k \rho_0, \rho_\text{min})
\end{align*}
for $k = 0, \cdots, K-1$. \reftab{hyperparameter_annealing} summarizes the annealing hyperparameters that we used for all the inverse problems considered in this work.
\begin{table*}[ht]
\scriptsize
\newcolumntype{D}{>{\centering\arraybackslash}p{80pt}}
\newcolumntype{C}{>{\centering\arraybackslash}p{70pt}}
\definecolor{LightCyan}{rgb}{0.88,1,1}
\centering
\caption{List of hyperparameters for the annealing schedule of $\rho$ in PnP-DM}
\begin{tabularx}{320pt}{DCCC}
\toprule
Inverse problem & Starting level ($\rho_0$) & Minimum level ($\rho_\text{min}$) & Decay rate ($\alpha$) \\
\midrule
Synthetic prior experiments & 0.03 & 0.03 & 1 \\
Gaussian deblur & 10 & 0.3 & 0.9 \\
Motion deblur & 10 & 0.3 & 0.9 \\
Super-resolution & 10 & 0.3 & 0.9 \\
Coded diffraction patterns & 10 & 0.1 & 0.9 \\
Fourier phase retrieval & 10 & 0.1 & 0.9 \\
Black hole imaging & 10 & 0.02 & 0.93 \\
\bottomrule
\end{tabularx}
\lbltab{hyperparameter_annealing}
\vspace{-10pt}
\end{table*}

\paragraph{Initialization} 
For the linear inverse problems, we used the zero initialization, i.e. $\xbm^{(0)}=\bm{0}\in\R^n$.
For the CDP and Fourier phase retrieval problems, we used the Gaussian initialization, i.e. $\xbm^{(0)} \sim \Ncal(\bm{0}, \Ibm)$.
For black hole imaging experiments, we used the uniform random initialization between 0 and 1 for each pixel.
We found that PnP-DM, as an MCMC algorithm, is insensitive to the initialization. 
Except for the black hole experiments where we found the negative values would cause problems, any reasonable initialization would lead to comparable results.
This observation corroborates our convergence result, which holds for any initialization $\nuzeroX$.

\paragraph{Number of iterations}
We ran 500 iterations for the synthetic prior experiments, 200 iterations for the black hole experiments, and 100 for all other experiments. 
The numbers were chosen so that the algorithm was fully converged. 

\paragraph{Sample collection}
To collect multiple samples using our method, there are two main approaches:
(1) Run a single Markov chain and collect samples after a certain number of iterations, known as the burn-in period, to ensure the chain has converged.
(2) Run several independent Markov chains and collect one sample from each chain after convergence.
The first approach is more efficient, but the collected samples are not entirely independent and thus may have a small effective sample size. 
The second approach ensures all samples are fully independent but takes longer to run.
In our experiments, we used the first approach for all tests involving $256 \times 256$ images to enhance efficiency. 
Specifically, we set the burn-in period to 40 iterations and collected 20 random samples from the remaining 60 iterations (one every 3 iterations). 
For other experiments, due to the smaller image sizes, we employed the second approach to obtain fully independent samples.

\paragraph{Compute} All experiments were performed on NVIDIA RTX A6000 and A100 GPUs.
The runtime per image depends on several factors, such as the choice of GPU, the total number of iterations and the coupling strength schedule $\{\rho_k\}$ (as it takes more network evaluations for larger $\rho$ for our EDM-based denoiser). 
In our actual experiments, we ran each image for at least 100 iterations to ensure convergence, which took around 1 minute for a single Markov chain.
Here we present a comparison of computational efficiency with the major baselines on a linear super-resolution and a nonlinear coded diffraction patterns problem in \reftab{runtime}.
The clock time in seconds and number of function evaluations (NFE) are calculated for each method to measure its computational efficiency. 
All hyperparameters are kept the same for each method as those used for \reftab{linear} and \reftab{nonlinear} in the manuscript.
As expected, DM-based approaches (DDRM \& DPS) generally yield shorter runtimes due to their lower NFEs. Nevertheless, our PnP-DM method significantly outperforms these methods while achieving comparable runtimes with DPS ($\approx 1.5\times$), despite its larger NFEs ($\approx 3\times$). This is primarily due to two factors: 1) PnP-DM avoids running the full diffusion process by adapting the starting noise level to $\rho_k$ at each iteration, and 2) the runtime is further reduced by using an annealing schedule of $\rho_k$.
We also note that the runtime reported for DDRM and DPS below is the time it takes to generate one sample.
For the linear inverse problem experiments, where we generated 20 samples for each sampling method, PnP-DM was faster than DPS because we took 20 samples that PnP-DM generated along one Markov chain of batch size 1 (hence same runtime as below, around 50 seconds) but DPS requires running a diffusion process with batch size 20, which was significantly slower (around 330 seconds).

\begin{table*}[ht]
\scriptsize
\newcolumntype{D}{>{\centering\arraybackslash}p{70pt}}
\newcolumntype{C}{>{\centering\arraybackslash}p{30pt}}
\newcolumntype{B}{>{\centering\arraybackslash}p{60pt}}
\definecolor{LightCyan}{rgb}{0.88,1,1}
\centering
\caption{Comparison of computational efficiency between PnP-DM and other baseline methods}
\begin{tabularx}{390pt}{DBCCCCB}
\toprule
Inverse problem & Metric & DDRM & DPS & PnP-SGS & DPnP & PnP-DM (ours)\\
\midrule
\multirow{2}{*}{Super-resolution} & Clock time (s) & 0.4 & 39 & 20 & 322 & 55 \\
 & NFE & 20 & 1000 & 1030 & 18372 & 3032 \\
\multirow{2}{*}{Coded diffraction patterns} & Clock time (s) & -- & 37 & 54 & 261 & 50 \\
 & NFE & -- & 1000 & 2572 & 14596 & 2482 \\
\bottomrule
\end{tabularx}
\lbltab{runtime}
\vspace{-10pt}
\end{table*}

\section{Implementation details of baseline methods}
\lblapp{baseline_details}

\paragraph{PnP-ADMM} We set the ADMM penalty parameter as 2 and ran for 500 iterations to ensure convergence. We used the pretrained DnCNN denoiser \citep{zhang2017beyond} available at the \texttt{deepinv} library\footnote{\href{https://github.com/deepinv/deepinv}{https://github.com/deepinv/deepinv} (BSD-3-Clause license)}.

\paragraph{DPIR} We followed the annealing schedule in \citep{zhang2021plugandplay} and ran for 40 iterations. We used the pretrained DRUNet denoiser \citep{zhang2017learning} available at the \texttt{deepinv} library.

\paragraph{DDRM} 
We ran all the experiments with the default parameters: $\eta_B=1.0$, $\eta=0.85$, and $20$ steps for the DDIM sampler \citep{song2021denoising}. 
For the Gaussian deblur problem, we used the SVD-based forward model implementation based on separable 1D convolution. 
We ran it with an additional batch dimension to collect multiple samples.

\paragraph{DPS} 
We followed the original paper to use a 1000-step DDPM sampler backbone.
For the linear inverse problems, we used the step size given in \citep{chung2023diffusion}, i.e. $\zeta^\prime = 1$. 
For the nonlinear inverse problems, we optimized the step size $\zeta^\prime$ by performing a grid search, which led to $\zeta^\prime = 3$ for CDP and Fourier phase retrieval and $\zeta^\prime = 0.001$ for black hole imaging.
For the synthetic prior experiments, we also optimized the step size and used $\zeta^\prime = 0.1$ for compressed sensing and $\zeta^\prime = 1$ for Gaussian deblur.
We ran it with an additional batch dimension to collect multiple samples.

\paragraph{PnP-SGS}
We performed a grid search for the coupling parameter $\rho$ and found that $\rho=0.1$ worked the best for all problems.
We followed the practice in \citep{coeurdoux2023plugandplay} to have a burn-in period of 20 iterations during which the reverse diffusion is early-stopped.
We ran the algorithm for 100 iterations in total and collect 20 samples in the 80 iterations after the burn-in period.

\paragraph{DPnP}
We implemented the \texttt{DDS-DDPM} sampler for the prior step. 
For fair comparison, we used the same annealing schedule for the coupling strength (denoted as $\eta_k$ in \citep{xu2024provably}) as PnP-DM.
We ran it for the same number of iterations for each inverse problem with the same way of collecting samples as our method.

\paragraph{HIO}
We set $\beta=0.7$ and applied both the non-negative constraint and the finite support constraint. 
To mitigate the instability of reconstruction depending on initialization, we first repeatedly ran the algorithm with 100 different random initializations and chose the reconstruction that has the best measurement fit.
Then we ran another 10,000 iterations with the chosen reconstruction to ensure convergence and report the metrics on the final reconstruction.

\section{Additional related works}
\lblapp{related_work}

\paragraph{Image reconstruction with plug-and-play priors} 
Plug-and-Play priors (PnP) \citep{venkatakrishnan2013plugandplay} is an algorithmic framework that leverages off-the-shelf denoisers for solving imaging inverse problems.
Recognizing the equivalence between the proximal operator and finding the \textit{maximum a posteriori (MAP)} solution to a denoising problem, PnP substitutes the proximal update in many optimization algorithms, such as ADMM \citep{chan2016plugandplay, ryu2019plugandplay} and HQS \citep{zhang2017learning, zhang2021plugandplay}, with generic denoising algorithms, particularly those based on deep learning \citep{meinhardt2017learning, zhang2017learning, zhang2021plugandplay}.
The PnP framework enjoys both convergence guarantees \citep{sun2019anonline, ryu2019plugandplay} and strong empirical performance \citep{wu2020simba, ahmad2020plugandplay} due to its compatibility with state-of-the-art learning-based denoising priors.
Recent works have also proposed learning-based PnP frameworks that have direction interpretations from an optimization perspective \citep{cohen2021it, fang2024whats}.
See \citep{kamilov2023plugandplay} for a comprehensive review on the theory and practice of PnP.

\paragraph{Posterior sampling with MCMC and learning priors}
Learning-based priors have also been considered in the Bayesian context, where one seeks to sample the posterior distribution defined under a learned prior.
An important technique is denoising score matching (DSM) \citep{vincent2011aconnection}, which connects image denoising with learning the score function of an image distribution.
Based on DSM, prior works have incorporated deep denoising priors into MCMC formulations, particularly focusing the Langevin Monte Carlo and its variants as they involve the score function of the target distribution \citep{kawar2021snips, jalal2021robust, laumont2022bayesian, sun2023provable, remy2022probablistic}.
Recently, methods based on SGS have also gained increasing popularity \citep{pereyra2023split, coeurdoux2023plugandplay, bouman2023generative, faye2024regularization, xu2024provably}. 
Unlike PnP methods based on optimization, these sampling methods possess the ability to generate diverse solutions and quantify the uncertainty of solution space.

\paragraph{Solving inverse problems with diffusion models}
The remarkable performance of diffusion models \citep{ho2020denoising, song2021scorebased} on modelling image distributions makes them desirable choices as images priors for solving inverse problems.
One popular approach is to leverage a pretrained unconditional model and modify the reverse diffusion process during inference to enforce data consistency \citep{wang2022zero, chung2022score, chung2023diffusion, song2023pseudoinverseguided, kawar2022denoising, song2023lossguided, song2022solving, boys2023tweedie, liu2023dolce, zhu2023denoising, song2024solving}.
Despite of the promising performance of these methods, they usually involve approximations and empirically driven designs that are hard to justify theoretically and may lead to inconsistent sample distributions.
Another line of work learns task-specific models, which achieves higher accuracy at the cost of re-training models for new problems \citep{batzolis2021conditional, saharia2022image, liu2023dolce}.
Methods based on Particle Filtering and Sequential Monte Carlo are also considered to ensure asymptotic consistency \citep{cardoso2023monte, dou2024diffusion, wu2023practical}.
Diffusion models have also been considered as a prior for variational inference \citep{feng2023score, mardani2024a} and pluy-and-play image reconstruction \citep{graikos2022diffusion, martin2024pnpflow}.

\section{Additional experimental results}

\vspace{-5pt}

\subsection{Synthetic prior experiment}
\lblapp{additional_results_synthetic}
\begin{wrapfigure}{r}{0.5\textwidth}
  \vspace{-10pt}
  \begin{center}
    \includegraphics[width=0.5\textwidth]{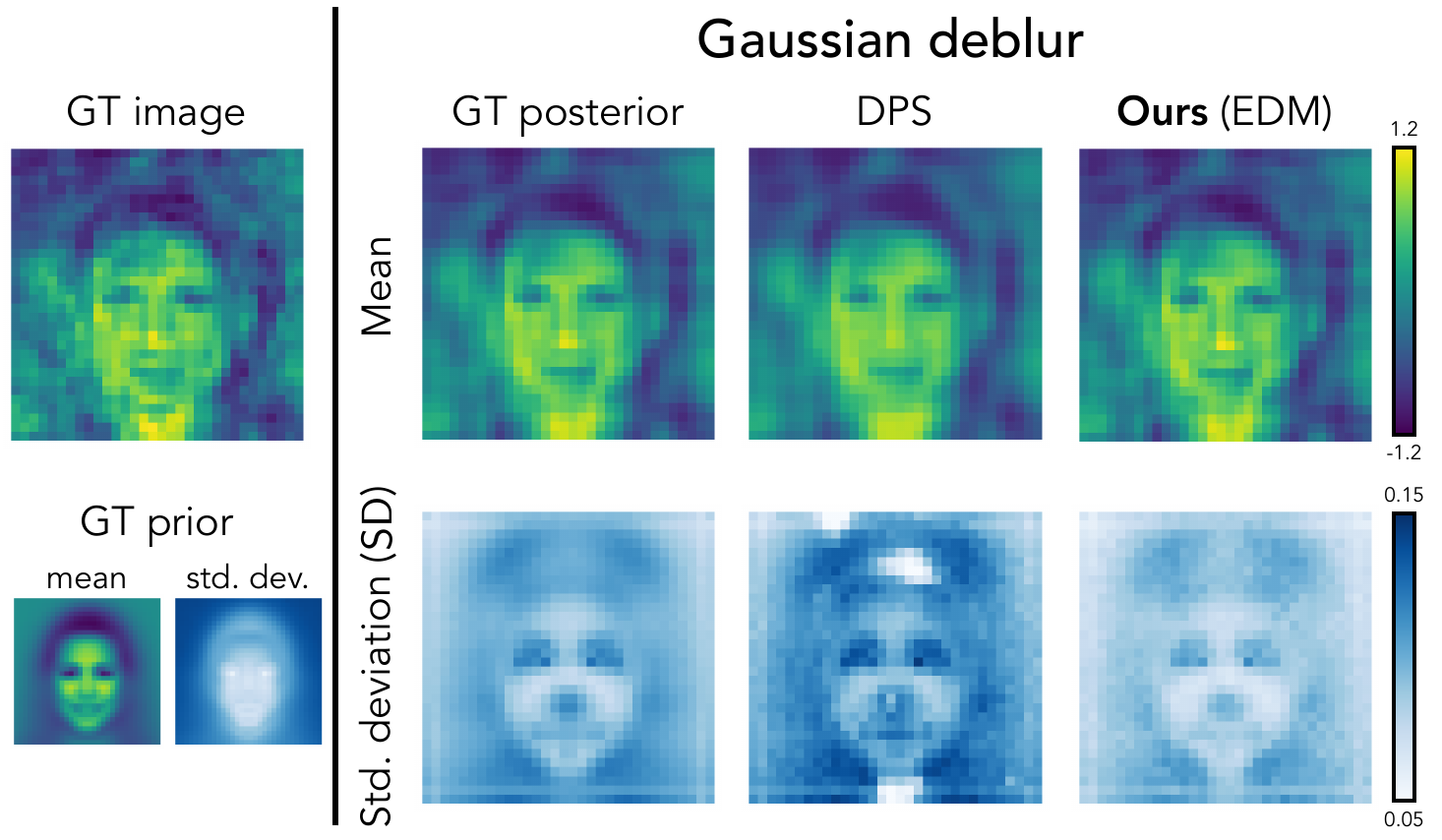}
  \end{center}
  \vspace{-10pt}
  \caption{Comparison of our method and DPS \citep{chung2023diffusion} on estimating the posterior distribution of a Gaussian deblurring problem under a Gaussian prior. While the mean estimations of the two methods are of roughly the same quality, our approach provides a much more accurate estimation of the posterior per-pixel standard deviation than DPS.}
  \vspace{-30pt}
  \lblfig{gaussian_gblur}
\end{wrapfigure}
In addition to the compressed sensing experiment presented in the main paper, we show another comparison on a Gaussian deblurring problem in \reffig{gaussian_gblur}.
Here, the linear forward model $\Abm\in\R^{m\times n}$ is a 2D convolution matrix with a Gaussian blur kernel of size $7\times 7$ and standard deviation $3.0$.
Similar to the compressed sensing experiment, both methods yield accurate reconstructions of the mean.
However, in terms of the posterior standard deviation, DPS exhibits a notable difference from the ground truth, whereas our method achieves a significantly more accurate result.

\subsection{Linear inverse problems}
\lblapp{additional_results_linear}

We provide visual comparisons for the Gaussian deblur and super-resolution problems in \reffig{linear_examples_gblur_sr}.

\begin{figure*}[ht]
    \centering
    \includegraphics[width=\textwidth]{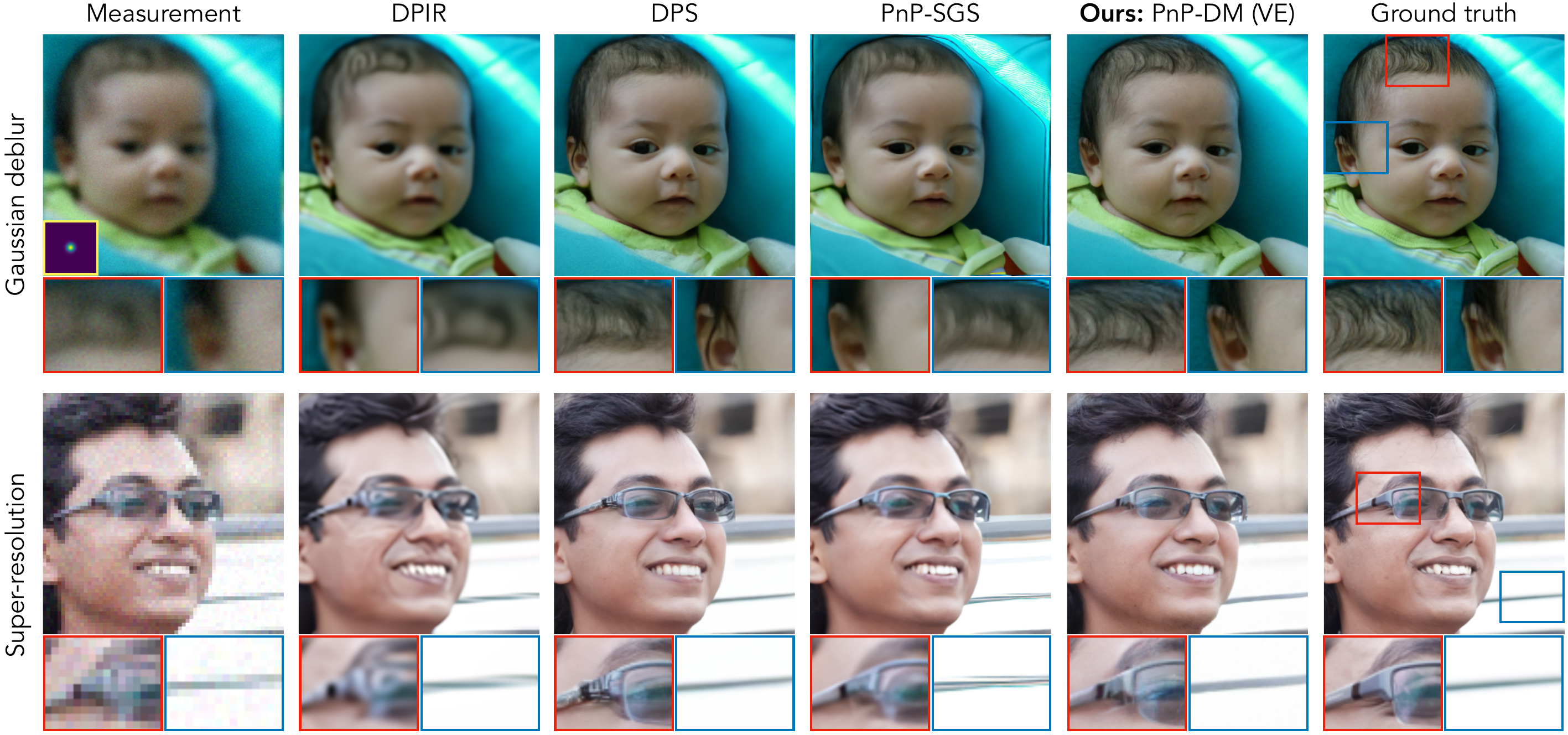}
    \vspace{-20pt}
    \caption{Visual comparison between our method and baselines on solving the Gaussian deblurring and super-resolution problems with i.i.d. Gaussian noise ($\sigma_\ybm=0.05$). We visualize one sample generated by each algorithm.}
    \vspace{-5pt}
    \lblfig{linear_examples_gblur_sr}
\end{figure*}

Additional visual examples are provided in \reffig{additional_visual_examples_gblur} (Gaussian deblurring), \reffig{additional_visual_examples_mblur} (motion deblurring), and \reffig{additional_visual_examples_sr} (super-resolution).

\begin{figure*}[ht]
    \centering
    \includegraphics[width=\textwidth]{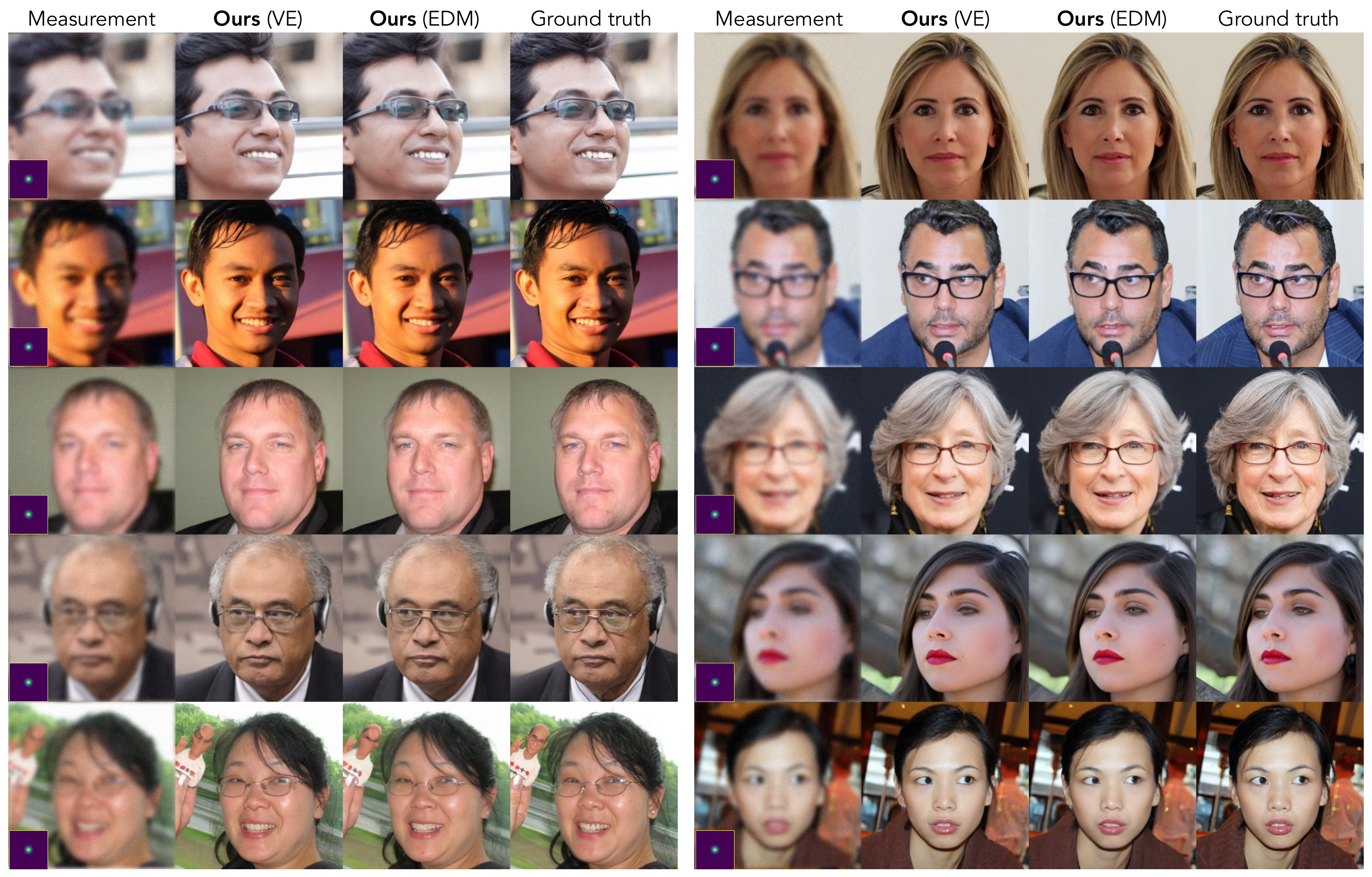}
    \vspace{-20pt}
    \caption{Additional visual examples for the Gaussian deblurring problem.}
    \vspace{-10pt}
    \lblfig{additional_visual_examples_gblur}
\end{figure*}

\begin{figure*}[ht]
    \centering
    \includegraphics[width=\textwidth]{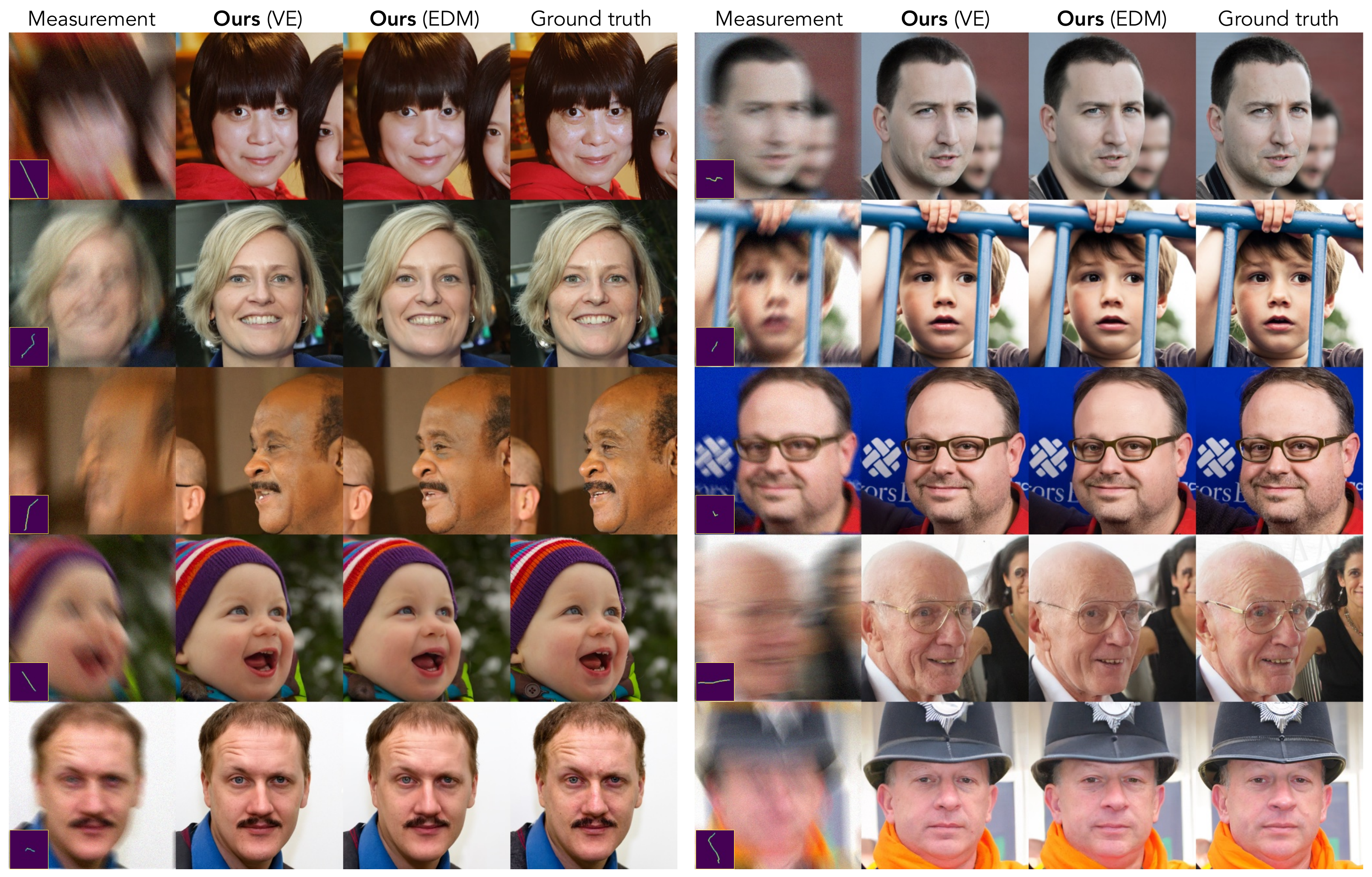}
    \vspace{-20pt}
    \caption{Additional visual examples for the motion deblurring problem.}
    \vspace{-10pt}
    \lblfig{additional_visual_examples_mblur}
\end{figure*}

\begin{figure*}[ht]
    \centering
    \includegraphics[width=\textwidth]{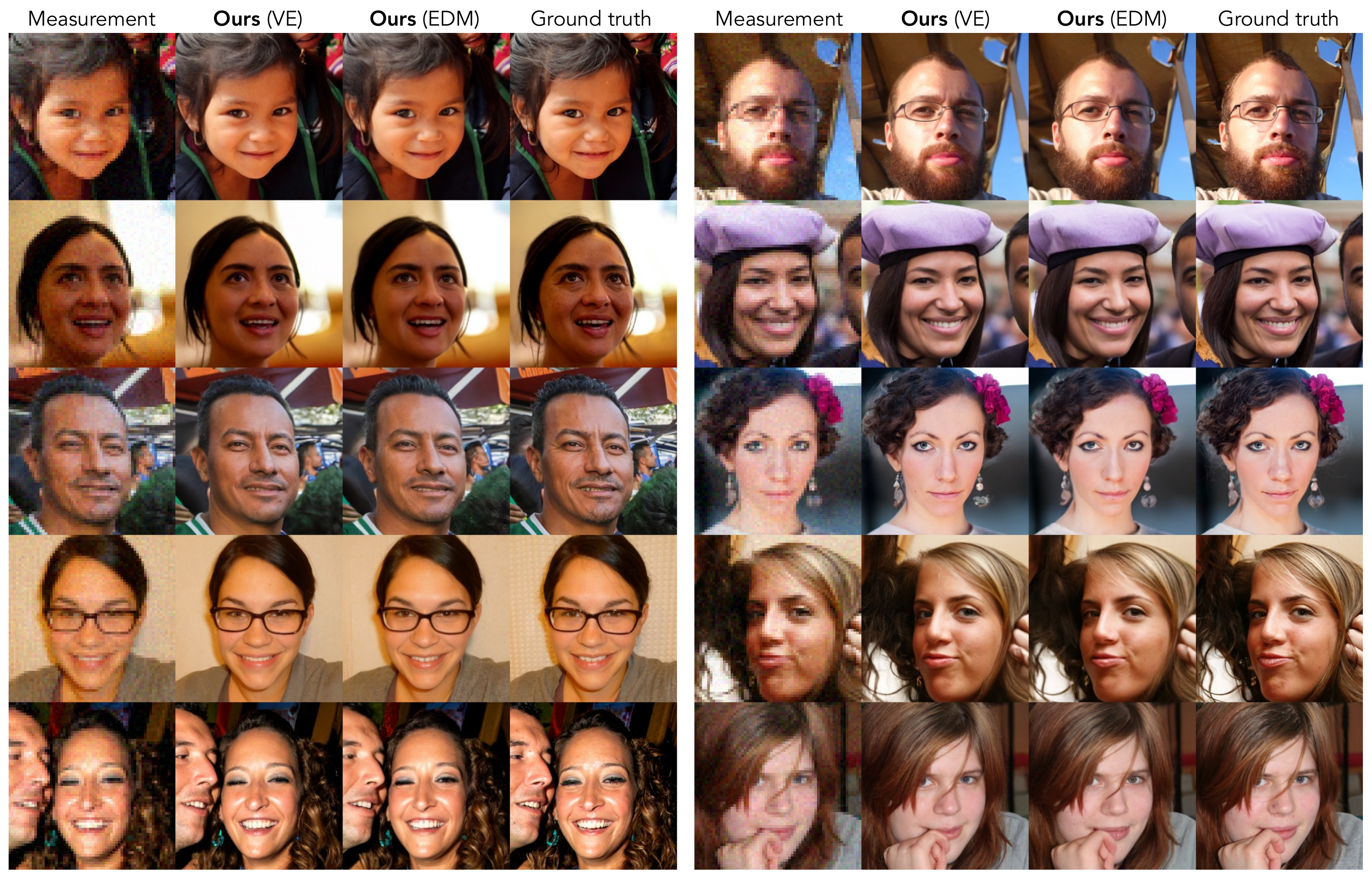}
    \vspace{-20pt}
    \caption{Additional visual examples for the 4$\times$ super-resolution problem.}
    \vspace{-10pt}
    \lblfig{additional_visual_examples_sr}
\end{figure*}

\subsection{Nonlinear inverse problems}
\lblapp{additional_results_nonlinear}

We provide visual comparisons for the CDP reconstruction problem in \reffig{nonlinear_examples_cdp}, where we visualize one sample for each method.
As shown by the red zoom-in boxes, PnP-DM can recover fine-grained features such as the hair threads that are missing in the reconstructions by the baselines. 
Additional reconstruction examples are given in \reffig{additional_visual_examples_cdp} for the CDP reconstruction problem.

\begin{figure*}[ht]
    \centering
    \includegraphics[width=\textwidth]{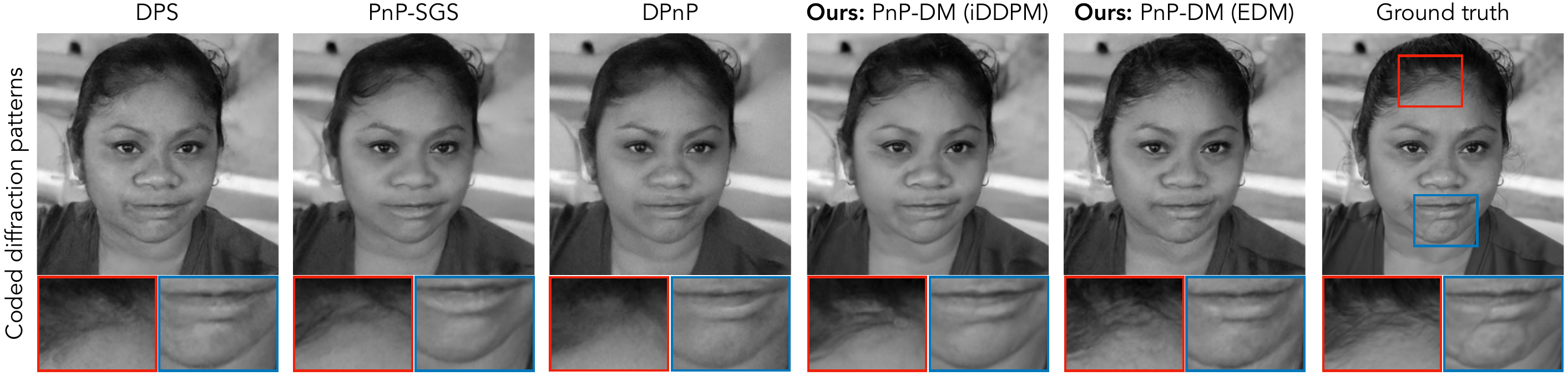}
    \vspace{-20pt}
    \caption{Visual comparison between our method and baselines on solving the coded diffraction pattern (CDP) reconstruction problems with i.i.d. Gaussian noise ($\sigma_\ybm=0.05$). We visualize one sample generated by each algorithm.}
    \lblfig{nonlinear_examples_cdp}
\end{figure*}

\begin{figure*}[ht]
    \centering
    \includegraphics[width=\textwidth]{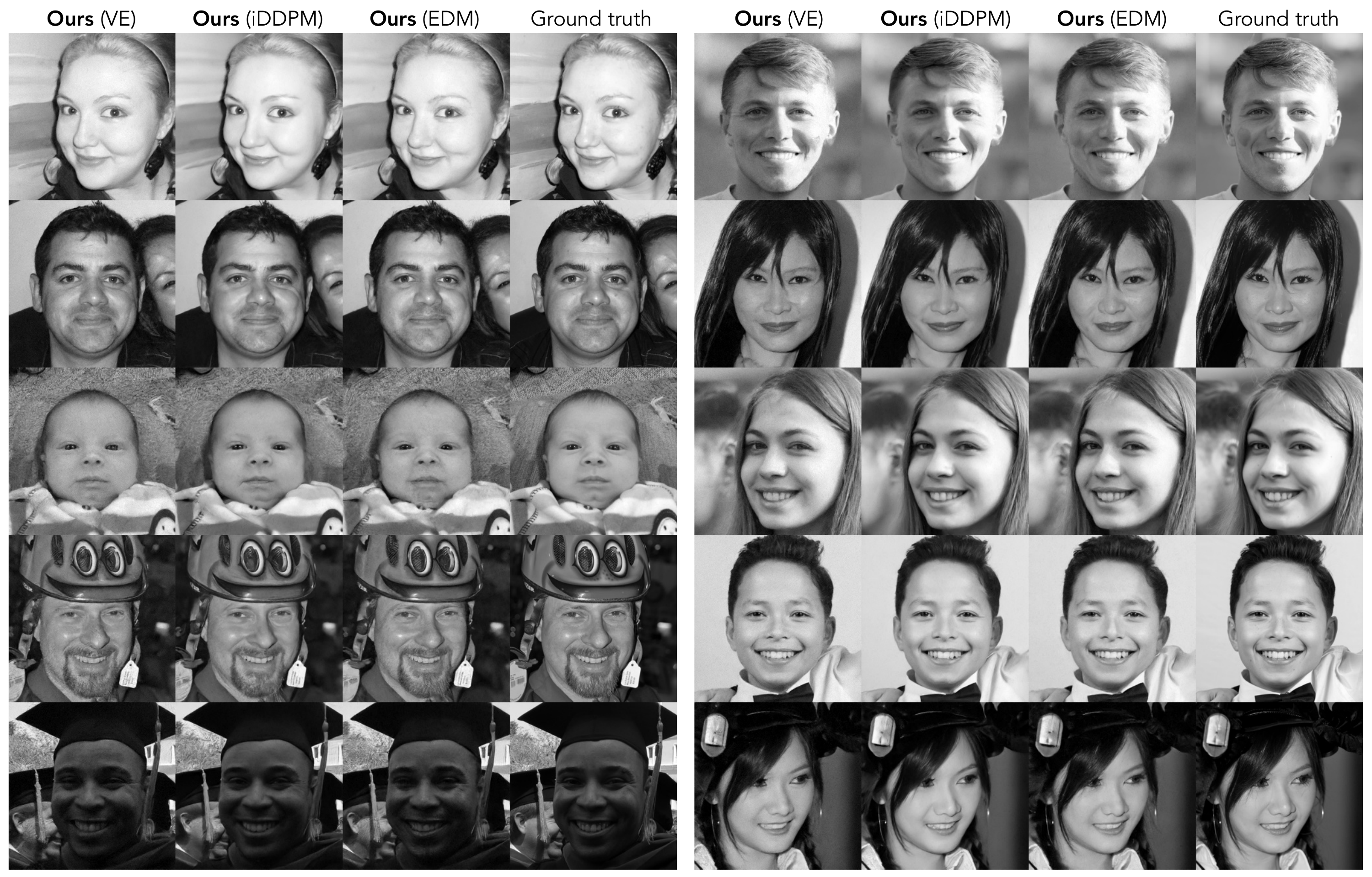}
    \vspace{-20pt}
    \caption{Additional visual examples for the Fourier phase retrieval problem.}
    \vspace{-10pt}
    \lblfig{additional_visual_examples_cdp}
\end{figure*}

We then show some additional reconstruction examples with comparison to DPS in \reffig{additional_visual_examples_prn}.
For each method, we visualize the best reconstruction out of four runs for each test image according to the PSNR value.
While DPS failed on around half of the test images, our proposed method provided high-fidelity reconstructions on almost all test images.
This comparison highlights the better robustness of our method over DPS.

\begin{figure*}[ht]
    \centering
    \includegraphics[width=\textwidth]{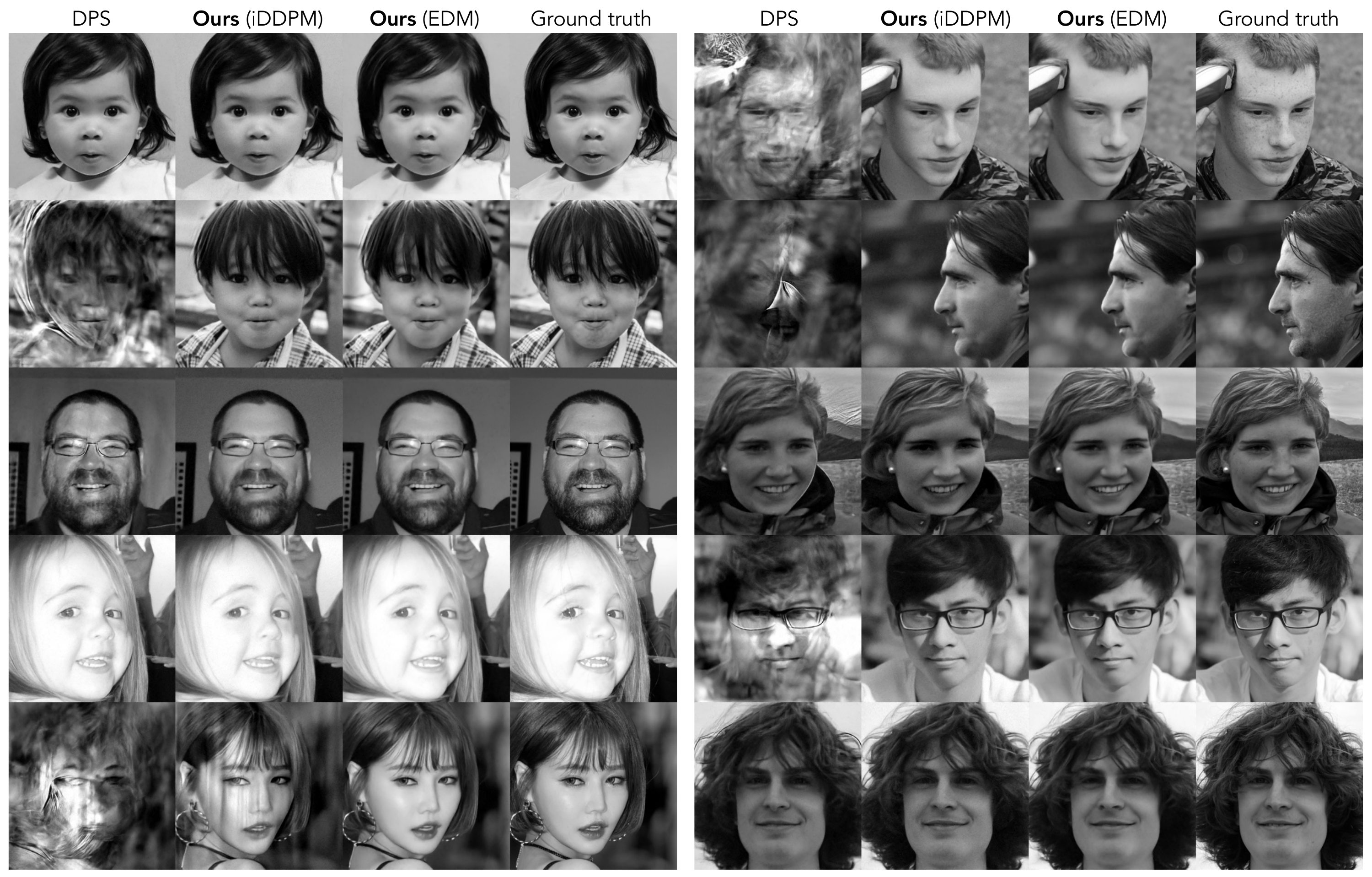}
    \vspace{-20pt}
    \caption{Additional visual examples for the Fourier phase retrieval problem.}
    \vspace{-10pt}
    \lblfig{additional_visual_examples_prn}
\end{figure*}

\subsection{Black hole imaging}
\lblapp{additional_results_black_hole}

Finally, we present visual examples from the black hole imaging experiments. 
Samples generated by PnP-DM (EDM) and DPS using the simulated data are shown in \reffig{additional_blackhole_examples_sim}, with the data mismatch metric labeled at the top right corner of each sample. 
Consistent with the results in \reffig{blackhole}, DPS can only capture one of the two posterior modes. 
DPS samples from Mode 2 and Mode 3 significantly deviate from the measurements and lack the expected black hole structure. 
In contrast, PnP-DM successfully samples both posterior modes and consistently produces samples that fit the measurements well.
Additionally, \reffig{additional_blackhole_examples_real} presents more samples obtained by applying PnP-DM to the real M87 black hole data. 
The generated samples are not only diverse but also fit the measurements well with data mismatch values around 2.
These samples exhibit a ring diameter consistent with the official EHT reconstruction in \reffig{teaser} and share a common bright spot location at the lower half of the ring. 

\begin{figure*}[ht]
    \centering
    \includegraphics[width=\textwidth]{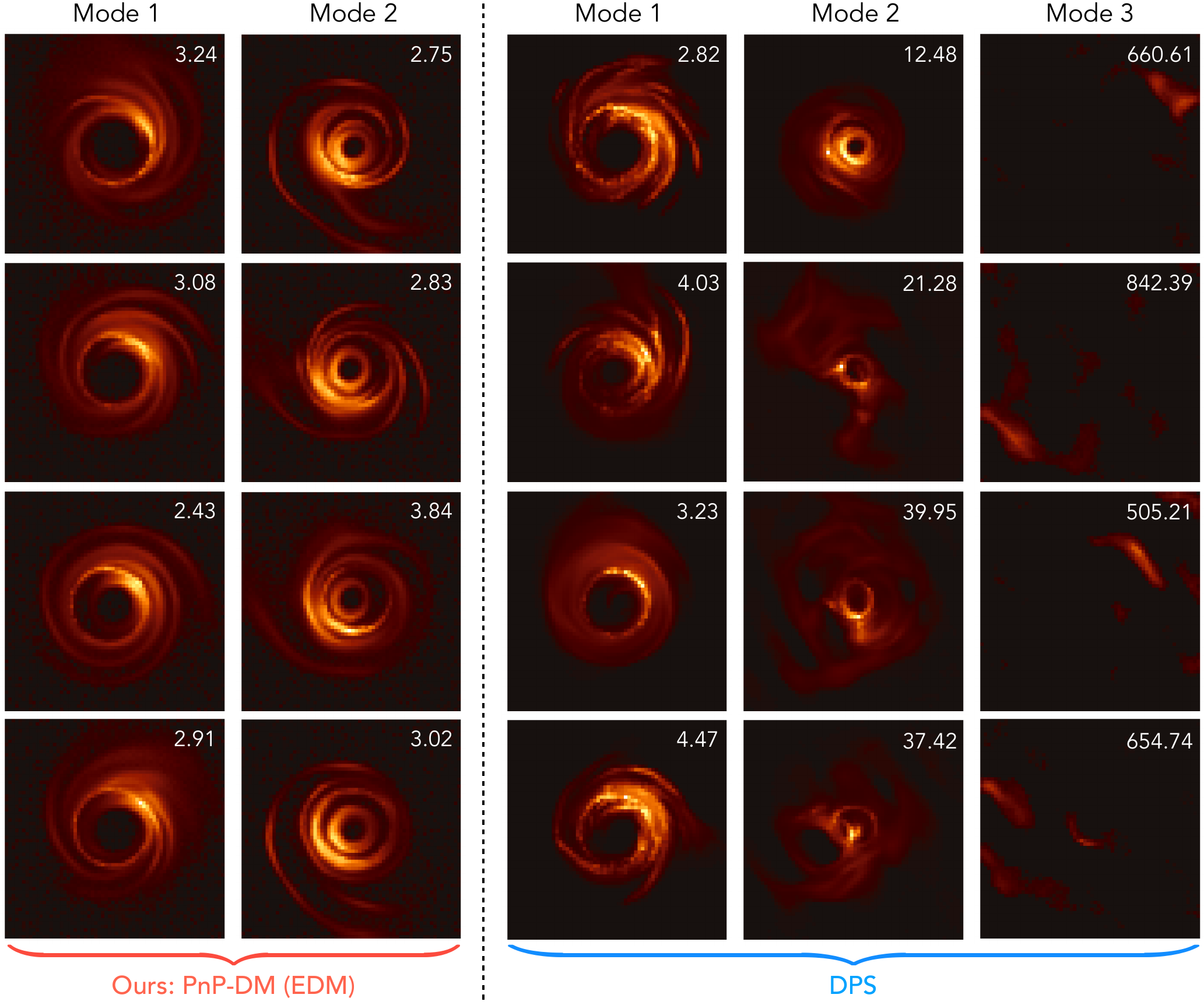}
    \vspace{-20pt}
    \caption{Additional visual examples given by PnP-DM and DPS using the simulated black hole data.}
    \vspace{-10pt}
    \lblfig{additional_blackhole_examples_sim}
\end{figure*}

\begin{figure*}[ht]
    \centering
    \includegraphics[width=\textwidth]{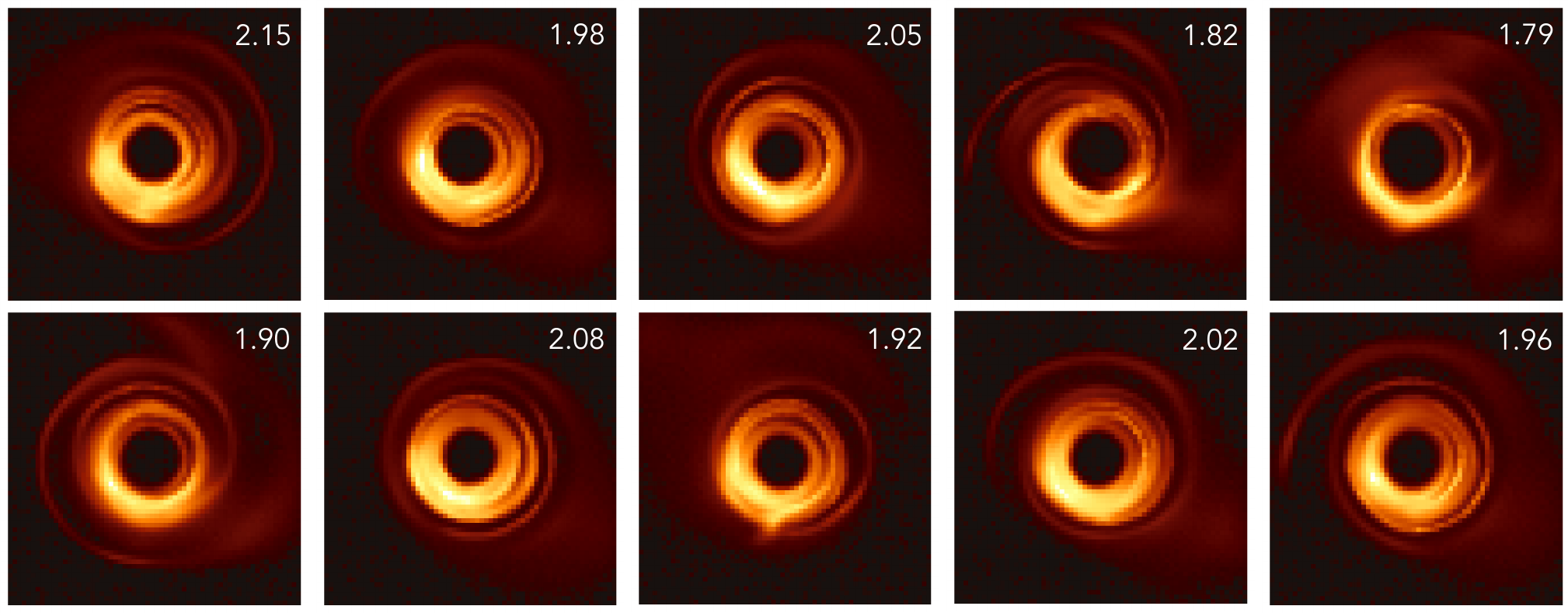}
    \vspace{-20pt}
    \caption{Additional visual examples given by PnP-DM using the real M87 black hole data.}
    \vspace{-10pt}
    \lblfig{additional_blackhole_examples_real}
\end{figure*}

\subsection{Further analysis}
\paragraph{Sensitivity analysis on the annealing schedule $\{\rho_k\}$}
In \reffig{additional_sensitivity_analysis}, we present a sensitivity analysis on the annealing schedule $\{\rho_k\}$.
In particular, we show the PSNR curves of $\xbm_k$ with different exponential decay rates $\alpha$ (left) and minimum coupling levels $\rho_\text{min}$ (right) for one linear (super-resolution) and one nonlinear (coded diffraction patterns) problem.
We have the following conclusions based on the results.
First, different decay rates lead to different rates of convergence, which corroborates with our theoretical insights that $\rho$ plays the same role as the step size.
The final level of PSNR is not sensitive to different decay rates, as all curves converge to the same level.
Second, as $\rho_\text{min}$ decreases, the final PSNR becomes higher.
This is as expected because the stationary distribution of the $\xbm_k$, $\piX$ should converge to the true target posterior, $p(\xbm|\ybm)$, as $\rho$ decreases.

\begin{figure*}[ht]
    \centering
    \includegraphics[width=\textwidth]{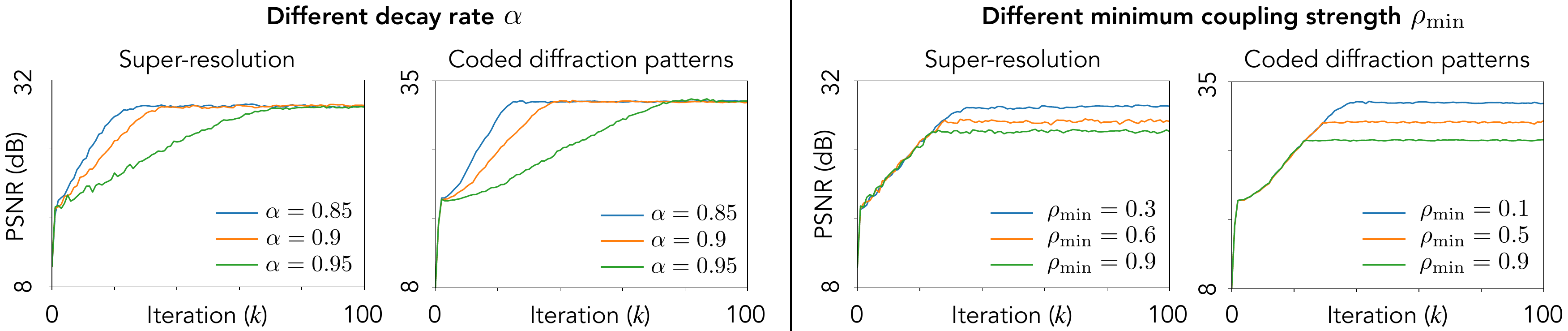}
    \vspace{-15pt}
    \caption{Sensitivity analysis on the annealing schedule $\rho_k$ with different decay rates $\alpha$ (left) and minimum coupling strength $\rho_{\min}$ (right) for a linear (super-resolution) and a nonlinear (coded diffraction patterns) inverse problem. Recall from \refapp{others} that $\rho_k := \max(\alpha^k \rho_0, \rho_{\min})$, where we set $\rho_0 = 10$ for this experiment.}
    \vspace{-10pt}
    \lblfig{additional_sensitivity_analysis}
\end{figure*}

\paragraph{Convergence curves with intermediate visual examples}
In \reffig{additional_evolution}, we show some visual examples of intermediate $\xbm_k$ and $\zbm_k$ iterates (left) and convergence plots of PSNR, SSIM, and LPIPS for $\xbm_k$ (right) on the super-resolution problem. 
As $\rho_k$ decreases, $\xbm_k$ becomes closer to the ground truth and $\zbm_k$ gets less noisy. 
Both the visual quality and metric curves stabilize after the minimum coupling strength is achieved.
Despite being run for 100 iterations in total, our method generates good images in around 40 iterations, which is around 30 seconds and 1600 NFEs.

\begin{figure*}[ht]
    \centering
    \includegraphics[width=\textwidth]{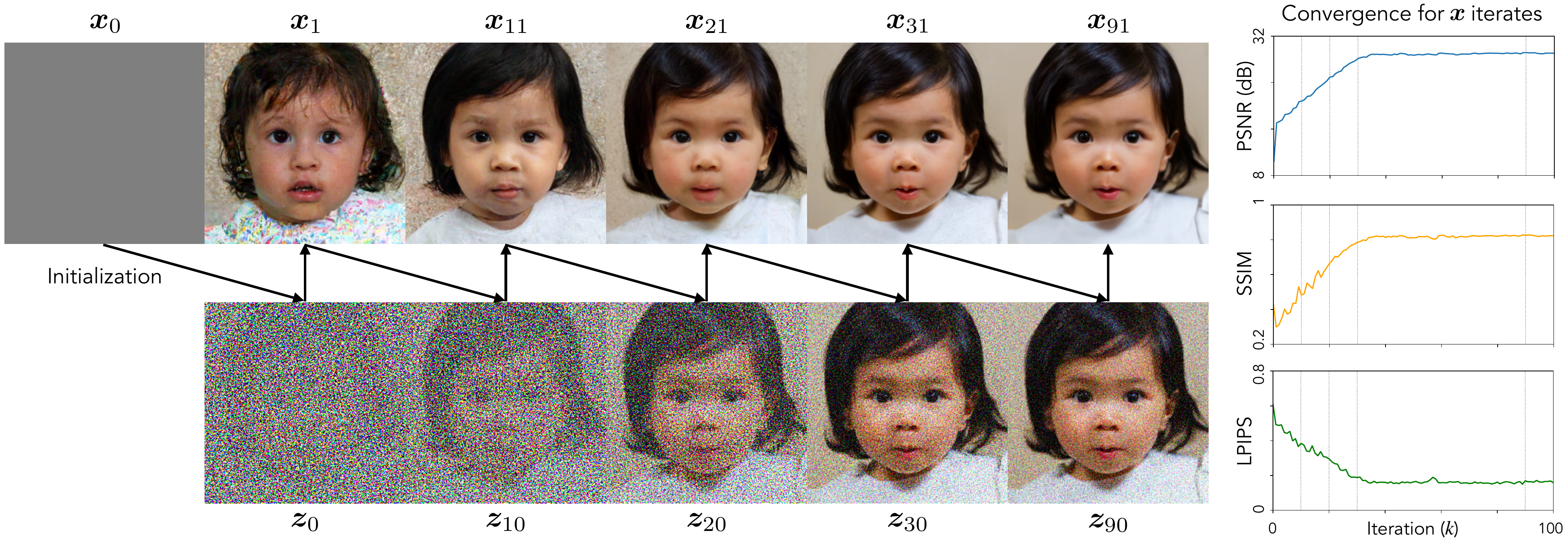}
    \vspace{-15pt}
    \caption{Visual examples of intermediate $\xbm_k$ and $\zbm_k$ iterates (left) and convergence plots of PSNR, SSIM, and LPIPS for $\xbm_k$ iterates (right) on the super-resolution problem. The vertical dashed lines show the iterations at which the $\xbm_k$ and $\zbm_k$ iterates are visualized.}
    \vspace{-10pt}
    \lblfig{additional_evolution}
\end{figure*}

\section{Licenses}
\lblapp{licenses}
We list the licenses of all the assets we used in this paper:
\begin{itemize}
    \item Data
    \begin{itemize}
        \item CelebA \citep{liu2015faceattributes}: Unknown
        \item FFHQ \citep{karras2018astylebased}: Creative Commons BY-NC-SA 4.0 license
        \item Simulated and real black hole data: Unknown
    \end{itemize}
    \item Code
    \begin{itemize}
        \item The license for each code repository that we have used is listed in the footnote after the repository link.
    \end{itemize}
    \item Pretrained model
    \begin{itemize}
        \item FFHQ model by Choi et al. \citep{choi2022perception}: MIT license
    \end{itemize}
\end{itemize}

\end{document}